\newdimen\defpicwidth
\newdimen\defepswidth
\def\SKparam#1#2{}
\newcommand{\R}{\mathbb{R}}
\newcommand{\IF}{\boldsymbol{1}}    
\long\def\@makecaption#1#2{%
  \vskip\abovecaptionskip
  \sbox\@tempboxa{#1. #2}%
  \ifdim \wd\@tempboxa >\hsize
    {\centerline{\renewcommand{\baselinestretch}{0.8}%
\small\normalsize\parbox{0.8\textwidth}{#1. #2}}}\par
  \else
    \global \@minipagefalse
    \hbox to\hsize{\hfil\box\@tempboxa\hfil}%
  \fi
  \vskip\belowcaptionskip}
\newcommand{\rank}{\mathop{\rm{rank}}}
\providecommand{\pdf}[1]{}
\def\definitionname{DEFINITION}
\def\remarkname{REMARK}
\def\summarycontname{Summary (continued)}
\def\summaryname{Summary}
  \newcommand{\sfb}{}
\newcommand{\E}{\mathop{\mbox{\sf E}}}     
\newcommand{\N}{\mathop{\mbox{\sf N}}}
\renewcommand{\P}{\mathrm{P}}            
\def\defeq{\stackrel{\mathrm{def}}{=}}  
\def\remarkname{REMARK}
\newtheorem{theorem}{THEOREM}[section]
\newtheorem{corollary}{COROLLARY}[section]
\newtheorem{lemma}{LEMMA}[section]
\theoremstyle{definition}
\newtheorem{remark}{\remarkname}[section]
\theoremstyle{plain}
\newtheorem{definition}{\definitionname}[section]
\newcommand{\vps}{{\varepsilon}}
\newtheorem{assumption}{Assumption}[section]
\numberwithin{figure}{section}
\numberwithin{table}{section}
\def\defeq{\stackrel{\mathrm{def}}{=}}  
\newcommand{\bigO}{\mathcal{O}}
\newcommand{\smallO}{\mbox{\tiny $\mathcal{O}$}}
\def\mG{\mathcal{G}}
\def\mV{\mathcal{V}}
\def\mF{\mathcal{F}}
\def\ohH{\overline{\mathcal{H}}}
\def\tg{\tilde{g}}
\def\oH{\overline{H}}
\def\beq{\begin{equation}}
\def\eeq{\end{equation}}
\def\th{\tilde{h}}
\def\vec{\operatorname{vec}}
\def\exp{\operatorname{exp}}
\theoremstyle{definition}
\title{\Large{\textbf{Uniform Inference on High-dimensional Spatial Panel Networks}\thanks{Corresponding author: Chen Huang, chen.huang@econ.au.dk.}}}
\author[a]{Victor Chernozhukov}
\affil[a]{\small{Department of Economics and Operations Research Center, MIT}}
\author[b]{Chen Huang}
\affil[b]{\small{Department of Economics and Business Economics, Aarhus University}}
\author[c]{Weining Wang}
\affil[c]{\small{School of Economics, University of Bristol}}
\begin{document}
	\maketitle
	\begin{abstract}
	\noindent 
    We propose employing a high-dimensional generalized method of moments (GMM) estimator, regularized for dimension reduction and subsequently debiased to correct for shrinkage bias (referred to as a debiased-regularized estimator), for inference on large-scale spatial panel networks. In particular, the network structure, which incorporates a flexible sparse deviation that can be regarded either as a latent component or as a misspecification of a predetermined adjacency matrix, is estimated using a debiased machine learning approach. The theoretical analysis establishes the consistency and asymptotic normality of our proposed estimator, taking into account general temporal and spatial dependencies inherent in the data-generating processes. 
    A primary contribution of our study is the development of a uniform inference theory, which enables hypothesis testing on the parameters of interest, including zero or non-zero elements in the network structure. Additionally, the asymptotic properties of the estimator are derived for both linear and nonlinear moments. Simulations demonstrate the superior performance of our proposed approach. Finally, we apply our methodology to investigate the spatial network effects of stock returns.
	\par
	\vspace{0.5cm}
	\noindent {\em Keywords}: debiased machine learning, GMM, high-dimensional time series, network analysis, spatial panel data
\end{abstract}

\section{Introduction}

Network analysis has gained significant interest in recent years. In particular, measuring connectedness within a complex system has become a central task in learning networks. Various forms of regression, where the dependent variables are affected by the outcomes and characteristics of network members, have been formulated for that purpose. The established literature on social network analysis favors using a predetermined network structure, which is fully characterized by a specified adjacency matrix, to study peer effects in social networks; see, for example, \citet{lee2007identification,bramoulle2009identification,lee2010specification,yang2017identification,zhu2020multivariate}. As for spatial panel networks, \cite{kuersteiner2020dynamic} consider a class of GMM estimators for general dynamic panel models that allow for potential endogeneity and cross-sectional dependence. 
An alternative to imposing a known network structure is to estimate the adjacency matrix, provided that the structural parameters are already identified. Examples of related studies include \citet{blume2015linear,de2018recovering,lewbel2019social}. 

With the rise of big data availability, many applications are concerned with 
large-scale networks consisting of a large number of individuals. In particular, spatial panel data involving high-dimensional time series are observed in many financial and economic network analyses. This poses the challenge of estimating too many unknown parameters. To reduce the dimensionality, various machine learning methods based on sparsity and penalization are employed to shrink the parameters. \citet{manresa2013estimating} uses LASSO (Least Absolute Shrinkage and Selection Operator) to quantify the spillover effects in social networks, where the endogenous interactions are not taken into consideration. \citet{de2018recovering} apply Adaptive Elastic Net GMM to estimate the interaction model with important contributions to the identification of the structural parameters. 
\cite{ata2018latent} consider a reduced-form estimation with the innovative discovery of the algebraic results on how the sparsity of the structural parameters relates to that of the parameters in the reduced form. 
\citet{lam2014regularization} study the penalized estimation of the spatial weight matrix in a spatial lag model through adaptive LASSO and show the oracle properties of the sparse estimator. \citet{wang2024panel} develop a high-dimensional interactive fixed effects estimator that allows for a growing number of latent factors and apply it to peer-effects analysis in networks with sparse links. They demonstrate the consistency of the new estimator and the asymptotic normality of the post-selection estimator of the slope coefficients. In this paper, we also aim to conduct inference on the network structure.

Machine learning methods are notably effective in improving prediction performance. However, statistical inference may suffer from substantial bias due to omitted variables. Debiasing is necessary to construct high-quality point and interval estimates. Taking LASSO-type methodologies as example, \citet{lam2014regularization} establish the asymptotic normality of non-zero elements in the network structure. However, in practice, we often lack prior information about whether parameters are truly non-zero, necessitating a uniform inference theory that allows testing any parameters of interest. For independent and identically distributed (i.i.d.) data, extensive research explores uniform inference in high-dimensional regression settings under exogeneity conditions (e.g., \cite{BCH2014,zhang2014debiased,BCK15Bio,DML}) and, more generally, considers GMM frameworks that allow for endogeneity (e.g., \cite{belloni2018high,belloni2017simultaneous,caner2018high}), through various de-biasing and orthogonalization techniques. 
Building on the idea of orthogonality, \cite{ata2018latent} present an algorithm incorporating bias-corrected Dantzig selector estimator to investigate large networks with latent agents, though without accounting for temporal dependence. Addressing data-generating processes exhibiting dependency, \cite{lasso2018} study LASSO-based inference for exogenous regression under general temporal and cross-sectional dependence. 

In this paper, we are motivated by the need to understand the connectedness within a complex spatial panel network. Our focus is on exploring network structures, which need not be sparse, while allowing for flexible sparse deviations. These deviations can be viewed as either latent or misspecified relative to a predetermined adjacency matrix (e.g., credit chains or common ownership information in a financial system). Specifically, we examine network formation by framing the problem as a general system of dynamic regression equations, considering both temporal and spatial dependencies inherent in the data-generating processes. Methodologically, we extend the model setting in \citet{lasso2018} by allowing for endogeneity in the covariates, which is a natural concern when the regression system is featured with simultaneity by incorporating contemporaneous lags. As a result, sufficiently many moment conditions involving instrumental variables (IV) are needed and we build a debiased-regularized, high-dimensional GMM estimator to facilitate valid inference. Notably, the double LASSO estimation steps used in \citet{lasso2018} for debiasing are unsuitable in our case due to the endogeneity issue. This necessitates the identification of an appropriate moment selection matrix to achieve the desired orthogonality for valid inference. Given the high-dimensional nature of the covariance matrix and its inverse, a unified regularized estimation framework is required to ensure the consistency of both the preliminary estimator and the matrices involved in the debiasing step.  

For implementation, we propose employing a Generalized Dantzig Selector (GDS) as an initial step, followed by a debiasing step. Theoretically, we establish the consistency of the GDS estimator and derive the linearization of the debiased estimator to enable the application of the central limit theorem for uniform inference on the parameters of interest (whether of fixed or growing dimension). 
In particular, we show the asymptotic properties of the debiased-regularized GMM (DRGMM) estimator for both linear and nonlinear moments cases. Moreover, we discuss the connection to the semiparametric efficiency literature, particularly in relation to the construction of our estimator when the dimension of the parameters of interest is fixed. 

We contribute to the literature in four respects. First, we develop a method for estimating parameters in a high-dimensional endogenous equation system that incorporates both spatial and temporal dynamics. Our theoretical framework accords with general dynamic panel models, capturing heterogeneity through individual-specific parameters. Second, we propose a latent model that shrinks toward a pre-specified network structure. In particular, we provide theoretical insights into how the restricted eigenvalue conditions on the design matrix adapt to the transformation of the covariates. Third, we employ a debiased machine learning approach to conduct simultaneous hypothesis testing on high-dimensional parameters. Finally, we demonstrate the practical utility of our method through an empirical application in a financial network context.

Compared to the high-dimensional GMM estimator developed in \citet{belloni2018high}, this study involves a spatial panel model setup, rather than i.i.d. data, introducing several technical challenges. First, to prove consistency, the verification of certain high-level assumptions requires significantly different steps. We demonstrate the validity of concentration 
under spatial-temporal dependent processes, 
ensuring that panel data with a network structure can be properly handled. Furthermore, to extend the framework to nonlinear and even non-smooth moments, we employ different techniques for proving tail probabilities and concentration inequalities, as detailed in Appendix \ref{nonlinearmo}. 

The following notations are adopted throughout the paper. For a vector $v = (v_1, \ldots, v_p)^\top$, let $|v |_k = (\sum_{i=1}^p |v_i|^k)^{1/k}$ with $k\geq1$, $|v|_\infty = \max\limits_{1\leq i\leq p} |v_i|$, and $|v|_0$ denote the number of nonzero components of the vector. For a random variable $X$, let $\|X\|_r\defeq(\E|X|^r)^{1/r}$, with $r>0$. For a matrix $A = (a_{ij})\in\R^{p\times q}$, we define $|A|_1 = \max\limits_{1\leq j\leq q} \sum_{i=1}^p |a_{ij}|$, $|A|_{\infty} = \max\limits_{1\leq i\leq p} \sum_{j=1}^q|a_{ij}|$, $|A|_{\max} = \max\limits_{1\leq i \leq p,1\leq j\leq q}|a_{ij}|$, and the spectral norm $|A|_2 = \sup_{|v|_2\leq1} |Av|_2$. 
Moreover, let $\lambda_i(A)$ denote the $i$-th largest eigenvalue of a square matrix $A$, and let $\lambda_{\min}(A)$ and $\lambda_{\max}(A)$ denote the minimal and maximal eigenvalues of $A$, respectively. Similarly, let 
$\sigma_i(A)$ denote the $i$-th largest singular value of $A$, with $\sigma_{\min}(A)$ and $\sigma_{\max}(A)$ representing the minimal and maximal singular values of $A$, respectively. Let $\mathbf I_{p}$ denote the identity matrix of size $p\times p$. For any measurable function on a measurable space $g:\mathcal{W}\rightarrow\R$, define the sample average over the indices $t=1,\ldots,n$ as $\E_n(g(\omega_t))\defeq n^{-1}\sum_{t=1}^n g(\omega_t)$. Given two sequences of positive numbers $a_n$ and $b_n$, write $a_n\lesssim b_n$ (resp. $a_n\asymp b_n$) if there exists constant $C>0$ (independent of $n$) such that $a_n/b_n\leq C$ (resp. $1/C \le a_n / b_n\leq C$) for all large $n$. For a sequence of random variables $x_n$, we use the notation $x_n\lesssim_{\P} b_n$ to denote $x_n=\bigO_{\P}(b_n)$.

The rest of the article is organized as follows: Section \ref{model} outlines the model specification 
and estimation steps.
Section \ref{theoretical} presents the main theoretical results for the case of linear moments. 
Sections \ref{sim} and \ref{app} provide simulation studies and an empirical application on financial network analysis with potential misspecification.
The technical proofs and additional details--including 
extension to nonlinear moments, connection to semiparametric efficiency, and 
supplementary 
discussions--are provided in the Online Appendix. The codes to implement the algorithms are publicly accessible via the GitHub repository: \href{https://github.com/huangche/Uniform-Inference-on-High-dimensional-Spatial-Panel-Networks}{Uniform-Inference-on-High-dimensional-Spatial-Panel-Networks}.

	\section{Model and Estimation}\label{model}

	\subsection{Model Specification}\label{system}
	For time points $t=1,\ldots,n$ and individual entities $j=1,\ldots,p$ (both $n,p$ tend to infinity), we consider a spatial panel network model for the nodal response $y_{j,t}$: 
	\begin{equation}\label{spn}
	y_{j,t} = \rho^0 w_j^\top y_t+  \delta_j^{0\top} y_t+ \vps_{j,t}, 
	\end{equation}
	where we have an observed network structure $w_j=(w_{j1},\ldots,w_{jp})^\top$ for all $j=1,\ldots,p$, and $\rho^0$ is the spatial autoregressive parameter. In particular, $w_j^\top y_t$ is an observed weighted variable, and vectors $\delta^0_j=(\delta^0_{j1},\ldots,\delta^0_{jp})$, $j=1,\ldots,p$, denote approximately sparse misspecification errors of the network structure. Estimation and inference of $\delta_j^0$ and $\rho^0$ are of interest in analyzing both the actual connectedness among individuals and the joint network effect. 
	
	We let $w_{jj}=0$ and assume $\delta^0_{jj}=0$ for all $j$. It is worth noting that endogeneity is a concern, since the inclusion of $y_{k,t}$ ($k\neq j$) induces simultaneity in the structural equation system. To handle the simultaneity bias, instrumental variables (denoted by $z_{j,t}$) are needed. For example, lags $y_{j,t-1},y_{j,t-2}\ldots$ are commonly used in practice.  We shall further assume that $\vps_{j,t}$ are martingale difference sequences with respect to a suitable filtration, as defined below, and allow for temporal and spatial dependencies in the observed data sample (see \hyperref[A_dgp1]{(A1)(i)}, \hyperref[A_dan]{(A2)} and \hyperref[A_error]{(A3)}).
	
	As a practical example, in \cite{de2018recovering}, $y_{j,t}$ refers to the state tax liabilities for state $j$ in year $t$, $w_{jk}$ is observed as some known geographic measurement of neighborhood, and $\delta^0_{jk}$ contributes to the measurement deviations. In this case, the overall network effect, i.e., $\rho^0 w_j + \delta^0_j$ is interpreted as an overall economic measurement of the connections. On this basis, the social network effect of tax competition is analyzed. 
	
	In addition, we can expand the model by including equation-specific covariates $u_{j,t}\in\R^{d_j}$ whose dimension may grow with the sample size:
	\begin{equation}\label{spn2}
	y_{j,t} = \rho^0 w_j^\top y_t+  \delta_j^{0\top} y_t+ \beta^{0\top} u_{j,t} + \vps_{j,t},\quad 
	j=1,\ldots,p.  
	\end{equation}
	The compact form of the model is given by:
	\begin{equation*}
	y_{t} = \rho^0 W y_t+  \Delta^0 y_t+ u_t\beta^0 + \vps_{t},
	\end{equation*}
	where $y_t=(y_{1,t}, \ldots, y_{p,t})^\top$, $u_t=(u_{1,t}^\top,\ldots,u_{p,t}^\top)^\top$, and $\vps_t=(\vps_{1,t}, \ldots, \vps_{p,t})^\top$. In this expression, $W$ and $\Delta^0$ are $p\times p$ matrices, with the $j$-th row of $W$ being $w_j^\top$ and the $j$-th row of $\Delta^0$ being $\delta_j^{0\top}$. If the covariates $u_t$ are exogenous, the transformed covariates $Wu_t$ and $W^2u_t$ are also commonly used as instrumental variables. 
	
	Following the spatial econometrics literature, we assume that $|(\rho^0W+\Delta^0)^t|_\infty\leq|c|^t$ with some $|c|<1$ to ensure the stationarity of the model. Without loss of generality, for identification purposes, suppose it is known that there exist $j^*,k^*$ ($k^*\neq j^*$) such that $W_{j^*k^*}\neq0$ and $\Delta^0_{j^*k^*}=0$. This implies that at least one of the non-zero actual links can be correctly specified by the observed linkage. This assumption ensures that the regression does not suffer from multicollinearity. 
	
	In reality, $W$ might be either sparse or dense. On the other hand, it is noted in the literature that the classical spatial estimator for $\rho^0$, such as the IV estimator, would not be consistent if the misspecification error $\Delta^0$ is too dense; see recent works by \cite{lewbel2023estimating,lewbel2023ignoring}. We therefore posit that $\Delta^0$ is approximately sparse, though the observed or actual network structure may not necessarily be sparse. 
	
	When multiple options for the pre-specified matrix $W$ are available, a linear combination of the potential matrices $W_i$, $i=1,\ldots,M$, can be incorporated into the model. Such a generalization has been considered in articles such as \citet{lam2014regularization,higgins2023shrinkage}, with $M$ increasing as $n$ grows. In this case, a regularized estimation can be performed on the weights associated with $W_i$'s, and the sparse weights would be included as part of the unknown parameters 
	in our framework.
	
	In our empirical section \ref{app}, we attempt to quantify the spillover effect among individual stocks, where $y_t$ denotes a vector of stock returns, $W$ is a network matrix corresponding to the common shareholder information, and $\rho^0$ measures the joint network effect. The purpose of this application is to understand the overall network effect among firms and to uncover the latent links. 
	
	
	It is worth noting that the spatial panel network model we have discussed fits within the framework of high-dimensional regression equations, potentially involving endogeneity. In Appendix \ref{general}, we present a general model framework that encompasses many examples in panel or longitudinal data analysis. For instance, the general model can be dynamic, allowing for the inclusion of lagged values of $y_{j,t}$ in the covariates. The primary theorems presented in Section \ref{theoretical} and Appendix \ref{nonlinearmo} are applicable to the estimator of the general model when using linear or nonlinear moments.

	\subsection{Estimation}\label{est}
	In this subsection, we outline the estimation steps for the DRGMM estimator, which include obtaining a preliminary estimator using the Dantzig selector and the subsequent debiasing procedure, allowing us to perform inference on the parameters of interest. 
	
	For each equation $j=1,\ldots,p$, let $x_{j,t}$ and $\vartheta_j^0$ collect the regressors and the corresponding coefficients respectively. Recall the existence of indices $(j^*,k^*)$, where $j^*\neq k^*$. Specifically, for $j\neq j^*$, we have:
	$$x_{j,t}=(w_j^\top y_t, y_{t}^\top, u_{j,t}^\top)^\top, \quad \vartheta_j^0=(\rho^0,\delta_j^{0\top},\beta^{0\top})^\top;$$
	for $j=j^*$, we have:
	$$x_{j,t}=(w_j^\top y_t, y_{t,-k^*}^\top, u_{j,t}^\top)^\top,\quad \vartheta^0_j=(\rho^0,\delta_{j,-k^*}^{0\top},\beta^{0\top})^\top,$$
	where $y_{t,-k^*}$ denotes the subvector of $y_t$ obtained by excluding the $k^*$th element $y_{t,k^*}$, and similarly for $\delta^0_{j,-k^*}$. With these notations, we can rewrite the model in \eqref{spn2} in the form of $y_{j,t}=x_{j,t}^\top\vartheta_j^{0}+\vps_{j,t}$. 
	Let $K_j$ denote the dimension of $x_{j,t}$.  
	Define $\theta^0 = (\rho^0, \delta_1^{0\top},\ldots,\delta_p^{0\top},\beta^{0\top})^\top\in\R^K$ to collect all the parameters in the model. Note that, given $\delta_{j^*k^*}^0=0$, the parameter sets $(\vartheta_1^0,\ldots,\vartheta_p^0)$ and $\theta^0$ contain the same unknown parameters. We shall estimate $\theta^0$ under the assumption that it is sparse.
	
	Due to the endogeneity in the structural model, we introduce the instrumental variables $z_t=[z_{j,t}]_{j=1}^p\in\R^q$, where $q=\sum_{j=1}^p q_j\geq K$, to construct the moments. Specifically, $z_{j,t}\in\R^{q_j}$ contains the instrumental variables for the $j$-th equation, ensuring that $\E(\vps_{j,t}|z_{j,t})=0$. Here, the notation $[A_j]_{j=1}^p$ indicates that we stack $A_j$ by rows over $j=1,\ldots,p$. 
	
	For each $j=1,\ldots,p$, we define a vector-valued score function $g_j(D_{j,t},\theta)$ that maps $\R^{K_j+q_j}\times\R^K$ to $\R^{q_j}$, where $D_{j,t}\defeq(x_{j,t}^\top, z_{j,t}^\top)^\top$. 
	For the case with linear moments, the score function is given by 
	$g_j(D_{j,t},\theta)=z_{j,t}\vps_j(D_{j,t},\theta)$, where $ \vps_j(D_{j,t},\theta)=y_{j,t}-x_{j,t}^\top\vartheta_j$. 
	Thus, the moment functions mapping $\Theta\subseteq\R^K$ to $\R^{q_j}$ are:
	$$g_j(\theta)=\E g_j(D_{j,t},\theta)= \E [z_{j,t}(y_{j,t}-x_{j,t}^\top\vartheta_j)],$$
	and we have $g_j(\theta^0)=0$. 
	By stacking the moment functions across equations, 
	we get $g(\theta)=[g_j(\theta)]_{j=1}^p$. The empirical counterpart is computed as: 
	$$\hat g(\theta) =[\E{_n}g_j(D_{j,t},\theta)]_{j=1}^p=[\E{_n}\{z_{j,t}(y_{j,t}-x_{j,t}^\top\vartheta_j)\}]_{j=1}^p.$$
	Additionally, the covariance matrix of the score functions is defined as  
	$$\Omega_{q\times q} \defeq \frac{1}{n}\E\Big[\Big\{\sum_{t=1}^n g(D_t,\theta^0)\Big\}\Big\{\sum_{t=1}^n g(D_t,\theta^0)\Big\}^\top\Big],$$
	where $D_t=[D_{j,t}]_{j=1}^p$ and $g(D_t,\theta)=[g_j(D_{j,t},\theta)]_{j=1}^p\in\R^{q}$. In our case, this simplifies to $\Omega=\E\big[[z_{j,t}\vps_{j,t}]_{j=1}^p([z_{j,t}\vps_{j,t}]_{j=1}^p)^\top\big]$.
	
	Suppose the parameter vector $\theta^0\in\R^K$ is partitioned into two parts: the parameters of interest $\theta_1^0\in\R^{K^{(1)}}$ and the nuisance parameters $\theta_2^0\in\R^{K^{(2)}}$, where $K^{(1)}+K^{(2)} =K$. In this context, we are primarily interested in $\theta_1^0=(\rho^0, \delta_1^{0\top},\ldots,\delta_p^{0\top})^\top$, which includes the spatial autoregressive parameter and the misspecification errors of the network structure. Meanwhile, the coefficients on the control variables, denoted by $\theta_2^0=\beta^0$, are treated as nuisance parameters. Let $G_1$ and $G_2$ denote the Jacobian matrices of the moment function $g(\theta)$ with respect to $\theta_1$ and $\theta_2$, respectively. Specifically, since $\theta_1$ contains both common and equation-specific parameters, $G_1\defeq\partial_{\theta_1^\top}g(\theta_1,\theta_2^0)|_{\theta_1=\theta_1^0}$ can be decomposed as $G_1=(G_{11},G_{12})$, where $G_{11}$ is a $q\times1$ vector given by $-[\E(z_{j,t}w_j^\top y_t)]_{j=1}^p$, and $G_{12}$ is a $q\times (K^{(1)}-1)$ block diagonal matrix whose $j$th block is $-\E(z_{j,t}y_t^\top)$.{\linespread{1}\footnote{Under the assumption that $\delta^0_{j^*k^*}$ is known to be zero, we can simplify the parameter vector $\theta^0_1$ by excluding $\delta^0_{j^*k^*}$. Consequently, the corresponding column in the Jacobian matrix $G_{12}$ associated with $\delta^0_{j^*k^*}$ (i.e., the ($pj^*+k^*$)-th colum) should also be removed.}} 
	Additionally, $G_2\defeq\partial_{\theta_2^\top}g(\theta_1^0,\theta_2)|_{\theta_2=\theta_2^0}$ is a $q\times K^{(2)}$ matrix given by $-[\E(z_{j,t}u_{j,t}^\top)]_{j=1}^p$. Other ways to partition the parameters are also possible, with the expressions for $G_1$ and $G_2$ adjusted accordingly.

	The DRGMM estimator procedure will be carried out in two steps:
	\begin{itemize}
		\item[1.][Estimation] 
		Following \cite{belloni2018high}, we consider a Dantzig type of regularization to estimate $\theta^0$, which is an extension of the estimator proposed by \citet{lounici2008high}. Let $\lambda_n>0$. The 
		Generalized Dantzig Selector (GDS) estimator $\hat\theta=(\hat{\theta}_1^\top, \hat{\theta}_2^\top)^\top $ is defined as:
		\begin{equation} \label{danzig}
		\hat\theta=  \arg\min_{\theta \in \Theta} |\theta|_{1} \quad \text{subject to}\quad |\hat g(\theta)|_{\infty} \leq \lambda_n.
		\end{equation}
		Specifically, in the case of linear moments, $|\hat g(\theta)|_{\infty}=\max\limits_{1\leq j\leq p}\big|\E{_n}\{z_{j,t}(y_{j,t}-x_{j,t}^\top\vartheta_j)\}\big|_\infty$.
		
		
		\item[2.][Debiasing] In order to partial out the effect of the nuisance parameters $\theta_2$, we first consider the moment functions: $M(\theta_1,\theta_2) = \{\mathbf I_q - G_2P(\Omega, G_2)\}g(\theta_1,\theta_2)$, where $P(\Omega, G_2) = (G_2^{\top}\Omega^{-1} G_2)^{-1}G_2^{\top}\Omega^{-1}$. It follows that $M(\theta^0_1,\theta^0_2)=0$ and the Neyman orthogonality property $\partial_{\theta_2^\top}M(\theta_1^0,\theta_2)|_{\theta_2=\theta_2^0}=0$ is satisfied.  Moreover, to construct the approximate mean estimator, we further consider the moment functions: 
		\begin{align*}
		\widetilde M(\theta_1,\theta_2;\gamma)&=G_1^\top\Omega^{-1}\{\mathbf I_q - G_2P(\Omega, G_2)\}G_1(\theta_1-\gamma) + G_1^\top\Omega^{-1}M(\gamma,\theta_2)\\
		&=G_1^\top\Omega^{-1}\{\mathbf I_q - G_2P(\Omega, G_2)\}\{G_1(\theta_1-\gamma) + g(\gamma,\theta_2)\},
		\end{align*}
		satisfying $\widetilde M(\theta_1^0,\theta_2^0;\theta_1^0)=0$, $\partial_{\gamma^\top}\widetilde M(\theta_1^0,\theta_2^0;\gamma)|_{\gamma=\theta_1^0}=0$, and $\partial_{\theta_2^\top}\widetilde M(\theta_1^0,\theta_2;\theta_1^0)|_{\theta_2=\theta_2^0}=0$.{\linespread{1}\footnote{These Neyman orthogonality properties ensure that the first-order asymptotic distribution of the debiased estimator is independent of the specific construction of the preliminary estimator in the first step. Essentially, any prediction-based machine learning estimator with a sufficiently fast convergence rate can be utilized.}} 
		
		This motivates updating the estimator for the parameters of interest by solving $\widetilde M(\theta_1,\hat\theta_2;\hat\theta_1)=0$ with respect to $\theta_1$. Specifically, the solution, denoted as $\check\theta_1$, is given by 
		\begin{equation}\label{est.eq}		
		\check{\theta}_1=\hat{\theta}_1 - [\hat{G}_1^{\top}\hat{\Omega}^{-1}\{\mathbf I_q - \hat{G}_2P(\hat{\Omega},\hat{G}_2)\}\hat{G}_1]^{-1}\hat{G}_1^{\top}\hat{\Omega}^{-1}\{\mathbf I_q -\hat{G}_2 P(\hat{\Omega},\hat{G}_2)\}\hat g(\hat\theta_1, \hat\theta_2),
		\end{equation}
		where $\hat\Omega = 
		\E_n\big[[z_{j,t}\vps_{j,t}]_{j=1}^p([z_{j,t}\vps_{j,t}]_{j=1}^p)^\top\big]$, and $\hat G_1$ and $\hat G_2$ are estimators for $G_1$ and $G_2$, respectively. Specifically, let $T_1$ be a nonnegative threshold parameter. The $i$-th row and $j$-th column element of $\hat G_1$ is defined as: 
		$$\hat{G}_{1,ij} = \hat G^1_{1,ij }\IF\{|\hat G^1_{1,ij}|>T_1\},$$
		where the matrix $\hat G_1^1$ is given by 
		$\hat G_1^1=(\hat G_{11}, \hat G_{12})$, with $\hat G_{11}=-[\E_n(z_{j,t}w_j^\top y_t)]_{j=1}^p$ and $\hat G_{12}$ being a block diagonal matrix whose $j$th block is $-\E_n(z_{j,t}y_t^\top)$. Similarly, for $\hat G_2$, thresholding is applied to 
		$\hat G_2^1=-[\E_n(z_{j,t}u_{j,t}^\top)]_{j=1}^p$. The selection of the threshold will be discussed in the proof of Lemma \ref{sparse} in Appendix \ref{a9.3}.
		\item[3.][Inference]
		Simultaneous inference on the parameters of interest, $\theta_1^0$, can be performed by using either the asymptotic confidence intervals in \eqref{CI_asy} or the bootstrap confidence intervals in \eqref{CI_boot}, as outlined in Section \ref{inference}.
	\end{itemize}
	
	Our estimation procedure is designed for settings where $n,p,K_j$, and $q_j$ (and thus $K$ and $q$) can all diverge. It includes a special case of many IV problems with $n,q\to\infty$ while the number of unknown parameters is fixed. In this scenario, regularization on the parameters is not required in the first estimation step. For instance, in our supplementary simulation study in Appendix \ref{ab}, we consider the Arellano-Bond (AB) estimator for dynamic panel models, where an excessive number of instruments is used to estimate two parameters. We use the conventional AB estimator as the preliminary estimator, which is then refined in a subsequent debiasing step to address overidentification with 
	optimal moment selection. 
	
	It is worth noting that in the high-dimensional setting ($q>n$), $\hat\Omega$ is singular due to the rank deficiency, necessitating the use of a regularized estimator for the precision matrix. Specifically, we consider the constrained $\ell_1$-minimization for inverse matrix estimation (CLIME; see \citealp{cai2011constrained}). In Appendix \ref{a9.2}, we will present a feasible debiased estimator $\check\theta_1$ that incorporates approximate inverse matrices. The convergence rates of the estimators involved in addressing the rank deficiency are analyzed in several auxiliary lemmas in the same appendix.
	
	To provide additional clarity on the debiasing step, in Appendix \ref{2sls}, we establish a link between our debiased estimator and the Two-Stage Least Squares (2SLS) estimator in a low-dimensional framework, where the number of unknown parameters and moment conditions remain fixed. 
	
	In Section \ref{check}, we will demonstrate that the debiased estimator $\check\theta_1$ is asymptotically unbiased and Gaussian. This allows us to perform simultaneous inference on the parameters of interest.
	
	\begin{remark}[Tuning Parameters]
		The estimation procedure involves tuning parameters. 
		Theoretically, $\lambda_n$ in step 1 must be large enough to satisfy \hyperref[A_tune]{(A5)}, with its order depending on data's dimensionality and degree of dependency (see the discussion under Theorem \ref{theorem.cons}). 
		Empirically, $\lambda_n$ can be selected based on quantiles from standard normal distribution or through multiplier block bootstrap, as discussed in \citet{lasso2018}. 
		
		For the CLIME tuning parameter 
	    in step 2, the admissible rate in theory is shown in Lemma \ref{ln_ups} and Remark \ref{rateUps} in the appendix. In practice, the problem in \eqref{clime} can be decomposed into $q$ vector minimizations. For each vector, we use the tuning parameter $1.2\times\inf_{a\in\R^q}|a\hat\Omega-e_j^\top|_\infty$, where $a$ is a row vector, and $e_j$ is the $q\times1$ unit vector with the $j$-th element equal to 1, for $j=1,\ldots,q$. This choice is inspired by \citet{gold2020inference}.
	\end{remark}

	\section{Main Results}  \label{theoretical}
	In this section, we present the theoretical foundation of the proposed estimator for the case of linear moments. 
	Specifically, Section \ref{const} focuses on the consistency of the preliminary GDS estimator, $\hat\theta$, in step 1, 
	while Section \ref{check} examines the inference procedure for the final DRGMM estimator, $\check\theta_1$, 
	for the parameters of interest. Extensions of the main theory to the case of nonlinear moments are discussed in Section \ref{nonlinearmo} of the Appendix.
	
	Throughout this section, we impose the following assumptions and definitions: 
	
	\begin{itemize}
		\item[(A1)]\label{A_sta}
		(Stationarity) 
		\begin{itemize}
			\item[(i)]\label{A_dgp1} Given any $j=1,\ldots,p$, and for all $k=1,\ldots d_j,m=1,\ldots,q_j$, let $u_{jk,t}$, $z_{jm,t}$, and $\vps_{j,t}$ be stationary processes over $t$, admitting the representation forms $u_{jk,t} = f^u_{jk}(\ldots, \zeta_{jk,t-1}, \zeta_{jk,t})$, $z_{jm,t} = f^z_{jm}(\ldots, \xi_{jm,t-1},\xi_{jm,t})$, and $\vps_{j,t}=f^\vps_j(\ldots,\eta_{j,t-1},\eta_{j,t})$, where $\zeta_{jk,t}$, $\xi_{jm,t}$, and $\eta_{j,t}$ for $t\in\mathbb Z$ are i.i.d. random elements across $t$, and $f^u_{jk}(\cdot), f^z_{jm}(\cdot), f^\vps_j(\cdot)$ are measurable functions. 
			\item[(ii)]\label{A_dgp2} The network structure satisfies $|(\rho^0W+\Delta^0)^t|_\infty\leq|c|^t$ with some $|c|<1$.
		\end{itemize}
	\end{itemize}
	
	\begin{definition}[Dependence Adjusted Norm]\label{dep}
		Let $\zeta_{jk,0}$ be replaced by an i.i.d. copy $\zeta_{jk,0}^\ast$, and define $u_{jk,t}^{\ast}=f^u_{jk}(\ldots,\zeta^\ast_{jk,0},\ldots,\zeta_{jk,t-1}, \zeta_{jk,t})$. For $r\geq1$, define the functional dependence measure $\delta_{r,j,k,t}= \|u_{jk,t}- u_{jk,t}^{\ast}\|_r$, which measures the dependency of $\zeta_{jk,0}$ on $u_{jk,t}$. Also, define $\Delta_{d,r,j,k}=\sum_{t=d}^\infty\delta_{r,j,k,t}$, which accumulates the effects of $\zeta_{jk,0}$ on $u_{jk,t\geq d}$. Moreover, the dependence adjusted norm of $u_{jk,t}$ is denoted by $\|u_{jk,\cdot}\|_{r,\varsigma}=\sup_{d\geq0}(d+1)^{\varsigma}\Delta_{d,r,j,k}$, where $\varsigma>0$. Similarly, we can define $\|z_{jm,\cdot}\|_{r,\varsigma}$ and $\|\vps_{j,\cdot}\|_{r,\varsigma}$ in the same fashion.
	\end{definition}
	
	\begin{itemize}
		\item[(A2)]\label{A_dan}
		(Dependency) For each $j=1,\ldots,p$, $k=1,\ldots d_j$, and $m=1,\ldots,q_j$, assume that $\|u_{jk,\cdot}\|_{r,\varsigma}<\infty$, $\|z_{jm,\cdot}\|_{r,\varsigma}<\infty$, and $\|\vps_{j,\cdot}\|_{r,\varsigma}<\infty$ for some $r\geq8$ and $\varsigma>0$.
		\item[(A3)]\label{A_error}
		(Error Terms) For all $j=1,\ldots,p$, assume that $\vps_{j,t}$ are martingale difference sequences with $\E(\vps_{j,t}|\mathcal F_{t-1})=0$, $\E(\vps^2_{j,t}|\mathcal F_{t-1})=\sigma_{jj}$, $\E(\vps_{j,t}\vps_{j',t}|\mathcal F_{t-1})=\sigma_{jj'}$, and satisfy $\E(z_{jm,t}\vps_{j,t})=0$ for any $j,j'=1,\ldots,p$ and $m=1,\ldots,q_j$. The filtration is defined as $\mathcal F_{t}\defeq\{(\zeta_{jk,s})_{s\leq t},(\xi_{jm,s})_{s\leq t},(\eta_{j,s})_{s\leq t}\mid k=1,\ldots d_j,m=1,\ldots,q_j,j=1,\ldots,p\}$.
		
		\item[(A4)]\label{ES}(Exact Sparsity) 
		{There exists a subset $\mathcal{I}\subset\{1,\ldots,K\}$ with cardinality $|\mathcal{I}|=s=\smallO(n)$ such that $\theta^0_k\neq0$ only for $k\in \mathcal{I}$.}
		\item[(A5)]\label{A_tune} (Regularization Parameter) The regularization parameter $\lambda_n>0$ is selected such that 
		$$|\hat g(\theta^0)|_{\infty}=\max\limits_{1\leq j\leq p}\big|\E{_n}(z_{j,t}\vps_{j,t})\big|_\infty\leq \lambda_n$$
		holds with probability at least $1-\alpha$, where $0<\alpha<1$.
		
		\item[(A6)]\label{A_id} (Identification) 
		Let $G\defeq\partial_{\theta^\top}g(\theta)|_{\theta=\theta^0}$ and let $\mathcal{I}$ be a subset of $\{1,\ldots,K\}$. For $a\geq1$, define $$\kappa_a^G(s, u) \defeq\min\limits_{\mathcal I:|\mathcal{I}|\leq s} \min\limits_{\theta \in \mathcal C_{\mathcal{I}}(u):|\theta|_a=1} |G\theta|_\infty,$$
		where $\mathcal C_{\mathcal{I}}(u) = \{\theta\in\R^K: |\theta_{\mathcal{I}^C}|_1\leq u|\theta_\mathcal{I}|_1\}$, with $u>0$, $\mathcal{I}^C=\{1,\ldots,K\}\setminus \mathcal{I}$, and $\theta_{\mathcal{I}},\theta_{\mathcal{I}^C}$ are sub-vectors of $\theta$ corresponding to $\mathcal{I},\mathcal{I}^C$. Assume that 
		$$\kappa_a^{G}(s,u)\geq s^{-1/a} C(u),\, a\in\{1,2\},$$
		where $C(u)$ is a decreasing function of $u$, mapping from $(0,\infty)$ to $(0,\infty)$.
	\end{itemize}
	
	In \hyperref[A_dgp1]{(A1)(i)}, we allow for overlap in the innovations $\zeta_{jk,t},\xi_{jm,t},\eta_{j,t}$ as long as the exogeneity condition 
	$\E(z_{j,t}\vps_{j,t})=0$ is satisfied. 
	\hyperref[A_dan]{(A2)} assumes a sufficient decay rate of dependency. In the main text of this paper, we focus on the weak dependence case with $\varsigma>1/2-1/r$. In the detailed proofs in the appendix, we will discuss how the rates adapt to the case of strong dependence. It is worth noting that \hyperref[A_dan]{(A2)}, together with the stationary condition \hyperref[A_dgp2]{(A1)(ii)}, implies that the dependence adjusted norm for the transformed covariates $\|x_{jk,\cdot}\|_{r,\varsigma}$ is also finite. 
	
	Assumption \hyperref[A_error]{(A3)} restricts the dependence structure of the error term by assuming it follows a martingale difference sequence (m.d.s.) with respect to the filtration $\mathcal F_{t-1}$. While this rules out serial correlation, it remains reasonable as our general modeling framework accommodates dynamics through the inclusion of sufficiently many lags.  
	Due to the m.d.s. nature of the error term, the long-run variance of the score functions need not be considered in forming $\Omega$ for debiasing. Additionally, we impose some structure on the conditional variance-covariance matrix to simplify the derivation. However, this setting could be extended to handle more complex structures, such as serial correlations, unobserved heterogeneity, and factor structures. See Appendix \ref{error} for further discussion.
	
	\hyperref[ES]{(A4)} focuses on the sparsity of the true parameter $\theta^0$, which is the assumption we primarily rely on in demonstrating the main theorems. This condition can be extended to the case of approximate sparsity, a more general assumption in the literature on high-dimensional data analysis. In Appendix \ref{approx}, we will 
	derive the estimation error bounds under the approximate sparsity assumption, taking into account the approximation error.
	
	\hyperref[A_tune]{(A5)} ensures that $\theta^0$ is feasible for the problem in \eqref{danzig} with probability at least $1-\alpha$. 
	\hyperref[A_id]{(A6)} is an identification assumption that is crucial for ensuring consistency. In Appendix \ref{sec.id}, we discuss the conditions required to validate this assumption. 
	
	\subsection{Consistency of the GDS Estimator $\hat\theta$} \label{const}
	In order to establish the consistency of the GDS estimator $\hat\theta$, we need to derive the error bound for $|\hat{\theta}- \theta^0|_a$ for $a=1$ or $2$, and analyze the convergence rate. Under the identification condition \hyperref[A_id]{(A6)}, the error bound for $|\hat{\theta}- \theta^0|_a$ follows from the error bound for $| g(\hat\theta) - g(\theta^0)|_{\infty}$ (we will elaborate on this argument in the proof of Theorem \ref{theorem.cons}). Using the identity $g(\theta^0) = 0$, we can bound $|g(\hat{\theta})- g(\theta^0)|_{\infty}$ as follows:
	\begin{equation*}
	|g(\hat{\theta})- g(\theta^0)|_{\infty} = |g(\hat{\theta})|_{\infty} \leq |g(\hat{\theta})-\hat g(\hat{\theta})|_{\infty} + |\hat g(\hat{\theta})|_{\infty}.
	\end{equation*}
	Recalling the definition of the GDS estimator, we have $|\hat g(\hat{\theta})|_{\infty}\leq\lambda_n$. Let $\mathcal{R}(\theta^0) \defeq \{\theta \in \Theta: |\theta|_1 \leq |\theta^0|_1\}$ denote the restricted set. As a consequence of \hyperref[A_tune]{(A5)}, we could have $\hat\theta\in\mathcal R(\theta^0)$ with probability at least $1-\alpha$. The remaining task is to demonstrate the concentration result, i.e., to show that:
	$$\sup\limits_{\theta \in \mathcal{R}(\theta^0)}|\hat g (\theta)- g(\theta)|_{\infty}\leq\epsilon_n$$
	holds with probability approaching 1, for a sequence of positive constants $\epsilon_n\downarrow0$ as $n\to\infty$. 
	
	We focus on cases with linear moments, where $g(\theta)=G\theta+g(0)$ and $\hat g(\theta)=\hat G\theta +\hat g(0)$, with $G=\partial_{\theta^\top}g(\theta)$ 
	and $\hat G=\partial_{\theta^\top}\hat g(\theta)$ 
	being independent of $\theta$.
	It follows that 
	\begin{eqnarray*}
		\sup_{\theta \in \mathcal{R}(\theta^0)}|\hat g (\theta)- g(\theta)|_{\infty}
		&=&  \sup_{\theta \in \mathcal{R}(\theta^0)}|(\hat G- G)\theta|_\infty + |\hat g(0) - g(0)|_{\infty}\\
		&\leq & \sup_{\theta \in \mathcal{R}(\theta^0)}|\theta|_1 |\hat G-G|_{\max}+ |\hat g(0)- g(0)|_{\infty}\\
		&\leq & |\theta^0|_1 |\hat G-G|_{\max}+ |\hat g(0)- g(0)|_{\infty},
	\end{eqnarray*}
	where $|\cdot|_{\max}$ denotes the element-wise max norm of a matrix.
	
	To derive the convergence rate, we need to analyze the rates of $|\hat G-G|_{\max}$ and $|\hat g(0)- g(0)|_{\infty}$ by applying the concentration inequality in Lemma \ref{tail}. 
	For this purpose, we define the following quantities: 
	\begin{align*}
	\Phi_{r,\varsigma}^x = &\max_{1\leq j\leq p,1\leq k\leq d_j} \|x_{jk,\cdot}\|_{r,\varsigma}, \,
	\Phi_{r,\varsigma}^{\vps z}=\max_{1\leq j\leq p,1\leq m\leq q_j}\|\vps_{j,\cdot}z_{jm,\cdot}\|_{r,\varsigma}, \\
	&\quad\quad\Phi_{r,\varsigma}^{xz} = \max_{1\leq j\leq p,1\leq k\leq d_j,1\leq m\leq q_j} \|x_{jk,\cdot} z_{jm,\cdot}\|_{r,\varsigma},
	\end{align*}
	which are all bounded by constants for some $r\geq4$ and $\varsigma>0$ according to \hyperref[A_dan]{(A2)}. Additionally, we define $\Phi_{r,\varsigma}^{yz} = \max\limits_{1\leq j\leq p,1\leq m\leq q_j} \|y_{j,\cdot} z_{jm,\cdot}\|_{r,\varsigma}$. 
	
	For each equation $j$, we aggregate the dependence adjusted norm of the vector of processes $x_{j,t}$ as follows:  
	$$\||x_{j,\cdot}|_{\infty}\|_{r,\varsigma} = \sup_{d\geq 0}(d+1)^{\varsigma}\Delta_{d,r,j},\quad\Delta_{d,r,j}=\sum_{t=d}^\infty\||x_{j,t}- x_{j,t}^{\ast}|_\infty\|_r.$$
	This is in comparison to the dependence adjusted norm for a univariate process
	as in Definition \ref{dep}. Similarly, we define $\||x_{j,\cdot}z_{jm,\cdot}|_{\infty}\|_{r,\varsigma}$. Additionally, we aggregate over $j=1,\ldots,p$ by:
	$$\Big\|\max_{1\leq j\leq p}|x_{j,\cdot}|_{\infty}\Big\|_{r,\varsigma} = \sup_{d\geq 0}(d+1)^{\varsigma}\Delta_{d,r},\quad \Delta_{d,r}=\sum_{t=d}^\infty\Big\|\max_{1\leq j\leq p}|x_{j,t}- x_{j,t}^{\ast}|_\infty\Big\|_r.$$
	The definition for $\Big\|\max\limits_{1\leq j\leq p,1\leq m\leq q_j}|x_{j,\cdot}z_{jm,\cdot}|_{\infty}\Big\|_{r,\varsigma}$ follows similarly.
	
	\begin{lemma}[Concentration]\label{cont}
		Assuming that conditions \hyperref[A_sta]{(A1)}-\hyperref[ES]{(A4)} hold, 
		we have
		\begin{equation*}
		\sup_{\theta \in \mathcal{R}(\theta^0)}|\hat{g}(\theta)- g(\theta)|_{\infty} \lesssim_{\P} b_n s+ b'_n=:\epsilon_n,
		\end{equation*}
		where
		\begin{align*}
		b_n&=cn^{-1/2}(\log P_n)^{1/2} + cn^{-1}n^{1/r}(\log P_n)^{3/2}\Big\|\max_{1\leq j\leq p,1\leq m\leq q_j}|x_{j,\cdot}z_{jm,\cdot}|_\infty\Big\|_{r,\varsigma},\\
		b'_n&=cn^{-1/2}(\log P_n)^{1/2}\Phi_{2,\varsigma}^{yz} + cn^{-1}n^{1/r}(\log P_n)^{3/2}\Big\|\max_{1\leq j\leq p,1\leq m\leq q_j}|y_{j,\cdot}z_{jm,\cdot}|_{\infty}\Big\|_{r,\varsigma},
		\end{align*}
		with $r$ and $\varsigma$ satisfying \hyperref[A_dan]{(A2)}, and $P_n =(q\vee n\vee e)$.  
	\end{lemma}
	
	In the case where the dependence adjusted norms involved in $b_n$ and $b'_n$ 
	are bounded by constants, 
	and assuming that $n^{-1/2+1/r}(\log P_n) =\bigO(1)$ for sufficiently large $r$, we have the concentration rate 
	$$\epsilon_n \lesssim (s+1)n^{-1/2}(\log P_n)^{1/2},$$
	which matches the rate shown in Lemma 3.3 of \cite{belloni2018high} for i.i.d. data.
	
	Combining the results from Lemma \ref{cont} with the identification condition \hyperref[A_id]{(A6)}, we obtain the bound on the estimation error of the GDS estimator. The rate of consistency is stated in the following theorem.

	\begin{theorem}[Consistency of the GDS Estimator]\label{theorem.cons}
		Assuming that conditions \hyperref[A_sta]{(A1)}-\hyperref[A_id]{(A6)} hold, and recalling the definitions of $b_n$ and $b'_n$ from Lemma \ref{cont}, we obtain the following error bound: 
		\begin{equation} \label{ratetheta}
		|\hat{\theta}- \theta^0|_a \lesssim (b_ns+ b'_n+\lambda_n)s^{1/a} C(u)^{-1}=:d_{n,a},\, a\in\{1,2\},
		\end{equation}
		which holds with probability at least $1-\alpha-\smallO(1)$. 
	\end{theorem}
	
	According to Corollary 5.1 of \citet{lasso2018}, the order of $\lambda_n$ is given by
	$$n^{-1}\max_{1\leq j\leq p,1\leq m\leq q_j}\Big(\|z_{jm,\cdot}\vps_{j,\cdot}\|_{2,\varsigma}(n\log q)^{1/2}\vee\|z_{jm,\cdot}\vps_{j,\cdot}\|_{r,\varsigma}(nq)^{1/r}\Big).$$
	In the case where$(nq)^{1/r} \lesssim (n\log q)^{1/2}$, we have $\lambda_n \lesssim n^{-1/2}(\log q)^{1/2}$. This implies that if $r$ is sufficiently large, $q$ can diverge as a polynomial rate of $n$ (a better dimension allowance for $q$ is possible under stronger exponential moment conditions; see Comment 5.5 in \citet{lasso2018}). Consequently, assuming $\max\limits_{1\leq j\leq p,1\leq m\leq q_j}\|z_{jm,\cdot}\vps_{j,\cdot}\|_{r,\varsigma}$ is bounded by a constant, we have:
	$$d_{n,a} \lesssim (s+2)s^{1/a}n^{-1/2} (\log P_n)^{1/2},$$
	which is of the same order as the rate for the i.i.d. case studied in Theorem 3.1 of \cite{belloni2018high}.

\subsection{Inference Theory for the Debiased Estimator $\check\theta_1$}\label{check}
In this subsection we show the asymptotic properties of the debiased estimator $\check\theta_1$ obtained in the second step. 
In particular, we provide a key representation that linearizes the estimator, facilitating the application of a high-dimensional Gaussian approximation theorem for inference.

\subsubsection{Linearization}\label{linear.rate}

Define ${A}\defeq{G}_1^{\top}{\Omega}^{-1}(\mathbf I_q - {G}_2P({\Omega},{G}_2))$ and $B\defeq(AG_1)^{-1}$, where $P(\Omega, G_2) \defeq (G_2^{\top}\Omega^{-1} G_2)^{-1}G_2^{\top}\Omega^{-1}$. Consider estimators of $A$ and $B$, denoted by $\hat{A}$ and $\hat{B}$. More details about the construction of these estimators are discussed in Section \ref{est} and Appendix \ref{a9.2}.


We shall analyze the accuracy of estimator $\check\theta_1$. 
Observe that
\begin{equation}\label{linearization}
\check{\theta}_1  -\theta_1^{0} =\hat{\theta}_{1} -\theta_{1}^0 - \hat B\hat{A}\hat{g}({\hat{\theta}})=-BA\hat{{g}}({{\theta}^0})+ r_n,
\end{equation}
where $r_n=r_{n,1}+r_{n,2}$, and
$$
r_{n,1}=(\mathbf I - \hat B\hat A\hat G_1)(\hat\theta_1-\theta_1^0),\, 
r_{n,2}=(BA - \hat B\hat A)\hat g(\theta^0).
$$
Note that, due to the Neyman orthogonality property, the term $r_{n,1}$ is expected to be small. Under mild conditions, the last term $r_{n,2}$ is also expected to vanish. 
By applying the triangle inequality and H\"{o}lder's inequality, we have the following bounds for the two terms $r_{n,1}$ and $r_{n,2}$, respectively:
\begin{eqnarray*}
	|r_{n,1}|_{\infty} &\leq&  |\mathbf I - \hat B\hat{A}\hat{G}_1|_{\max}|\hat{\theta}_1- \theta_1^0|_{1}\\
	&\leq&  |B|_{\infty}|AG_1 - \hat{A}\hat{G}_1|_{\max}|\hat{\theta}_1- \theta_1^0|_{1}+ |\hat B- B|_{\max}|\hat{A} \hat{G}_1|_{1}|\hat{\theta}_1- \theta_1^0|_{1},\\
	|r_{n,2}|_{\infty}&\leq &  |\hat B- B|_{\infty}  |A|_{\infty} |\hat{g}(\theta^0)|_{\infty} +
	|\hat B|_{\infty}|\hat A- A|_{\infty} |\hat{g}(\theta^0)|_{\infty}.
\end{eqnarray*}

The linearized representation in \eqref{linearization} shows that the debiased estimator $\check\theta_1$ can be expressed by the true parameter $\theta_0$ plus a weighted empirical moment function evaluated at $\theta_0$, along with an approximation error. Consequently, relying on a high-dimensional Gaussian approximation of the leading term $BA\hat{g}({{\theta}^0})=(A{G}_1)^{-1}A \hat{g}({{\theta}^0})$, as will be discussed in Section \ref{inference}, valid inference can be conducted, provided that the linearization errors are asymptotically negligible in the sense that $|r_n|_{\infty}$ is of small order. We now present a theorem for the linearization of the debiased estimator. 



\begin{theorem}[Linearization]\label{linear}
	Under assumptions \hyperref[A_sta]{(A1)}-\hyperref[A_id]{(A6)}, along with \hyperref[A_clime]{(A8)} in Appendix \ref{a9.2}, 
 and the Gaussian approximation assumption for $g(D_t,\theta^0)$ (as in \hyperref[A_Srate]{(A7)}, with the dimensionality $|\mathcal S|$ replaced by $q$), suppose that $|A|_{\max}\leq C$ for some constant $C>0$, and there exist upper bounds such that $|A|_\infty\leq\iota$, $|AG_1|_\infty\leq\omega$, and $|(AG_1)^{-1}|_\infty\leq \kappa$. 
 Then, we have
	\begin{equation*}
	\check{\theta}_1  -\theta_1^{0} =-(A{G}_1)^{-1}A \hat{g}({{\theta}^0})+ r_n,
	\end{equation*}
	where  
$|r_n|_\infty\lesssim_\P \varrho_{n,1}+\varrho_{n,2}$, with $\varrho_{n,1}$ and $ \varrho_{n,2}$ defined in \eqref{rater} in the detailed proof. 
\end{theorem}

The proof of this theorem and the detailed rate of $|r_n|_\infty$ are deferred to Appendix \ref{a9.3}. In particular, 
we will discuss the rate specifically under the special case where all the dependence adjusted norms involved are bounded by constants in Remark \ref{rnfinal}.
To enable valid inference through the Gaussian approximation on the leading term, we require that the linearization errors be sufficiently small, ensuring that $\sqrt{n}|r_n|_\infty=\smallO_\P(1)$. This condition imposes restrictions on the allowed dimensionality and sparsity relative to $n$, under mild assumptions.

	\subsubsection{Simultaneous Inference}\label{inference}
	In this subsection, we cite a high-dimensional Gaussian approximation theorem  to facilitate the simultaneous inference of the parameters.
	The theorem is adapted from \cite{ZW15gaussian}. Specifically, we focus on testing the hypothesis $H_0:\theta_{1,k}^0=0,\forall k\in \mathcal S$, where $\mathcal S\subseteq \{1,\ldots,K^{(1)}\}$, and $\theta_{1,k}^0$ denotes the $k$-th element of the vector $\theta_1^0$. To proceed with this inference, we first revisit some key definitions from Section \ref{est}. 
	
	For the case of linear moments, the score functions evaluated at the true parameters are given by $g_j(D_{j,t},\theta^0)= z_{j,t}\vps_{j,t}$, where $D_{j,t}\defeq(x_{j,t}^\top, z_{j,t}^\top)^\top$. Let $D_t=[D_{j,t}]_{j=1}^p$ and $g(D_t,\theta)=[g_j(D_{j,t},\theta)]_{j=1}^p$.
	Define the vector $\mG_t=(\mG_{k,t})_{k\in\mathcal S}$, where $\mG_{k,t}=-\zeta_kg(D_t,\theta^0)$, and $\zeta_k$ is the $k$-th row of the matrix $(AG_1)^{-1}A$. 
	Assuming $|(AG_1)^{-1}A|_\infty$ is bounded by a constant, for any $k\in\mathcal S$, the dependence adjusted norm of $\mG_{k,t}$ is bounded by 
	$$\|\mG_{k,\cdot}\|_{r,\varsigma}\lesssim\max_{1\leq j\leq p,1\leq m\leq q_j}\|z_{jm,\cdot}\|_{2r,\varsigma}\|\vps_{j,\cdot}\|_{2r,\varsigma}.$$
	
	For simultaneous inference, we allow the number of parameters being tested, i.e., the cardinality $|\mathcal S|$, to increase as $n\to\infty$. Specifically, we consider a polynomial growth rate, $|\mathcal S|=n^c$ for some $c>0$. 
	The admissible growth rate is specified in the following assumption:
	\begin{itemize}
		\item[(A7)]\label{A_Srate}
		(Gaussian Approximation) With the same $r$ and $\varsigma$ that satisfy \hyperref[A_dan]{(A2)}, and assuming $\varsigma>1/2-1/r$ (weak dependence case), let $|\mathcal S|^{2/r}n^{2/r-1/2}\{\log(|\mathcal S|n)\}^{3/2}\to0$ as $n\to\infty$, where $|\mathcal S|=n^c$ for some $c>0$.
	\end{itemize}

	We now state the Gaussian approximation results as follows. Denote by $c_\alpha$ the $(1-\alpha)$ quantile of the $\max_{k\in S}|\mathcal Z_k|$, where $\mathcal Z_k$ are the standard normal random variables. Let $\sigma_k^2$ be the $k$-th diagonal element of the covariance matrix $(AG_1)^{-1}A\Omega A^\top\{(AG_1)^{-1}\}^\top=(AG_1)^{-1}$. Under \hyperref[A_Srate]{(A7)} and the same conditions as in Theorem \ref{linear}, assume 
	that there exists a constant $C>0$ such that {$\min\limits_{k\in \mathcal S}\operatorname{Var}\big(n^{-1/2}\sum_{t=1}^n\mG_{k,t}\big)\geq C$}.  
	Then, we have
	\begin{equation*}
	\lim_{n\to\infty}\big|\P(\sqrt{n}|\check\theta_{1,k}-\theta_{1,k}^0|\leq c_{\alpha}\sigma_k,\forall k\in \mathcal S) - (1-\alpha)\big|=0.
	\end{equation*}
	The conclusion also holds when $\sigma_k$ is replaced by a consistent estimator $\hat\sigma_k$. Consequently, for each $k\in\mathcal S$, we can construct the two-sided $(1-\alpha)$ confidence interval using asymptotic normality as:
	\begin{equation}\label{CI_asy}
	[\check\theta_{1,k}-\hat\sigma_kn^{-1/2}c_\alpha, \,\check\theta_{1,k}+\hat\sigma_kn^{-1/2}c_\alpha].
	\end{equation}
	
	Based on the Gaussian approximation results, the multiplier bootstrap can be employed to determine the critical value. To account for temporal dependence, we adopt a block multiplier bootstrap procedure using non-overlapping blocks. Define the vector $\widehat{\mathcal T}=(\widehat{\mathcal T})_{k\in\mathcal S}$, where 
	\begin{equation*}
	\widehat{\mathcal T}_k=-\frac{1}{\sqrt{n}}\sum_{i=1}^{l_n}e_{i}\sum_{l=(i-1)b_n+1}^{ib_n}\hat\zeta_kg(D_l,\hat\theta),\quad k\in\mathcal S,
	\end{equation*}
	$\hat\zeta_k$ is the $k$-th row of the matrix $(\hat A\hat G_1)^{-1}\hat A$ and $e_i$ are independently drawn from $\N(0, 1)$. Here, $l_n$ and $b_n$ denote the numbers of blocks and block size, respectively, with $b_n=\lfloor n/l_n\rfloor$. To ensure the validity of the multiplier bootstrap, as shown in the following theorem, we assume that the block size grows at a polynomial rate such that $b_n=\bigO(n^\eta)$ for some $0<\eta<1$. Intuitively, a larger block size is needed to effectively capture the dependency structure, while sufficiently many blocks are required for robust approximation of the bootstrapped statistics. To address this trade-off, a set of accompanying conditions is imposed, further narrowing the admissible range of $\eta$ to determine the optimal $b_n$ in \eqref{bn}, as detailed in the proof. This range is influenced by its interplay with $r$ and $\varsigma$ (satisfying \hyperref[A_dan]{(A2)}) and the size of $|\mathcal S|$.
	
	\begin{theorem}[Multiplier Bootstrap]\label{theorem.inference}
		Let $c^\ast_{\alpha}$ denote the $(1-\alpha)$ conditional quantile of $\max_{k\in S}|\widehat{\mathcal T}_k|$. Under \hyperref[A_Srate]{(A7)} and the same conditions as in Theorem \ref{linear}, assuming that $|(AG_1)^{-1}A|_\infty<\infty$, $\sqrt{n}|r_n|_\infty =\smallO_\P(1)$, and 
		$b_n = \bigO(n^{\eta})$ for some $0 <\eta< 1$ (the specific rate is provided in \eqref{bn} in the detailed proof), 
		we have: 
		\begin{equation*}
		\lim_{n\to\infty}\big|\P(\sqrt{n}|\check\theta_{1,k}-\theta_{1,k}^0|\leq c^\ast_{\alpha}\hat\sigma_k,\forall k\in \mathcal S) - (1-\alpha)\big|=0.
		\end{equation*}
	\end{theorem}
	As a result of Theorem \ref{theorem.inference}, for each $k\in\mathcal S$, we can construct the two-sided $(1-\alpha)$ bootstrap confidence interval as: 
	\begin{equation}\label{CI_boot}
	[\check\theta_{1,k}-\hat\sigma_kn^{-1/2}c^\ast_\alpha, \, \check\theta_{1,k}+\hat\sigma_kn^{-1/2}c^\ast_\alpha].
	\end{equation}

	\section{Simulation Study}\label{sim}
	In this section, we illustrate the finite sample properties of our proposed methodology across different simulation scenarios. 
 We first present results for the primary example of spatial panel networks discussed in Section \ref{system}, while Appendix \ref{ab} focuses on dynamic linear panel models.
 
Consider the spatial panel network model 
with covariates:
	\begin{equation*}
	y_{j,t}=\rho^0 h_j^{0\top} y_{t} + \beta^{0\top} u_{j,t} + \vps_{j,t},\quad j=1,\ldots,p,t=1,\ldots,n,\, u_{j,t}\in\R^{d},
	\end{equation*}
	where $h^0_j=(h^0_{j1},\ldots,h^0_{jp})^\top$ and $h^0_{jk}$ ($k\neq j$) represents the actual, unobserved peer effect of unit $k$ on unit $j$. Our objective is to estimate the joint network effect $\rho^0$, recognizing that $h_j^0$ may be misspecified as an observed network structure $w_j=(w_{j1},\ldots,w_{jp})^\top$ for all $j=1,\ldots,p$. The model can then be rewritten as:
 $$y_{j,t}=\rho^0 w_j^\top y_{t} + \rho^0 \delta_j^{0\top}y_t + \beta^{0\top} u_{j,t}+\vps_{j,t},$$
 where the vectors $\delta^0_j=(h_j^{0\top} - w_j^\top)$, $j=1,\ldots,p$, capture the misspecification errors of the network structure.

 We randomly generate the actual links by using independent Bernoulli random variables, each with a probability of 0.5 of equaling one. Additionally, we set $h^0_{jj}=0$ and apply normalization to $h^0_j$ for each $j=1,\ldots,p$. We assume that misspecification occurs randomly with a probability of 0.2 when an actual link is non-zero; that is, $h^0_{jk}\neq0$ but $w_{jk}=0$.

 To incorporate the dependency, we let the instrumental variables $Z_{j,t}\in\R^{q_j}$ for $j=1,\ldots,p$, follow a linear process such that $Z_{j,t}=\sum_{\ell=0}^{\infty}A^j_\ell\xi_{j,t-\ell}$, where $A^j_\ell=(\ell+1)^{-\tau-1}M^j_\ell$, and $M^j_\ell$ are independently drawn from Ginibre matrices (i.e., all entries of $M^j_\ell$ are i.i.d. $\operatorname{N}(0,1)$). In practice, the sum is truncated to $\sum_{\ell=0}^{500}A^j_\ell\xi_{j,t-\ell}$. We set $\tau=1.0$ for weaker dependence and $\tau=0.1$ for stronger dependence. For the $q_j$-dimensional vector $\xi_{j,t}$, we define each element as $\xi_{jk,t}=e_{jk,t}(0.8e_{jk,t-1}^2+0.2)^{1/2}$ for $k=1,\ldots,q_j$, where $e_{jk,t}$ are i.i.d. and follow a scaled $t(8)$-distribution: $t(8)/\sqrt{8/(8-2)}$, with $t(8)$ being the Student's $t$-distribution with 8 degrees of freedom. 

 Then, for each $j=1,\ldots,p$, we generate the $d$-dimensional covariates $u_{j,t}$ as follows:
 $$u_{j,t}=\pi^\top Z_{j,t} + v_{j,t},$$
	where the $q_j\times d$ matrix $\pi$ is defined as $\pi=(3+3\kappa^{q_j/3})^{-1/3}(\iota_3\otimes \mathbf{I}_{q_j/3})$, with $d=q_j/3$, $\iota_3$ being a $3\times1$ vector of ones, and $\kappa=0.5$. We let $\beta^0=(10,10,10,10,10,5,5,5,1,1,0_{d-10}^\top)^\top$. The errors $\vps_{j,t}$ and $v_{j,t}$ are generated independently from standard normal distribution. 

 We consider two cases: $p=30,d=30,n=100$ and $p=50,d=50,n=200$, where the total number of parameters, $p^2+d+1$, amounts to 931 and 2,551, respectively. The total number of moment conditions, $q=\sum_{j=1}^pq_j$, is 2,700 for the first case and 7,500 for the second. Specifically, we focus on $\rho^0$, $\beta^0$, and $\tilde\delta^0$, which includes the first 50 elements of the stacked vector $[\tilde\delta^0_j]_{j=1}^p$ as parameters of interest. Here, $\tilde\delta^0_j$ is defined as a subvector of $\delta^0_j$ with elements known to be zero removed. These removed zero elements correspond to non-zero $w_{jk}$ values, which are assumed to be correctly specified in this setting. 

 To assess the estimation accuracy of our proposed two-step method, we compute the absolute deviation for estimating $\rho^0$ and the estimation error for the vectors $\beta^0$ and $\tilde\delta^0$, measured by the Euclidean norm. These calculations are performed on estimators both with and without applying the debiasing step, namely, the DRGMM and GDS estimators. In the first step, we use penalty that is independent of the design matrix. Specifically, we set $\lambda_n=\Phi^{-1}(1-0.1/(2q))\max\limits_{1\leq j\leq q}\hat\sigma_j^2/\sqrt{n}$, given that $\hat g_j(\theta)$ asymptotically follows $\operatorname{N}(0,\sigma^2_j/n)$ for $j=1,\ldots,q$. This choice is intentionally conservative to mitigate the risk of overfitting, following \citet{belloni2018high}. When the debiasing step is applied, we treat $\rho^0$, $\beta^0$, and $\tilde\delta^0$ as the parameters of interest respectively. For the convenience of comparison, we present the estimation errors as ratios, which measure the relative difference between the results obtained using the DRGMM and GDS estimator. In particular, a ratio smaller than 1 indicates better performance when the debiasing step is applied. The results, summarized in Tables \ref{simest1} and \ref{simest2}, are aggregated over 500 replications using both the mean and the median.

    \begin{table}[H]
		\begin{center}
        \renewcommand{\arraystretch}{0.5}
			\begin{tabular}{p{1.8cm} ccccc}
				\hline\hline
				& \multicolumn{2}{c}{\small{$p=d=30,n=100$}} && \multicolumn{2}{c}{\small{$p=d=50,n=200$}}\\
				\cline{2-3}\cline{5-6}
				& \footnotesize{$\rho^0=0.7$} &  \footnotesize{$\rho^0=0.5$} && \footnotesize{$\rho^0=0.7$} &  \footnotesize{$\rho^0=0.5$}  \\
                    \hline
				& \multicolumn{5}{c}{\small{Absolute deviation for $\rho^0$}}\\
				\footnotesize{Mean} &  0.2015  & 0.3253  &&   0.2466  & 0.3051 \\
				\footnotesize{Median} &  0.1985 &  0.2430 &&   0.2454 &  0.3064 \\
                     & \multicolumn{5}{c}{\small{Euclidean norm for $\beta^0$}}\\
				\footnotesize{Mean} &  0.2407 & 0.2725 &&   0.2751 & 0.3192 \\
    			\footnotesize{Median} &  0.2383 & 0.2709 &&   0.2757 & 0.3196 \\
                    & \multicolumn{5}{c}{\small{Euclidean norm for $\tilde\delta^0$}}\\
       			\footnotesize{Mean} &  0.4371 & 0.5034 &&  0.3690 & 0.4677 \\
    			\footnotesize{Median} &  0.4282 & 0.4863 &&  0.3666 & 0.4576 \\
				\hline\hline
			\end{tabular}
		\end{center}
        \vspace{-0.5cm}
		\linespread{1}\caption{ Estimation errors for $\rho^0$, $\beta^0$ and $\tilde\delta^0$ as ratios (DRGMM relative to GDS) in the weaker dependence case ($\tau=1.0$).}\label{simest1}
	\end{table}

        \begin{table}[H]
		\begin{center}
        \renewcommand{\arraystretch}{0.5}
			\begin{tabular}{p{1.8cm} ccccc}
				\hline\hline
				& \multicolumn{2}{c}{\small{$p=d=30,n=100$}} && \multicolumn{2}{c}{\small{$p=d=50,n=200$}}\\
				\cline{2-3}\cline{5-6}
				& \footnotesize{$\rho^0=0.7$} &  \footnotesize{$\rho^0=0.5$} && \footnotesize{$\rho^0=0.7$} &  \footnotesize{$\rho^0=0.5$}  \\
                    \hline
				& \multicolumn{5}{c}{\small{Absolute deviation for $\rho^0$}}\\
				\footnotesize{Mean} &  0.1466  & 0.2587  &&   0.2339  & 0.2809\\
				\footnotesize{Median} &  0.1470 &  0.1633 &&    0.2325 &  0.2822\\
                     & \multicolumn{5}{c}{\small{Euclidean norm for $\beta^0$}}\\
				\footnotesize{Mean} &  0.1714 & 0.1766 &&   0.2539 & 0.2871 \\
    			\footnotesize{Median} &  0.1706 & 0.1752 &&   0.2542 & 0.2852 \\
                    & \multicolumn{5}{c}{\small{Euclidean norm for $\tilde\delta^0$}}\\
       			\footnotesize{Mean} &  0.3470 & 0.3792 &&   0.4839 & 0.4413 \\
    			\footnotesize{Median} &  0.3392 & 0.3701 &&   0.4808 & 0.4389 \\
				\hline\hline
			\end{tabular}
		\end{center}
            \vspace{-0.5cm}
		\linespread{1}\caption{Estimation errors for $\rho^0$, $\beta^0$ and $\tilde\delta^0$ as ratios (DRGMM relative to GDS) in the stronger dependence case ($\tau=0.1$).}\label{simest2}
	\end{table}

 Additionally, we evaluate the inference performance by examining the empirical power and size of the confidence intervals (with a nominal confidence level of 95\%) constructed using the limiting distribution theory outlined in Section \ref{check}. Specifically, the average rejection rate of the null hypotheses for the truly zero components reflects size performance, while the testing power is evaluated for the truly non-zero components. Inference results are reported separately for the structural parameters $(\rho^0,\beta^0)$ and for the network structure $\tilde\delta^0$. For comparison, the average false positive rate for truly zero parameters and the average true positive rate for truly non-zero parameters under the GDS estimator are also reported to assess the necessity of uniform inference via debiasing. The results, based on 500 simulations, are presented in Tables \ref{siminf1} and \ref{siminf2}.

     \begin{table}[H]
		\begin{center}
        \renewcommand{\arraystretch}{0.5}
			\begin{tabular}{p{1.8cm} ccccc}
				\hline\hline
				& \multicolumn{2}{c}{\small{$p=d=30,n=100$}} && \multicolumn{2}{c}{\small{$p=d=50,n=200$}}\\
				\cline{2-3}\cline{5-6}
				& \footnotesize{$\rho^0=0.7$} &  \footnotesize{$\rho^0=0.5$} && \footnotesize{$\rho^0=0.7$} &  \footnotesize{$\rho^0=0.5$}  \\
                    \hline
				& \multicolumn{5}{c}{\small{Size/False positive rate for $(\rho^0,\beta^0)$}}\\
				\footnotesize{DRGMM} &  0.04  & 0.06  &&   0.03  & 0.04 \\
				\footnotesize{GDS} &  0.00 &  0.00 &&   0.00 &  0.00 \\
                    & \multicolumn{5}{c}{\small{Power/True positive rate for $(\rho^0,\beta^0)$}}\\
                    \footnotesize{DRGMM} &  1.00  & 1.00  &&   1.00  & 1.00 \\
    			\footnotesize{GDS} &  0.83 &  0.83 &&   1.00 &  1.00 \\
                     & \multicolumn{5}{c}{\small{Size/False positive rate for $\tilde\delta^0$}}\\
				\footnotesize{DRGMM} &  0.04  & 0.07  &&   0.03  & 0.06 \\
				\footnotesize{GDS} &  0.30 &  0.30 &&   0.29 &  0.15\\
                    & \multicolumn{5}{c}{\small{Power/True positive rate for $\tilde\delta^0$}}\\
                    \footnotesize{DRGMM} &  0.94  & 0.70  &&   1.00  & 0.96 \\
    			\footnotesize{GDS} &  0.33 &  0.31 &&   0.44 &  0.28 \\
				\hline\hline
			\end{tabular}
		\end{center}
                \vspace{-0.5cm}
		\linespread{1}\caption{Empirical size and power for testing the structural parameters $(\rho^0,\beta^0)$ and for the network structure $\tilde\delta^0$ under the DRGMM estimator, along with the average false positive and true positive rates for truly zero and non-zero parameters under the GDS estimators, in the weaker dependence case ($\tau=1.0$).}\label{siminf1}
	\end{table}

    \begin{table}[H]
		\begin{center}
        \renewcommand{\arraystretch}{0.5}
			\begin{tabular}{p{1.8cm} ccccc}
				\hline\hline
				& \multicolumn{2}{c}{\small{$p=d=30,n=100$}} && \multicolumn{2}{c}{\small{$p=d=50,n=200$}}\\
				\cline{2-3}\cline{5-6}
				& \footnotesize{$\rho^0=0.7$} &  \footnotesize{$\rho^0=0.5$} && \footnotesize{$\rho^0=0.7$} &  \footnotesize{$\rho^0=0.5$}  \\
                    \hline
				& \multicolumn{5}{c}{\small{Size/False positive rate for $(\rho^0,\beta^0)$}}\\
				\footnotesize{DRGMM} &  0.02  & 0.02  &&   0.02  & 0.04 \\
				\footnotesize{GDS} &  0.00 &  0.00 &&   0.00 &  0.00\\
                    & \multicolumn{5}{c}{\small{Power/True positive rate for $(\rho^0,\beta^0)$}}\\
                    \footnotesize{DRGMM} &  1.00  & 1.00  &&   1.00  & 1.00 \\
    			\footnotesize{GDS} &  0.83 &  0.83 &&   1.00 &  1.00 \\
                     & \multicolumn{5}{c}{\small{Size/False positive rate for $\tilde\delta^0$}}\\
				\footnotesize{DRGMM} &  0.02  & 0.03 &&   0.05  & 0.04 \\
				\footnotesize{GDS} &  0.34 &  0.33 &&   0.20 &  0.15 \\
                    & \multicolumn{5}{c}{\small{Power/True positive rate for $\tilde\delta^0$}}\\
                    \footnotesize{DRGMM} &  0.93  & 0.82  &&   1.00  & 1.00 \\
    			\footnotesize{GDS} &  0.41 &  0.35 &&   0.29 &  0.30 \\
				\hline\hline
			\end{tabular}
		\end{center}
            \vspace{-0.5cm}
		\linespread{1}\caption{Empirical size and power for testing the structural parameters $(\rho^0,\beta^0)$ and for the network structure $\tilde\delta^0$ under the DRGMM estimator, along with the average false positive and true positive rates for truly zero and non-zero parameters under the GDS estimators, in the stronger dependence case ($\tau=0.1$).}\label{siminf2}
	\end{table}
    
    From Tables \ref{simest1} and \ref{simest2}, it is evident that debiased regularization significantly outperforms the one-step GDS estimator in estimating the structural parameters, particularly when a stronger network effect is observed in $\rho^0$. Our proposed DRGMM estimator performs well in recovering the true network structure, with its superiority becoming more pronounced for larger-scale networks and under stronger dependency. Overall, we observe that the estimation errors are robust across simulations. 

    From Tables \ref{siminf1} and \ref{siminf2}, we find that inference after applying the debiasing step provides size control close to the nominal level and high empirical power in most cases. While the Dantizig selection successfully detects the truly non-zero structural parameters, it is not reliable for recovering the latent network structure. Notably, our proposed method effectively avoids an excess of false positives, which can occur with the one-step regularized selection. These results confirm the necessity of uniform inference on parameters, including both truly zero and non-zero elements in the network structure.

	\section{Empirical Analysis: Spatial Network of Stock Returns}\label{app}
	In this section, our proposed methodology is employed to study the spatial network effect of stock returns. We use the public cross-ownership information as the pre-specified social network structure \citep{zhu2019network}; however, there might be misspecification in the network since some of the cross-shareholder information is not published. Our purpose is to analyze the network effect and simultaneously recover the unobserved linkages.
	
	Our empirical illustration is carried out on a dataset consisting of 100 individual stocks traded on the Chinese A-share market (Shanghai Stock Exchange and Shenzhen Stock Exchange), spanning 14 sectors as defined by the Industry Classification Guidelines of the China Securities Regulatory Commission. The data covers the period from January 2, 2019 to December 31, 2019 (i.e., $244$ trading days). Daily stock returns and annual cross-ownership data were sourced from the \href{https://www.wind.com.cn}{Wind Data Service}.
	
	The spatial network model is specified as follows:
	\begin{align*}
	y_{j,t}&=\alpha_j + \rho^0 w_j^\top y_{t} + \rho^0 (h_j^{0\top} - w_j^\top)y_t + \beta^{0} u_{j,t}+\vps_{j,t},
	\end{align*}
	where $j=1,\ldots,p$ indexes the individual stocks, $y_t=(y_{1,t},\ldots,y_{p,t})^\top$ represents the daily log returns, and $u_{j,t}$ denotes the daily turnover ratio (trading volume divided by shares outstanding), which is used as a firm-specific control variable. We assume that $\vps_{j,t}$ satisfies assumption \hyperref[A_error]{(A3)}, with $\E(\vps_{j,t}\vps_{j',t}|\mathcal F_{t-1})=0$ for $j\neq j'$.
	An unobserved individual effect, $\alpha_j$, is included to account for potential serial correlation in the error term. Within our estimation framework, these fixed effects can be treated as equation-specific intercepts during estimation.
	
	The term $w_{jk}$ represents the public cross-ownership between stock $k$ and $j$, defined as $w_{jk}=1$ if company $j$ holds shares in company $k$ based on available information, and $w_{jk}=0$ otherwise. The resulting network structure for $w_{jk}$ ($j,k=1,\ldots,p$) is illustrated in Figure \ref{network_W}, where the stocks are grouped by sector. Notably, cross-ownership relationships are observed across sectors.
	
	\begin{figure}[H]\begin{centering}
			\vspace{-0.5cm}
			\includegraphics[scale=0.8]{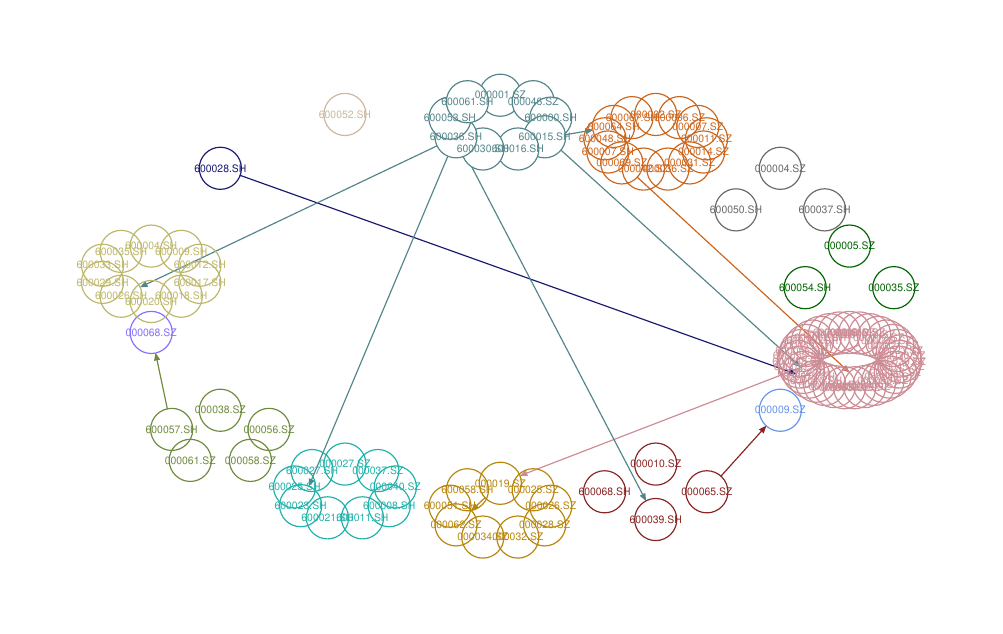}
			\vspace{-1cm}
			\linespread{1}\caption{Visualization of the network structure defined by the known $w_{jk}$ ($j,k=1,\ldots,p$) observed in 2019. Nodes represent companies, and directed edges indicate cross-ownership links. Companies are grouped and color-coded based on their sector classification.}
			\label{network_W}\end{centering}
	\end{figure}
	
	It is possible for $w_{jk}=0$ while $h^0_{jk}\neq0$ if some shareholders of company $j$ are not publicly disclosed. 
	We set $w_{jj}=h^0_{jj}=0$. Without loss of generality, we assume that misspecification errors occur only when the actual link is non-zero; specifically, cases where $w_{jk}\neq0$ while $h^0_{jk}=0$ are excluded. Our goal is to estimate the network effect $\rho^0$ and the misspecification errors $\delta^0_{jk}=h^0_{jk}-w_{jk}$ 
	using our proposed approach.
	Ultimately, we aim to recover the latent linkages $h^0_{jk}$ based on inference results for the deviation $\delta^0_{jk}$, particularly in cases where $w_{jk}$ is observed to be zero.
	
	In particular, the two-step DRGMM estimation procedure described in Section \ref{est} is applied, with $y_{t-1},y_{t-2}$ chosen as the instrumental variables. The resulting debiased estimators are $\check\rho=0.2214$ and $\check\beta =0.0012$, with standard errors of 0.0061 and 0.0001. Both $\rho^0$ and $\beta^0$ are found to be statistically significant. 
	For comparison, we also fit the spatial autoregressive (SAR) model  based solely on the observed network structure, using the same moment conditions (i.e., same instruments) for estimation. The estimated structural parameters $(\rho^0,\beta^0)$ are 0.3188 and 0.0014, with standard errors of 0.2181 and 0.0001, respectively. Notably, our proposed approach identifies a significantly stronger network effect. 
	
	Furthermore, it is of interest to test the latent network structure, and the inference theory based on DRGMM provides a formal framework for doing so. Following the discussion in Section \ref{check}, we perform individual hypothesis tests on $H_0^{(j,k)}:\delta_{jk}^0=0$ if the preliminary estimator in step 1, $\hat\delta_{jk}$, is found to be non-zero and $w_{jk}$ is observed to be zero. A total of 60 edges are considered, and debiasing is applied to the entire vector. The recovered network structure after hypothesis testings is shown in Figure \ref{network_H}, where a directed link from $k$ to $j$ indicates that $\delta^0_{jk}$ is significantly non-zero, implying that $h^0_{jk}$ should also be non-zero. 
	
	\begin{figure}[H]\begin{centering}
			\vspace{-0.5cm}
			\includegraphics[scale=0.8]{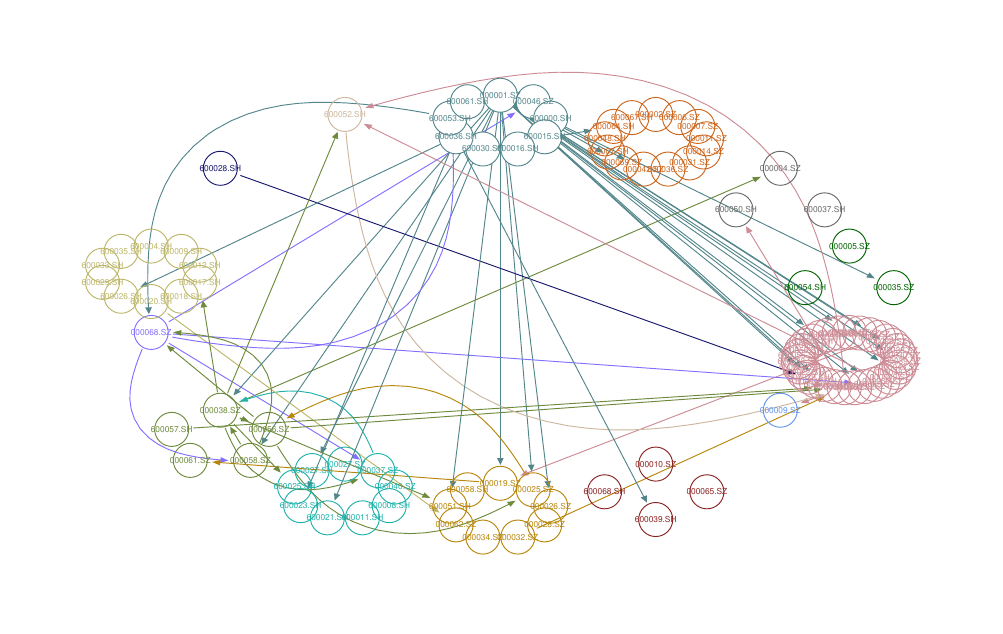}
			\vspace{-1cm}
			\linespread{1}\caption{Visualization of the recovered network structure using DRGMM, followed by individual testing based on the 2019 data sample.}
			\label{network_H}\end{centering}
	\end{figure}
	
	We find that the recovered network, which accounts for latent link structures, differs substantially from the pre-specified network. Notably, the finance and insurance sector emerges as the one with the highest outbound degree centrality, while the most intensive connections are directed towards the manufacturing sector. At the individual stock level, Ping An Bank Co., Ltd. (000001.SZ) from the finance and insurance sector has the highest outbound degree centrality, with a value of 23, while ZTE Corporation (000063.SZ) from the manufacturing sector exhibits the highest inbound degree centrality, with a value of 5. These results highlight the importance of addressing misspecified network links when analyzing risk channels and financial stability within a financial system.
	
	\section*{Acknowledgments}
	We thank Lung-Fei Lee for prompting the impetus to explore this topic. We are also grateful to Tim Christensen, Aureo de Paula, Wolfgang H\"ardle, Elena Manresa, Gerard van den Berg, and Jeffrey Wooldridge for helpful discussions. In addition, we thank the Editor, one Associate Editor, and two referees for their valuable comments, which have significantly improved the paper. We remain responsible for any errors or omissions. Chen Huang acknowledges financial support from the Independent Research Fund Denmark through the Inge Lehmann Grant (1132-00019B). Weining Wang is supported through the project ``IDA Institute of Digital Assets'', CF166/15.11.2022, financed under the Romania's National Recovery and Resilience Plan; and the Marie Sk\l{}odowska-Curie Actions under the European Union's Horizon Europe research and innovation program for the Industrial Doctoral Network on Digital Finance, Project No. 101119635. Lastly, we thank GPT-4 for proofreading assistance; all content was reviewed and edited by the authors, who take full responsibility for the final version of the manuscript. 
	
	\bibliography{biball}
	\bibliographystyle{apalike}

	\newpage
	\vskip 2em \centerline{\Large \bf Appendix} \vskip -1em
	\setcounter{subsection}{0}
	\vskip 2em
	
	\begin{appendices}
		
		\renewcommand{\thesubsection}{A.\arabic{subsection}}
		\setcounter{equation}{0}
		\renewcommand{\theequation}{A.\arabic{equation}}
		\setcounter{theorem}{0}
		\renewcommand{\thetheorem}{A.\arabic{theorem}}
		\setcounter{lemma}{0}
		\renewcommand{\thelemma}{A.\arabic{lemma}}
		\setcounter{figure}{0}
		\renewcommand{\thefigure}{A.\arabic{figure}}
		\setcounter{table}{0}
		\renewcommand{\thetable}{A.\arabic{table}}
		\setcounter{remark}{0}
		\renewcommand{\theremark}{A.\arabic{remark}}
		\setcounter{corollary}{0}
		\renewcommand{\thecorollary}{A.\arabic{corollary}}
		\setcounter{assumption}{0}
		\renewcommand{\theassumption}{A.\arabic{assumption}}
		\setcounter{example}{0}
		\renewcommand{\theexample}{A.\arabic{example}}
		
		\section{Detailed Proofs}
		\subsection{Some Useful Lemmas}
		
		
		\begin{lemma}[Weyls' inequality for singular values]  \label{svd}
			Let $H$ ($m\times n$) be the exact matrix and $P$($m\times n$) be a perturbation matrix that represents the uncertainty.  Consider the matrix $M = H + P$. If any two of $M$, $H$ and $P$ are $m$ by $n$ real matrices, where $M$ has singular values
			$$\mu_1 \geq \cdots \geq \mu_{\min{(m, n)}},$$
			$H$ has singular values
			$$\nu_1 \geq \cdots \geq \nu_{\min{(m, n)}},$$		
			and $P$ has singular values
			$$\rho_1 \geq \cdots \geq \rho_{\min{(m, n)}}.$$
			Then the following inequalities hold for $k =1,\ldots,\min(m,n)$, 
			$$ \max_{0 \leq i \leq \min(m,n)-k }\{\nu_{k+i}- \rho_{i+1}, -\nu_{i+1}+ \rho_{i+k}, 0 \}\leq \mu_{k} \leq \min_{1 \leq i\leq k}(\nu_i+ \rho_{k-i+1}).$$
		\end{lemma}
		\begin{proof}
			The result follows from Theorem 2 in \cite{queiro1995singular}, which applies to square matrices. To extend this to the $m\times n$ case, the matrix is completed into a square matrix by adding zero entries, ensuring that the non-zero singular values remain unchanged.
		\end{proof}
		
		\begin{lemma}[Corollary 3.3 of \cite{lu2000some}] \label{prod}
			Suppose that $B$ and $A$ are $m \times l$ and $l\times n$ matrices respectively, and let $p = \max\{m,n,l\}$ and $q = \min\{m,n,l\}$. Then for each $k = 1, \cdots q$,
			$$\sigma_{k}(BA) \leq \min_{1\leq i\leq k} \sigma_i(B) \sigma_{k+1-i}(A).$$
			If $p<2q$, then for each $k=1,\ldots,2q-p$,
			$$\max_{k+p-q\leq i\leq q}\sigma_i(B)\sigma_{p+k-i} (A)\leq \sigma_{k}(BA).$$
		\end{lemma}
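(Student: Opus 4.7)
The plan is to derive both bounds from the variational characterizations of singular values, with the upper bound following from the Eckart--Young--Mirsky best-rank-approximation identity and the lower bound from a dual argument involving the Moore--Penrose pseudoinverse. The upper bound is essentially routine once the right decomposition is written down; the real work is the careful index bookkeeping in the lower bound.

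For the upper bound, I would invoke the identity $\sigma_k(M) = \min_{\operatorname{rank}(X) \leq k-1} \|M - X\|_2$. Writing SVDs $B = U_B \Sigma_B V_B^\top$ and $A = U_A \Sigma_A V_A^\top$, let $B_{i-1}$ be the truncation of $B$ to its top $i-1$ singular components, so that $\|B - B_{i-1}\|_2 = \sigma_i(B)$ and $\operatorname{rank}(B_{i-1}) \leq i-1$; define $A_{k-i}$ analogously with $\|A - A_{k-i}\|_2 = \sigma_{k+1-i}(A)$. Then write
\[
BA \;=\; B_{i-1} A \;+\; (B - B_{i-1}) A_{k-i} \;+\; (B - B_{i-1})(A - A_{k-i}).
\]
The first two terms have combined rank at most $(i-1)+(k-i)=k-1$, so their sum is an admissible rank-$(k-1)$ approximant to $BA$, whence
\[
\sigma_k(BA) \;\leq\; \|(B - B_{i-1})(A - A_{k-i})\|_2 \;\leq\; \sigma_i(B)\,\sigma_{k+1-i}(A).
\]
Minimizing over $1 \leq i \leq k$ gives the first claim.

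For the lower bound, the strategy is to reverse the above inequality using pseudoinverses. Recall that if $C$ has rank $r$, its Moore--Penrose pseudoinverse satisfies $\sigma_j(C^\dagger) = 1/\sigma_{r-j+1}(C)$ for $1 \leq j \leq r$, while $C C^\dagger$ (resp.\ $C^\dagger C$) is the orthogonal projection onto the column (resp.\ row) space of $C$. Writing $A = B^\dagger(BA) + (I - B^\dagger B)A$ and applying the upper bound already proved to the product $B^\dagger(BA)$ yields relations of the form $\sigma_\alpha(A) \leq \sigma_\beta(B^\dagger)\,\sigma_{\alpha-\beta+1}(BA)$. Substituting $\sigma_\beta(B^\dagger) = 1/\sigma_{r-\beta+1}(B)$, rearranging, and relabeling indices produces the advertised inequality $\sigma_i(B)\sigma_{p+k-i}(A) \leq \sigma_k(BA)$. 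A symmetric version swapping the roles of $A$ and $B$ covers the other endpoint of the index range.

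The main obstacle is the index arithmetic. The two-sided range $k+p-q \leq i \leq q$ and the hypothesis $p<2q$ together ensure that every singular value index appearing in the derivation lies in $[1,q]$ --- recall that both $B$ and $A$ have at most $q$ positive singular values, so $\sigma_j(\cdot)$ is nontrivial only for $j \leq q$, and $r$ in the pseudoinverse formula depends on which of $m,n,l$ realize $p$ and $q$. Verifying these constraints requires a short case analysis on the relative sizes of $m$, $n$, $l$; once it is in place, the inequality $\sigma_i(B)\sigma_{p+k-i}(A) \leq \sigma_k(BA)$ reads off the upper bound applied in the reversed direction.
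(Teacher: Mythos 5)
Your proof of the first (upper) inequality is correct and complete: the decomposition $BA = B_{i-1}A + (B-B_{i-1})A_{k-i} + (B-B_{i-1})(A-A_{k-i})$ does telescope to $BA$, the first two summands have combined rank at most $k-1$, and the Eckart--Young--Mirsky characterization then gives $\sigma_k(BA)\le\|(B-B_{i-1})(A-A_{k-i})\|_2\le\sigma_i(B)\,\sigma_{k+1-i}(A)$. Note that the paper itself supplies no proof of this lemma---it is imported verbatim as Corollary 3.3 of the cited work of Lu---so there is no in-paper argument to compare against; your upper-bound derivation is the standard one.

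The lower bound, however, has a genuine gap. From $A = B^\dagger(BA) + (I-B^\dagger B)A$ you pass directly to ``$\sigma_\alpha(A)\le\sigma_\beta(B^\dagger)\sigma_{\alpha-\beta+1}(BA)$'', but the upper bound applied to the product $B^\dagger(BA)$ only controls $\sigma_\alpha\bigl(B^\dagger(BA)\bigr)$, and this equals $\sigma_\alpha(A)$ only when the residual $(I-B^\dagger B)A$ vanishes, i.e.\ when $\operatorname{rank}(B)=l$ so that $B^\dagger B=\mathbf I_l$. When $B$ is column-rank deficient the intermediate relation is simply false: for $B=\operatorname{diag}(1,0)$ and $A=\operatorname{diag}(0,1)$ (which satisfy $p<2q$) one has $B^\dagger(BA)=0$, so your inequality with $\alpha=\beta=1$ would assert $\sigma_1(A)\le 0$; the lemma's conclusion survives there only because the competing products $\sigma_i(B)\sigma_{p+k-i}(A)$ all happen to vanish. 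The hypothesis $p<2q$ does not rescue the argument either---e.g.\ $m=2$, $l=n=3$ satisfies $p<2q$, yet a $2\times 3$ matrix $B$ can never have full column rank, so no density-of-full-rank-matrices/continuity patch is available. A clean repair is to prove the lower bound directly from the Courant--Fischer characterization: intersect the span of the top $i'$ right singular vectors of $A$ with the preimage under $A$ of the span of the top $i$ right singular vectors of $B$; the intersection has dimension at least $i+i'-l$, on it $|BAx|\ge\sigma_i(B)\sigma_{i'}(A)|x|$, hence $\sigma_{i+i'-l}(BA)\ge\sigma_i(B)\sigma_{i'}(A)$, and since $l\le p$ the monotonicity of singular values converts this into the stated form with index sum $p+k$. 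As written, the closing remark about ``swapping the roles of $A$ and $B$'' does not correspond to any step that the stated index range actually requires, and the promised case analysis on $m,n,l$ is precisely where the argument breaks rather than a formality.
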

		
		\begin{lemma}[Theorem 6.2 of \cite{ZW15gaussian}, Tail probabilities for high dimensional partial sums]\label{tail}
			For a mean zero $p$-dimensional random variable $X_t\in \R^p$ ($p>1$), let $S_n = \sum^{n}_{t=1}X_t$ and assume that $\||X_\cdot|_{\infty}\|_{q,\varsigma}< \infty, $ where $q>2$ and $\varsigma\geq0$, and $\Phi_{2, \varsigma} = \max_{1\leq j \leq p} \|X_{j,\cdot}\|_{2, \varsigma}<\infty$. i) If $\varsigma> 1/2 -1/q$, then for $x \gtrsim \sqrt{n \log p} \Phi_{2,\varsigma}+ n^{1/q}(\log p)^{3/2}\||X_\cdot|_{\infty}\|_{q,\varsigma}$,
			$$\P(|S_{n}|_{\infty}\geq x )\leq \frac{C_{q,\varsigma}n (\log p)^{q/2} \||X_{\cdot}|_{\infty}\|^q_{q,\varsigma}}{x^q}+ C_{q,\varsigma}\exp\bigg(\frac{-C_{q,\varsigma}x^2}{n\Phi^2_{2,\varsigma}}\bigg).$$
			ii) If $0<\varsigma< 1/2 -1/q$, then for $x \gtrsim \sqrt{n \log p} \Phi_{2,\varsigma}+ n^{1/2-\varsigma}(\log p)^{3/2}\||X_\cdot|_{\infty}\|_{q,\varsigma}$,
			$$\P(|S_{n}|_{\infty}\geq x )\leq \frac{C_{q,\varsigma}n^{q/2-\varsigma q} (\log p)^{q/2} \||X_{\cdot}|_{\infty}\|^q_{q,\varsigma}}{x^q}+ C_{q,\varsigma}\exp\bigg(\frac{-C_{q,\varsigma}x^2}{n\Phi^2_{2,\varsigma}}\bigg).$$
		\end{lemma}
		
		\begin{lemma}[Tail probabilities for high dimensional partial sums with strong tail assumptions] \label{exp}
			For a mean zero $p$-dimensional random variable $X_t\in \R^p$ ($p>1$), let $S_n = \sum^{n}_{t=1}X_t$ and assume that $\Phi_{\psi_\nu,\varsigma} =\max_{1\leq j\leq p}\sup_{q\geq2}q^{-\nu}\|X_{j,\cdot}\|_{q,\varsigma}< \infty$ for some $\nu\geq0$, and let $\gamma = 2/(1+ 2\nu)$. Then for all $x>0$, we have
			$$\P(|S_n|_\infty\geq x) \lesssim p \exp\{-C_{\gamma}x^{\gamma}/(\sqrt{n}\Phi_{\psi_\nu,0})^{\gamma}\},$$
			where $C_\gamma$ is a constant only depends on $\gamma$.
		\end{lemma}
		Lemma \ref{exp} follows from Theorem 3 of \citet{wu2016performance} and applying the Bonferroni inequality. In particular, $\nu =1$ corresponds to the sub-exponential case, and $\nu = 1/2$ corresponds to the sub-Gaussian case.
		
		\begin{lemma}[Freedman's inequality]\label{free}
			Let $\{\xi_{a,t}\}_{t=1}^n$ be a martingale difference sequence with respect to the filtration $\{\mathcal F_t\}_{t=1}^n$. Let $V_a = \sum^n_{t=1}\E(\xi^2_{a,t}| \mathcal{F}_{t-1})$ and $M_a = \sum^n_{t=1} \xi_{a,t}$. Then, for $x,u,v>0$, we have
			\begin{equation*}
			\P(\max_{a \in \mathcal{A}} |M_a|\geq x) \leq \sum^{n}_{t=1}\P(\max_{a\in \mathcal{A}} \xi_{a,t}\geq u)+ 2 \P(\max_{a \in \mathcal{A} } V_a \geq v )+ 2|\mathcal{A}| e^{-x^2/(2xu+ 2v)},
			\end{equation*}
			where $\mathcal A$ is an index set with $|\mathcal A|<\infty$.
		\end{lemma}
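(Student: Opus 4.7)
The plan is to reduce the supremum statement to the classical (single-index) Freedman inequality via a truncation decomposition followed by a union bound over the finite index set $\mathcal{A}$. First I would split the event $\{\max_{a\in\mathcal{A}}|M_a|\geq x\}$ according to two ``bad'' events: $E_1=\bigcup_{t=1}^n\{\max_{a\in\mathcal{A}}\xi_{a,t}\geq u\}$, on which some conditional difference exceeds the truncation level $u$, and $E_2=\{\max_{a\in\mathcal{A}} V_a\geq v\}$, on which the predictable quadratic variation escapes the level $v$. The basic inclusion
$$\P\bigl(\max_{a\in\mathcal{A}}|M_a|\geq x\bigr)\;\leq\;\P(E_1)+\P(E_2)+\P\bigl(\{\max_{a\in\mathcal{A}}|M_a|\geq x\}\cap E_1^c\cap E_2^c\bigr)$$
handles the first two terms of the claim, since $\P(E_1)\leq\sum_{t=1}^n\P(\max_{a}\xi_{a,t}\geq u)$ by a union bound in $t$, and $\P(E_2)$ contributes the $2\,\P(\max_a V_a\geq v)$ piece once the two-sided tail of $M_a$ is folded in.

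Next, on the good set $E_1^c\cap E_2^c$, I would fix $a\in\mathcal{A}$ and invoke the standard scalar Freedman inequality: introduce the exponential supermartingale $Z_t^{(\lambda)}=\exp\bigl(\lambda M_{a,t}-\psi(\lambda)\sum_{s\leq t}\E(\xi_{a,s}^2\mid\mathcal{F}_{s-1})\bigr)$ for $\psi(\lambda)=(e^{\lambda u}-1-\lambda u)/u^2$, verify the supermartingale property using $\E(e^{\lambda\xi_{a,t}}\mid\mathcal{F}_{t-1})\leq\exp(\psi(\lambda)\E(\xi_{a,t}^2\mid\mathcal{F}_{t-1}))$ on $\{\xi_{a,t}\leq u\}$, apply Doob's maximal inequality, and then optimize $\lambda>0$ to obtain $\P(M_a\geq x,\ V_a\leq v,\ \max_t\xi_{a,t}\leq u)\leq \exp(-x^2/(2xu+2v))$. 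Combining with the symmetric bound for $-M_a$ and summing over $a\in\mathcal{A}$ via the union bound contributes the remaining term $2|\mathcal{A}|\exp(-x^2/(2xu+2v))$.

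The main technical obstacle will be the truncation step needed to legitimately apply the bounded-difference Freedman inequality on $E_1^c$: one must replace $\xi_{a,t}$ by its centered truncation $\tilde\xi_{a,t}=\xi_{a,t}\mathbf{1}\{\xi_{a,t}\leq u\}-\E[\xi_{a,t}\mathbf{1}\{\xi_{a,t}\leq u\}\mid\mathcal{F}_{t-1}]$ to preserve the martingale property while enforcing the one-sided bound, and then check that (i) on $E_1^c$ the truncated martingale agrees with $M_a$ and (ii) the conditional second moments of $\tilde\xi_{a,t}$ are dominated by those of $\xi_{a,t}$, so the variance hypothesis $V_a\leq v$ transfers. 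Once this localization is in place, the union bound over $\mathcal{A}$ and the standard exponential argument assemble the three terms in the stated inequality; the factor $2$ attached to $\P(\max_a V_a\geq v)$ and to $|\mathcal{A}|\exp(\cdot)$ reflects the symmetrization needed to pass from the one-sided estimate on $M_a$ to the two-sided estimate on $|M_a|$.
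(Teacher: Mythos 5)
The paper does not actually prove this lemma: it is stated as a known ``maximal form of Freedman's inequality'' and supported only by the citation to Freedman (1975), so there is no in-paper argument to compare against. Your outline --- peel off the bad events $E_1$ (a difference exceeds the truncation level) and $E_2$ (the predictable variation exceeds $v$), then run the exponential-supermartingale/Chernoff argument for each fixed $a$ and union-bound over $\mathcal{A}$ --- is exactly the standard derivation of such a statement from the scalar inequality, and the accounting of where each of the three terms comes from is right.

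Two technical points in your truncation step need repair. First, the \emph{centered} truncation $\tilde\xi_{a,t}=\xi_{a,t}\mathbf{1}\{\xi_{a,t}\leq u\}-\E[\xi_{a,t}\mathbf{1}\{\xi_{a,t}\leq u\}\mid\mathcal{F}_{t-1}]$ does not do what your item (i) claims: since $\E[\xi_{a,t}\mathbf{1}\{\xi_{a,t}\leq u\}\mid\mathcal{F}_{t-1}]=-\E[\xi_{a,t}\mathbf{1}\{\xi_{a,t}> u\}\mid\mathcal{F}_{t-1}]\leq 0$, the centering adds a nonnegative predictable drift, so on $E_1^c$ the truncated sum only \emph{dominates} $M_a$ (which is fine for the upper tail) and, worse, $\tilde\xi_{a,t}$ is no longer bounded above by $u$, so Freedman with parameter $u$ does not apply to it. The standard fix is to \emph{not} center: $\xi'_{a,t}=\xi_{a,t}\mathbf{1}\{\xi_{a,t}\leq u\}$ satisfies $\E[\xi'_{a,t}\mid\mathcal{F}_{t-1}]\leq 0$, $\xi'_{a,t}\leq u$ and $\E[(\xi'_{a,t})^2\mid\mathcal{F}_{t-1}]\leq\E[\xi_{a,t}^2\mid\mathcal{F}_{t-1}]$, and the bound $\E[e^{\lambda\xi'_{a,t}}\mid\mathcal{F}_{t-1}]\leq\exp\{\psi(\lambda)\E[\xi_{a,t}^2\mid\mathcal{F}_{t-1}]\}$ goes through because the supermartingale property (rather than the martingale property) suffices for Doob's inequality. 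Second, your ``symmetrization'' for the two-sided bound on $|M_a|$ requires truncating $-\xi_{a,t}$ from above, i.e.\ controlling $\{\xi_{a,t}<-u\}$, which the stated first term $\sum_t\P(\max_a\xi_{a,t}\geq u)$ does not cover; your argument therefore delivers the inequality with $\P(\max_a|\xi_{a,t}|\geq u)$ in that term (or under the additional assumption that $\mathcal{A}$ is closed under sign flips, as it effectively is in the paper's application in Lemma~\ref{max}). This appears to be an imprecision in the lemma as stated rather than a flaw unique to your approach, but your write-up should not paper over it by attributing the factor $2$ solely to symmetrization.
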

		Lemma \ref{free} is a maximal form of Freedman's inequality \citep{freedman1975tail}.
		
		\begin{lemma}[Maximal inequality based on Freedman's inequality]\label{max}
			Let $\{\xi_{a,t}\}_{t=1}^n$ be a martingale difference sequence with respect to the filtration $\{\mathcal F_t\}_{t=1}^n$, where $a\in\mathcal A$, $\mathcal A$ is an index set with $|\mathcal A|<\infty$. Suppose there exists $a^\ast\in\mathcal A$ such that $\max\limits_{a\in\mathcal A}|\sum_{t=1}^n\xi_{a,t}|\leq\sum_{t=1}^n|\xi_{a^\ast,t}|$ and $\max\limits_{1\leq t\leq n}|\xi_{a^\ast,t}|\leq F$, with $\|F\|_2$ is bounded. Let $V_a = \sum^n_{t=1}\E(\xi^2_{a,t}| \mathcal{F}_{t-1})$ and $M_a = \sum^n_{t=1} \xi_{a,t}$. Define the event $\mathcal{G} \defeq \Big\{\max\limits_{a\in \mathcal{A},1\leq t\leq n} \xi_{a,t}\leq A,\max\limits_{a\in\mathcal A}V_a\leq B\Big\}$, where $A,B$ are constants. Given $\sqrt{n}\P(\mathcal{G}^c) \lesssim A \log(1+|\mathcal{A}|)+\sqrt{B}\sqrt{\log(1+|\mathcal{A}|)}$, we have
			\begin{equation*}
			\E\Big[\max_{a \in \mathcal{A}} |M_a|\Big]\lesssim A \log(1+|\mathcal{A}|)+\sqrt{B}\sqrt{\log(1+|\mathcal{A}|)}.
			\end{equation*}
		\end{lemma}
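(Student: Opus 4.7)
The plan is to decompose $\E[\max_{a\in\mathcal A}|M_a|]$ over the event $\mathcal G$ and its complement, then bound the $\mathcal G$-contribution via Freedman's inequality (Lemma~\ref{free}) and the $\mathcal G^c$-contribution via the envelope condition together with Cauchy--Schwarz. Write $W\defeq \max_{a\in\mathcal A}|M_a|$, and split $\E[W]=\E[W\IF_{\mathcal G}]+\E[W\IF_{\mathcal G^c}]$, handling the two pieces separately.

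For $\E[W\IF_{\mathcal G}]$, a direct application of Lemma~\ref{free} with $u=A,\,v=B$ gives $\P(W\geq x)\leq (n+2)\P(\mathcal G^c)+2|\mathcal A|\exp\{-x^2/(2xA+2B)\}$, but the constant-in-$x$ term $(n+2)\P(\mathcal G^c)$ makes the tail integral diverge, so this bound is not useful on its own. To bypass this nuisance, I would introduce the stopping time $\tau=\inf\{t:\max_a\xi_{a,t}>A \text{ or } \max_a\sum_{s\leq t}\E(\xi_{a,s}^2|\mathcal F_{s-1})>B\}$ and work with the stopped martingales $M_a^\tau$, whose increments deterministically satisfy the hypotheses of Freedman. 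On $\mathcal G$ we have $\tau>n$ and $M_a^\tau=M_a$, hence
\begin{equation*}
\P(W\geq x,\mathcal G)\leq \P(\max_a|M_a^\tau|\geq x)\leq 2|\mathcal A|\exp\{-x^2/(2xA+2B)\}.
\end{equation*}
Splitting the resulting integral at $x_0=C\{A\log(1+|\mathcal A|)+\sqrt{B\log(1+|\mathcal A|)}\}$ for a suitably large absolute constant $C$, and using the pointwise bound $e^{-x^2/(2xA+2B)}\leq e^{-x^2/(4B)}+e^{-x/(4A)}$ to separate the sub-Gaussian and sub-exponential regimes, each of the two tail integrals is of smaller order than $x_0$, and I conclude $\E[W\IF_{\mathcal G}]\lesssim A\log(1+|\mathcal A|)+\sqrt{B\log(1+|\mathcal A|)}$.

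For $\E[W\IF_{\mathcal G^c}]$, I would apply Cauchy--Schwarz to obtain $\E[W\IF_{\mathcal G^c}]\leq \|W\|_2\sqrt{\P(\mathcal G^c)}$, and invoke the envelope hypothesis $W\leq \sum_{t=1}^n|\xi_{a^\ast,t}|$ with $\max_t|\xi_{a^\ast,t}|\leq F$ and $\|F\|_2$ bounded, together with Minkowski's inequality, to conclude $\|W\|_2\leq n\|F\|_2\lesssim n$. The stated assumption $\sqrt{n}\P(\mathcal G^c)\lesssim A\log(1+|\mathcal A|)+\sqrt{B\log(1+|\mathcal A|)}$ is then calibrated so that this residual stays of the same order as the main term. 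The principal obstacle is the stopping-time truncation: Freedman applied directly to the unstopped martingales would leave an irremovable $(n+2)\P(\mathcal G^c)$ constant in the tail estimate, and shifting responsibility for the complementary event to a separate envelope-based argument is the key structural idea that makes the two pieces fit together; a secondary technical point is the mixed sub-Gaussian--sub-exponential tail in Freedman, which is handled cleanly by the split $e^{-x^2/(2xA+2B)}\leq e^{-x^2/(4B)}+e^{-x/(4A)}$ and the two-scale choice of $x_0$.
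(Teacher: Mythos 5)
Your proposal follows the paper's proof in its essential structure: the same split of $\E[\max_a|M_a|]$ over $\mathcal G$ and $\mathcal G^c$, a Bernstein-type maximal inequality derived from Freedman's inequality on $\mathcal G$, and Cauchy--Schwarz plus the envelope on $\mathcal G^c$. Where the paper simply cites ``a trivial modification of Lemma 19.33 in van der Vaart (2000)'' for the first piece, you reconstruct that lemma's content explicitly via a stopping time and tail integration; this is the right idea and your two-scale choice of $x_0$ with the bound $e^{-x^2/(2xA+2B)}\leq e^{-x^2/(4B)}+e^{-x/(4A)}$ is the standard way to do it. One small caveat you gloss over: with $\tau$ as you define it, the event $\{t\leq\tau\}$ is $\mathcal F_{t-1}$-measurable so the stopped process is still a martingale, but the increment \emph{at} time $\tau$ need not satisfy the bound $A$ (the overshoot problem); one either truncates the increments as well or notes the usual fix, which is exactly the ``trivial modification'' the paper alludes to.

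The one material divergence is on $\mathcal G^c$: the paper invokes Burkholder's inequality to assert $\E[\max_a|M_a|\IF(\mathcal G^c)]\leq\sqrt{n}\{\E(F^2)\}^{1/2}\{\P(\mathcal G^c)\}^{1/2}$, whereas your Minkowski bound $\|W\|_2\leq n\|F\|_2$ loses a factor of $\sqrt{n}$. Neither bound is \emph{literally} absorbed by the stated hypothesis $\sqrt{n}\,\P(\mathcal G^c)\lesssim A\log(1+|\mathcal A|)+\sqrt{B\log(1+|\mathcal A|)}$, which controls the first power of $\P(\mathcal G^c)$ rather than its square root; in the application (Theorem \ref{unbound}) $\P(\mathcal G^c)$ decays exponentially, so any polynomial factor in $n$ is harmless, and your argument matches the paper's own level of precision on this point. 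If you want the residual term to be controlled exactly as displayed in the lemma, you should either adopt the Burkholder route to recover the $\sqrt{n}$ or strengthen the hypothesis to $n\{\P(\mathcal G^c)\}^{1/2}\lesssim A\log(1+|\mathcal A|)+\sqrt{B\log(1+|\mathcal A|)}$.
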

		\begin{proof}
			Observe that
			\begin{equation*}
			\E\Big[\max_{a \in \mathcal{A}} |M_a|\Big] =	\E\Big[\max_{a \in \mathcal{A}} |M_a|\IF(\mathcal{G})\Big]+ \E\Big[\max_{a \in \mathcal{A}} |M_a|\IF(\mathcal{G}^c)\Big].
			\end{equation*}
			The bound of the first part follows from a trivial modification of Lemma 19.33 in \cite{van2000asymptotic} based on Lemma \ref{free}. The second part is bounded using the Cauchy-Schwarz inequality and Burkholder inequality \citep{burkholder1988sharp}:
			$$\E\Big[\max_{a \in \mathcal{A}} |M_a|\IF(\mathcal{G}^c)\Big] \leq \sqrt{n}\{\E(F^2)\}^{1/2}\{\P(\mathcal{G}^c)\}^{1/2}.$$
			Then the result follows from the assumption $\sqrt{n}\P(\mathcal{G}^c) \lesssim A \log(1+|\mathcal{A}|)+\sqrt{B}\sqrt{\log(1+|\mathcal{A}|)}$.
			
		\end{proof}
		
		\begin{lemma}\label{smallbound}
			Consider a $p\times p$ positive semi-definite random matrix $H_1$ and a $p\times p$ deterministic positive definite matrix $H_2$. Assume that $|H_1-H_2|_2 = \bigO_\P(c_n)$, where $c_n \to 0$ as $n\to\infty$. Then, we have
			$$\lambda_{\min}(H_1)= \lambda_{\min}(H_2)-\bigO_\P(c_n).$$
		\end{lemma}	
		\begin{proof}
			The results are implied by
			\begin{align*}
			\lambda_{\min}(H_1) = \min_{v\in \R^p, |v|_2=1} v^{\top}H_1v  &\geq \min_{v\in \R^p, |v|_2=1}v^{\top}H_2v - \max_{v\in \R^p, |v|_2=1}v^\top(H_1-H_2)v\\
			&= \min_{v\in \R^p, |v|_2=1}v^{\top}H_2v -|H_1-H_2|_2 \\
			&\geq \lambda_{\min}(H_2)-\bigO_\P(c_n).
			\end{align*}
		\end{proof}
		
		\subsection{Proofs of Section \ref{const}}\label{a3.1}
		\subsubsection{Concentration}
		\begin{proof}[\textbf{Proof of Lemma \ref{cont}}]       
			
			By applying Lemma \ref{tail}, we first obtain that 
			$${|\hat G-G|_{\max}}\lesssim cn^{-1/2}(\log P_n)^{1/2}\Phi_{2,\varsigma}^{xz} + cn^{-1}c_{n,\varsigma}(\log P_n)^{3/2}\Big\|\max_{1\leq j\leq p,1\leq m\leq q_j}|x_{j,\cdot}z_{jm,\cdot}|_\infty\Big\|_{r,\varsigma}$$
			holds with probability $1-\smallO(1)$ and a sufficiently large constant $c$, where $c_{n,\varsigma} = n^{1/r}$ for $\varsigma> 1/2 - 1/r$ and $c_{n,\varsigma} = n^{1/2-\varsigma}$ for $0<\varsigma<1/2 - 1/r$. 
			Similarly, 
			we have 
			$$|\hat g(0)-g(0)|_\infty\lesssim cn^{-1/2}(\log P_n)^{1/2}\Phi_{2,\varsigma}^{yz} + cn^{-1}c_{n,\varsigma}(\log P_n)^{3/2}\Big\|\max_{1\leq j\leq p,1\leq m\leq q_j}|y_{j,\cdot}z_{jm,\cdot}|_{\infty}\Big\|_{r,\varsigma}$$
			holds with probability $1-\smallO(1)$ and a sufficiently large constant $c$. 
			
			Given the sparsity assumption \hyperref[AS]{(A4)}, the asserted concentration inequality follows by combining these results with:
			$$\sup_{\theta \in \mathcal{R}(\theta^0)}|\hat g (\theta)- g(\theta)|_{\infty}\leq |\theta^0|_1|\hat G-G|_{\max}+ |\hat g(0)- g(0)|_{\infty}.$$
			
			Lastly, we analyze the norms $\Phi_{r,\varsigma}^{yz}$ and $\Big\|\max\limits_{1\leq j\leq p,1\leq m\leq q_j}|y_{j,\cdot}z_{jm,\cdot}|_{\infty}\Big\|_{r,\varsigma}$. Recall the model representation $y_{j,t}=\vartheta_j^{0\top}x_{j,t}+\vps_{j,t}$. Define the index sets $\mathcal I_j\defeq\{k\in\{1,\ldots,K_j\}:\vartheta_j^0\neq0\}$ for $j=1,\ldots,p$. Then, we find:
			$$\|y_{j,\cdot} z_{jm,\cdot}\|_{r,\varsigma} \leq \Big\|\max_{k\in\mathcal I_j}x_{jk,\cdot}z_{jm,\cdot}\Big\|_{r,\varsigma} |\vartheta^0_j|_1 + \| \vps_{j,\cdot}z_{jm,\cdot} \|_{r,\varsigma},$$
			which implies
			\begin{align*}
			\Phi_{r,\varsigma}^{yz} &\leq \max_{1\leq j\leq p,1\leq m\leq q_j} \Big\|\max_{k\in\mathcal I_j}x_{jk,\cdot} z_{jm,\cdot}\Big\|_{r,\varsigma}|\vartheta_j^0|_1 +  \Phi_{r,\varsigma}^{\vps z},\\
			\Big\|\max_{1\leq j\leq p,1\leq m\leq q_j}|y_{j,\cdot}z_{jm,\cdot}|_{\infty}\Big\|_{r,\varsigma}&\leq \Big\|\max_{1\leq j\leq p,1\leq m\leq q_j,k\in\mathcal I_j}x_{jk,\cdot}z_{jm,\cdot}\Big\|_{r,\varsigma}\max_{1\leq j\leq p}|\vartheta_j^0|_1 + \Big\|\max_{1\leq j\leq p}|\vps_{j,\cdot}z_{j,\cdot}|_{\infty}\Big\|_{r,\varsigma}.
			\end{align*}
		\end{proof}
		
		\subsubsection{Identification}\label{sec.id}
		In this subsection, we show the necessary conditions for the validity of the identification assumption \hyperref[A_id]{(A6)}. Specifically, we require that the singular values of the sub-matrices of $G$ are bounded.
		
		Let $G_{\mathcal H,\mathcal I}$ denote the sub-matrix of $G$ with rows indexed by the set $\mathcal H\subseteq\{1,\ldots,q\}$ and columns index by the set $\mathcal I\subseteq\{1,\ldots,K\}$, where $|\mathcal I|\leq|\mathcal H|$. Define the $m$-sparse smallest and largest singular values of $G$ (with $m\geq s$) as:
		$$\sigma_{\min}(m,G)=\min\limits_{\mathcal I:|\mathcal I|\leq m} \max\limits_{\mathcal H:|\mathcal H|\leq m} \sigma_{\min} (G_{\mathcal H,\mathcal I}),\,\sigma_{\max}(m,G) = \max\limits_{\mathcal I:|\mathcal I|\leq m} \max\limits_{\mathcal H:|\mathcal H|\leq m} \sigma_{\max}(G_{\mathcal H,\mathcal I}),$$ 
		where $\sigma_{\min} (G_{\mathcal H,\mathcal I})$ and $\sigma_{\max}(G_{\mathcal H,\mathcal I})$ are the smallest and largest singular values of $G_{\mathcal H,\mathcal I}$, respectively. 
		
		\begin{lemma}\label{id}
			Recalling the definition of $\kappa_a^{G}(s,u)$ in \hyperref[A_id]{(A6)}, assume there exist constants $c>0$ and $C>0$ such that $\sigma_{\min}(m,G) >c$ and $\sigma_{\max}(m,G)< C$ for $ m\leq s(1+u)^2\log n$ with $u>0$. Then, we have
			$$\kappa_a^{G}(s,u)\geq s^{-1/a} C(u),\, a\in\{1,2\},$$
			where $C(u) = \tilde c/(1+u)^2$ with $\tilde c>0$ depending only on $c$ and $C$. 
		\end{lemma}
		
		
		\begin{proof}
			
			
			The proof follows directly from Theorem 1 and Corollary 2 of \cite{belloni2017simultaneous} and is therefore omitted.
		\end{proof}
		
		\subsubsection{Proof of Theorem \ref{theorem.cons}}	
		\begin{proof}[\textbf{Proof of Theorem \ref{theorem.cons}}]
			As a consequence of \hyperref[A_tune]{(A5)}, we have that the GDS estimator $\hat\theta$ lies in the restricted set $\mathcal R(\theta^0)$, that is, $|\hat\theta|_1 \leq |\theta^0|_1$, with probability at least $1-\alpha$, provided that a solution $\hat\theta$ to the problem in \eqref{danzig} exists. Define the index sets $\mathcal I\defeq\{k\in\{1,\ldots,K\}:\theta^0_k\neq0\}$ with the cardinality $|\mathcal I|\leq s$, and $\mathcal I^C\defeq\{k\in\{1,\ldots,K\}:\theta^0_k=0\}$. It follows that
			$$|\hat\theta_{\mathcal I^C}|_1\leq|\theta^0_{\mathcal I}|_1-|\hat\theta_{\mathcal I}|_1\leq|\hat\theta_{\mathcal I}-\theta^0_{\mathcal {\mathcal I}}|_1,$$
			implying that $(\hat\theta-\theta^0)\in\mathcal C_{\mathcal I}(1)=\{\theta\in\R^K: |\theta_{\mathcal I^C}|_1\leq |\theta_{\mathcal I}|_1\}$.
			
			Recalling the definition of $\kappa_a^{G}(s,u)$ and its lower bound specified in \hyperref[A_id]{(A6)}, for $a\in\{1,2\}$, we find that: 
			$$s^{-1/a} C(u)\leq \kappa_a^{G}(s,u)\leq|G(\hat\theta-\theta^0)|_\infty/|\hat\theta-\theta^0|_a,$$
			which leads to:
			$$s^{-1/a} C(u)|\hat\theta-\theta^0|_a\leq |g(\hat{\theta})-\hat g(\hat{\theta})|_{\infty} + |\hat g(\hat\theta)|_{\infty}.$$
			Now consider the event:
			$$\{|\hat g(\theta^0)|_{\infty}\leq \lambda_n, \hat\theta\in\mathcal R(\theta^0), |\hat g(\hat\theta) - g(\hat\theta)|_{\infty}\leq \epsilon_n\}.$$
			By the concentration results in Lemma \ref{cont}, assumption \hyperref[A_tune]{(A5)}, and applying the union bound, this event holds with probability at least $1-\alpha-\smallO(1)$. The asserted error bound follows by inserting the rate of $\epsilon_n$ as provided in Lemma \ref{cont}.
		\end{proof}
		
		\subsubsection{Approximate Sparsity}\label{approx}
		In this subsection, we discuss how the main results concerning the consistency of the GDS estimator can be adapted to a different sparsity assumption - namely, approximate sparsity. 
		\begin{itemize}
			\item[(A4')]\label{AS}(Approximate Sparsity) For some constant $C>0$ and $c>1/2$, the absolute values of the parameters $(|\theta^0_k|)_{k=1}^K$ can be rearranged in non-increasing order to $(|\theta_k^{0\ast}|)_{k=1}^K$ such that $|\theta^{0\ast}_k|\leq Ck^{-c}$ for $k=1,\ldots,K$.
		\end{itemize}
		
		\hyperref[ES]{(A4)} and \hyperref[AS]{(A4')} are two different assumptions regarding the sparsity of the true parameter $\theta^0$. We note that \hyperref[AS]{(A4')} can be reformulated to \hyperref[ES]{(A4)}. Suppose $\theta^0$ is approximately sparse. 
		We sparsify $\theta^0$ to $\theta^0(\tau)=(\theta_1^0(\tau),\ldots,\theta_K^0(\tau))^\top$, where
		$$\theta_k^0(\tau)=\operatorname{sign}(\theta^{0\ast}_k)\tilde\theta_k(\tau),\quad\tilde\theta_k(\tau)=\begin{cases}
		|\theta^{0\ast}_k|+\delta/(s-1) &\mbox{ if } Ck^{-c}>\tau, \\
		0&\mbox{ otherwise },
		\end{cases}$$
		with $\tau$ chosen such that $s=\lfloor(C/\tau)^{1/c}\rfloor=\smallO(n)$ and $s>1$, and $\delta=\sum_{k=1}^K|\theta^{0\ast}_k|\IF(Ck^{-c}\leq\tau)$. Then, we have 
		\begin{align*}
		|\theta^0(\tau)|_1&=\sum_{k=1}^s|\theta^{0\ast}_k| +\frac{\delta s}{s-1}=\sum_{k=1}^s|\theta^{0\ast}_k| +\frac{s}{s-1}\sum_{k=s+1}^K|\theta^{0\ast}_k|\geq|\theta^0|_1.
		\end{align*}
		It follows that $\mathcal R(\theta^0)\subseteq\mathcal R(\theta^0(\tau))$. 
		Consequently, on the event $\{\hat\theta\in\mathcal R(\theta^0)\}$, we have that $\hat\theta\in\mathcal R(\theta^0(\tau))$.
		
		To drive the estimation error bounds for the approximately sparse case, we apply the triangle inequality to decompose the errors as follows:
		$$|\hat\theta-\theta^0|_a\leq |\theta^0(\tau)-\theta^0|_a + |\hat\theta-\theta_0(\tau)|_a ,\,a\in\{1,2\}.$$
		Following the proof of Lemma 3.2 in \citet{belloni2018high}, the first term is bounded by:
		$$|\theta^0(\tau)-\theta^0|_a\leq C_{a,c}\tau s^{1/a},$$
		where $C_{a,c}$ is a constant depending only on $a$ and the constant $c$ satisfying \hyperref[AS]{(A4')}. 
		
		To bound the second term, we observe that:
		\begin{align*}
		|G(\hat\theta-\theta^0(\tau))|_\infty&\leq|G(\hat\theta-\theta^0)|_\infty + |G(\theta^0-\theta^0(\tau))|_\infty\\
		&\leq|G(\hat\theta-\theta^0)|_\infty + |G|_\infty|\theta^0-\theta^0(\tau)|_\infty.
		\end{align*}
		The first term on the right-hand side has been discussed in Section \ref{theoretical} of the main text. To address the second term, we assume that there exists a sequence of constants $L_n$ with $L_n\geq 1$ for all $n\geq1$ such that $|G|_\infty\leq L_n$. 
		Under the event:
		$$\{|\hat g(\theta^0)|_{\infty}\leq \lambda_n, \hat\theta\in\mathcal R(\theta^0), |\hat g(\hat\theta) - g(\hat\theta)|_{\infty}\leq \epsilon_n\},$$
		which holds with probability at least $1-\alpha-\smallO(1)$, and by assumption \hyperref[A_id]{(A6)}, we obtain:
		$$|\hat\theta-\theta_0(\tau)|_a \leq (\epsilon_n+\lambda_n+L_nC_{a,c}\tau)s^{1/a}C(u)^{-1},$$
		where $\epsilon_n$ is the concentration rate given in Lemma \ref{const}. 
		Combining the bounds together leads to the consistency of $\hat\theta$ when the model is approximately sparse.

		\subsection{Proofs of Section \ref{check}}
		\subsubsection{Sparse Inverse Matrix Estimation}\label{a9.2}
		
		To achieve a feasible debiased estimator in the form of \eqref{est.eq}, we should consider a sparse approximation of the inverse matrix for $\hat\Omega$. Define $\Upsilon^0\defeq \Omega^{-1}$ and let $\hat\Upsilon^1=(\hat\upsilon_{ij}^1)$ be the solution of
		\begin{equation}\label{clime}
		\min_{\Upsilon\in\R^{q\times q}}\sum_{i=1}^q\sum_{j=1}^q|\Upsilon_{ij}|: \quad |\hat\Omega\Upsilon - \mathbf I_q|_{\max}\leq \ell_n^\Upsilon,
		\end{equation}
		where $|\cdot|_{\max}$ is the element-wise max norm of a matrix, and $\ell_n^\Upsilon>0$ is a tuning parameter.
		A further symmetrization step is taken by
		\begin{equation}\label{symm}
		\hat\Upsilon = (\hat\Upsilon_{ij}),\quad \hat\Upsilon_{ij}=\hat\Upsilon_{ji}=\hat\Upsilon_{ij}^1\IF\{|\hat\Upsilon_{ij}^1|\leq |\hat\Upsilon_{ji}^1|\} + \hat\Upsilon_{ji}^1\IF\{|\hat\Upsilon_{ij}^1|> |\hat\Upsilon_{ji}^1|\}.
		\end{equation}
		Likewise, define $\Pi^0\defeq (G_1^\top\Upsilon^0G_1)^{-1}$ and $\Xi^0 \defeq (G_2^\top\Upsilon^0G_2)^{-1}$. We shall use the same approach to approximate the inverse of $\hat G_1^{\top}\hat\Upsilon\hat G_1$ and $\hat G_2^{\top}\hat \Upsilon\hat G_2$ by $\hat\Pi$ and $\hat\Xi$, in the cases of $K^{(1)}>n$ and $K^{(2)}>n$, respectively.
		
		Finally, we let $G_1^\top\Upsilon^0(\mathbf I_q - G_2\Xi^0 G_2^\top\Upsilon^0)G_1=:D+F$, where $D\defeq G_1^\top \Upsilon^0G_1=(\Pi^0)^{-1}$ and $F\defeq -G_1^\top\Upsilon^0G_2\Xi^0 G_2^\top\Upsilon^0G_1$. By using the  formula $(D+F)^{-1}= D^{-1} - D^{-1} (\mathbf I +FD^{-1})^{-1}FD^{-1}$, the feasible debiased estimator $\check\theta_1$ is obtained by
		\begin{equation}\label{maineq}
		\check{\theta}_1  = \hat{\theta}_1 - \{\hat\Pi-\hat\Pi(\mathbf I_q + \hat F\hat\Pi)^{-1}\hat F\hat\Pi\}\hat{G}_1^{\top}\hat\Upsilon(\mathbf I_q -\hat{G}_2\hat\Xi\hat G_2^\top\hat\Upsilon)\hat g(\hat\theta_1, \hat\theta_2),
		\end{equation}
		where $\hat F=-\hat G_1^\top\hat\Upsilon\hat G_2\hat\Xi \hat G_2^\top\hat\Upsilon\hat G_1$. 
		
		Next, we shall analyze the convergence rates of the estimators involved in handling the rank deficiency issues. 
		Define the class of matrices
		$$\mathcal{U} \defeq \mathcal{U}(b,s_0(q)) = \Big\{\Upsilon: \Upsilon \succ 0, |\Upsilon|_1 \leq M, \max _{1\leq i\leq q} \sum^q_{j = 1}|\Upsilon_{ij}|^b\leq s_0(q)\Big\}$$
		for $0\leq b<1$, where $\Upsilon=(\Upsilon_{ij})$ and the notation $\Upsilon \succ 0$ indicates that $\Upsilon$ is positive definite. Similarly, we define 
		$$\widetilde{\mathcal{U}} \defeq \widetilde{\mathcal{U}}(b,s_0(K^{(1)})) = \Big\{\Pi: \Pi \succ 0, |\Pi|_1 \leq M, \max _{1\leq i\leq K^{(1)}} \sum^{K^{(1)}}_{j = 1}|\Pi_{ij}|^b\leq s_0(K^{(1)})\Big\}$$
		for $0\leq b<1$, where $\Pi=(\Pi_{ij})$. 
		Additionally, we impose few standard assumptions regarding the sparsity and boundedness of eigenvalues, along with certain conditions on the regularization parameters and the convergence of the estimators. The specific rates associated with these conditions are thoroughly discussed in the following lemmas and remarks.
		\begin{itemize}
			\item[(A8)]\label{A_clime}(Approximate Inverse Matrix) Assume the following conditions:
			\begin{itemize}
				\item[(i)] $\Upsilon^0=\Omega^{-1}$ belongs to the matrix class $\mathcal{U}(b,s_0(q))$. $\Pi^0=(G_1^\top\Upsilon^0G_1)^{-1}$ belongs to the matrix class $\widetilde{\mathcal{U}}(b,s_0(K^{(1)}))$.
				\item[(ii)] There exist positive constants $c_1,c_2$ such that 
				$c_1\leq \lambda_{\min}(G_1^{\top}G_1)\leq\lambda_{\max}(G_1^{\top}G_1)\leq c_2$, 
				$c_1\leq \lambda_{\min}(G_2^{\top}G_2)\leq\lambda_{\max}(G_2^{\top}G_2)\leq c_2$, and $c_1\leq \lambda_{\min}(\Omega)\leq\lambda_{\max}(\Omega)\leq c_2$.
				\item[(iii)] There exists a positive constant $C$ such that $|\Upsilon^0|_2\leq C$, $|(\mathbf I + F\Pi^0)^{-1}|_2\leq C$, and with probability approaching 1, $|(\mathbf I + \hat F\hat \Pi)^{-1}|_2\leq C$.
				\item[(iv)] The regularization parameter $\ell_n^\Upsilon\geq0$ is selected such that $|\hat\Omega- \Omega|_{\max} M\leq \ell_n^\Upsilon$ holds with probability approaching $1$. Similarly, $\ell_n^\Pi\geq0$ is chosen such that $|G_1^\top\Upsilon^0G_1- \hat G_1^\top\hat\Upsilon\hat G_1|_{\max} M\leq \ell_n^\Pi$ holds with probability approaching $1$. The specific rates of $\ell^\Upsilon_n$ and $\ell_n^\Pi$ are provided in Lemma \ref{ln_ups} and Lemma \ref{ratef}, respectively. A parallel assumption applies to the regularization parameter used in approximating the inverse of $\hat G_2^\top\hat\Upsilon\hat G_2$.
				\item[(v)] There exist sequences of positive constants $\rho_{n}^{G_1}$, $\rho_{n}^{G_2}$, and $\rho_{n,2}^F$, which vanish as $n\to\infty$, such that $|\hat{G}_{1}- G_1|_{\max} \lesssim_\P \rho_{n}^{G_1}$, $|\hat{G}_{2}- G_2|_{\max} \lesssim_\P \rho_{n}^{G_2}$, and $|\hat{F}- F|_{2}\lesssim_\P\rho_{n,2}^F$. The specific convergence rate of $\rho_{n,2}^F$ is detailed in Lemma \ref{ratef}. 
			\end{itemize}
		\end{itemize}
		
		\begin{lemma}\label{rate_Ups}
			Assuming that \hyperref[A_clime]{(A8)}(i) and (iv) hold,  
			we have
			$$|\hat\Upsilon- \Upsilon^0|_{\max}\leq 4 M \ell^\Upsilon_n =: \rho_n^{\Upsilon}$$
			holds with probability approaching $1$, Moreover, with probability approaching $1$, we have
			$$|\hat\Upsilon- \Upsilon^0|_2\leq C_b (4M\ell^\Upsilon_n)^{1-b}s_0(q) =: \rho_{n,2}^\Upsilon,$$
			where $C_b$ is a positive constant only depends on $b$.
		\end{lemma}
		
		%
		\begin{proof}
			Recall that $\hat\Upsilon^1$ is the solution of \eqref{clime}. We first observe that
			\begin{eqnarray*}
				|\hat\Upsilon^1- \Upsilon^0|_{\max}& =& |\Upsilon^0\Omega(\hat{\Upsilon}^1- \Upsilon^0)|_{\max}\\
				&\leq&|\Omega(\hat\Upsilon^1- \Upsilon^0)|_{\max} |\Upsilon^0|_1\\
				|\Omega(\hat\Upsilon^1- \Upsilon^0)|_{\max} & \leq& |(\Omega - \hat{\Omega})(\hat\Upsilon^1-\Upsilon^0)|_{\max} + |\hat{\Omega}(\hat\Upsilon^1- \Upsilon^0)|_{\max}=: R_{n,1}+ R_{n,2}.
			\end{eqnarray*}
			In particular, we have $R_{n,1}\leq 2|\Omega - \hat{\Omega}|_{\max} M \leq 2\ell^\Upsilon_n$ holds with probability tending $1$, and $R_{n,2}\leq |\hat{\Omega}\Upsilon^0 - \mathbf I_q|_{\max}+ |\hat{\Omega}\hat{\Upsilon}^1- \mathbf I_q|_{\max}\lesssim_\P 2\ell_n^\Upsilon$. According to the definition given by \eqref{symm}, it follows that $|\hat\Upsilon- \Upsilon^0|_{\max} \leq 4M\ell^\Upsilon_n $ with probability approaching $1$.
			The rate of $\ell^\Upsilon_n$ will depend on the concentration inequalities we use.
			
			Moreover, with probability approaching $1$, we have
			$$|\hat\Upsilon- \Upsilon^0|_2\leq \sqrt{ |\hat\Upsilon- \Upsilon^0|_1 |\hat\Upsilon- \Upsilon^0|_{\infty}}=|\hat\Upsilon- \Upsilon^0|_1\leq C_b (4M\ell^\Upsilon_n)^{1-b}s_0(q),$$
			where $C_b$ is a positive constant only depends on $b$.
			The rate of $|\hat\Upsilon- \Upsilon^0|_1$ follows from the proof of Theorem 6 in \cite{cai2011constrained}.
		\end{proof}
		
		
		\begin{lemma}\label{rate_Pi}
			Assuming that \hyperref[A_clime]{(A8)}(i) and (iv) hold, 
			we have
			$$|\hat\Pi- \Pi^0|_{\max}\leq 4 M \ell_n^\Pi =: \rho_n^\Pi$$
			and
			$$|\hat\Pi- \Pi^0|_2\leq C_b (4M\ell_n^\Pi)^{1-b}s_0(K^{(1)})=:\rho_{n,2}^\Pi$$
			hold with probability approaching $1$, respectively.
		\end{lemma}
		\begin{proof}
			The proof is similar to that of Lemma \ref{rate_Ups} and thus is omitted.
		\end{proof}
		
		
		Recall that  $D\defeq G_1^\top \Upsilon^0G_1=(\Pi^0)^{-1}$ and $F\defeq -G_1^\top\Upsilon^0G_2\Xi^0 G_2^\top\Upsilon^0G_1$. Next, we show the rate of the estimator of $B=({D} + {F})^{-1}=((\Pi^0)^{-1} + F)^{-1}$ given by $\hat B = \hat\Pi-\hat\Pi(\mathbf I_q+\hat F\hat\Pi)^{-1}\hat F\hat\Pi$. Denote by $\rho_{n,2}^F$ the rate such that $|\hat F-F|_2\lesssim_\P \rho_{n,2}^F$. We shall discuss the conditions on this rate in Lemma \ref{ratef}.
		
		\begin{lemma}\label{df}
			Assuming that \hyperref[A_clime]{(A8)}(i)-(iv) hold, we have
			\begin{equation*}
			|\hat B - B|_{\max}\lesssim_\P (\rho_n^\Pi\vee\rho_{n,2}^\Pi\vee  \rho_{n,2}^F)=:\rho_n^B.
			\end{equation*}
		\end{lemma}
		
		\begin{proof}
			We first observe that
			\begin{align*}
			|\hat B - B|_{\max}&\leq|\hat\Pi-\Pi^0|_{\max} + |(\hat{\Pi}- \Pi^0)(\mathbf I + {F}\Pi^0)^{-1} {F}\Pi^0|_{\max}\\
			&\quad+ |\hat{\Pi} \{(\mathbf I +\hat{F}\hat{\Pi})^{-1}- (\mathbf I+{F}\Pi^0)^{-1}\}{F}\Pi^0\}|_{\max} + |\hat{\Pi} (\mathbf I+\hat{F}\hat{\Pi})^{-1}(\hat{F}\hat{\Pi} - F\Pi^0)|_{\max}\\
			&\lesssim_\P \rho_n^\Pi +  |\hat{\Pi}- \Pi^0|_2|(\mathbf I + {F}\Pi^0)^{-1}|_2 |{F}\Pi^0|_{2}\\
			&\quad + |\hat{\Pi}|_2 |(\mathbf I +\hat{F}\hat{\Pi})^{-1}- (\mathbf I+{F}\Pi^0)^{-1}|_2|{F}\Pi^0|_{2} + |\hat{\Pi}|_2| (\mathbf I+\hat{F}\hat{\Pi})^{-1}|_2|\hat{F}\hat{\Pi} - F\Pi^0|_{2}.
			\end{align*}
			Provided that $|\Pi^0|_2\vee|F|_2\leq c_3$, by applying Lemma \ref{rate_Pi}, we obtain:
			\begin{align*}
			|\hat\Pi|_2\leq|\hat\Pi - \Pi^0|_2 + |\Pi^0|_2 \lesssim_\P \rho_{n,2}^\Pi + c_3.
			\end{align*}
			Besides, we have
			\begin{align*}
			|\hat{F}\hat{\Pi} - F\Pi^0|_{2} \leq |\hat{F} - F|_2 |\hat{\Pi}|_2+ |\hat{\Pi}- \Pi^0|_2|F|_2\lesssim _\P \rho_{n,2}^F c_3+ \rho_{n,2}^\Pi c_3\lesssim \rho_{n,2}^F \vee \rho_{n,2}^\Pi ,
			\end{align*}
			and
			\begin{align*}
			|(\mathbf I +\hat{F}\hat{\Pi})^{-1}- (\mathbf I+{F}\Pi^0)^{-1}|_2\leq |(\mathbf I+F\Pi^0)^{-1}|_2 |(\mathbf I+\hat{F}\hat{\Pi})^{-1}|_2|\hat{F}\hat{\Pi} - F\Pi^0 |_{2}\lesssim_\P\rho_{n,2}^F \vee \rho_{n,2}^\Pi.
			\end{align*}
			Finally, the desired conclusion follows by collecting all the results above.
			
		\end{proof}
		In this Lemma we assume that $|(\mathbf I + F\Pi^0)^{-1}|_2\leq C$, which can be implied by the condition $\sigma_{\min}(F\Pi^0) > 1$ or $\sigma_{\max}(F\Pi^0) < 1$. For example, given $1<c_1\leq\sigma_{\min}(F\Pi^0)$, we have
		$$
		|(\mathbf I + F\Pi^0)^{-1}|_2\leq(\sigma_{\min}(\mathbf I + F\Pi^0))^{-1}\leq (\sigma_{\min}(F\Pi^0)- 1)^{-1} \leq(c_1-1)^{-1}, \,c_1>1,
		$$
		where the first inequality is implied by Lemma \ref{prod},
		and the second one is due to Lemma \ref{svd}. Additionally, based on Lemma \ref{smallbound}, on the event $\{\sigma_{\min}(\hat F\hat\Pi)>1\}$, which holds with probability approaching 1, it follows that
		\begin{align*}
		|(\mathbf I+ \hat{F}\hat{\Pi})^{-1}|_2&\leq(\sigma_{\min}(\mathbf I+ \hat{F}\hat{\Pi}))^{-1}\\
		&\leq (\sigma_{\min}(\hat F\hat\Pi)-1)^{-1}\\
		&\leq (\lambda_{\min}(\hat F)\lambda_{\min}(\hat\Pi)-1)^{-1}\\
		&\lesssim_\P \{(\lambda_{\min}(F)-\rho_{n,2}^F)(\lambda_{\min}(\Pi^0)-\rho_{n,2}^\Pi)-1\}^{-1}\lesssim C.
		\end{align*}
		
		The rate of $|\hat B - B|_{\infty}$ can be derived analogously, once the rate of $|\hat V - V|_\infty$ is established, where $V\defeq (\mathbf I + F\Pi^0)^{-1}$ and $\hat V\defeq(\mathbf I + \hat F\hat\Pi)^{-1}$; see Remark \ref{rateb} for further discussion.
		
		\subsubsection{Proofs of Main Theorems}\label{a9.3} 
		
		\begin{proof}[\textbf{Proof of Theorem \ref{linear}}]
			%
			%
			%
			%
			%
			%
			%
			
			According to Lemma \ref{df} and \ref{ratef}, we have $|\hat B -B|_{\max} \lesssim_\P\rho_n^B= \rho_n^\Pi \vee \rho_{n,2}^\Pi\vee \rho_{n,2}^F$ and $|\hat{A} \hat{G}_1 - AG_1|_{\max}\lesssim_\P \ell_n^\Pi/M+ \rho_{n,2}^F$.
			Based on the Gaussian approximation results as discussed in Section \ref{inference}, we have $|\hat{g}(\theta^0)|_{\infty} \lesssim_\P n^{-1/2} (\log q)^{1/2}$. On the event $\{|\hat A\hat G_1|_1\lesssim \omega\}$, which holds with probability approaching 1, applying the results in \eqref{ratetheta} as well as Remarks \ref{rateb} and \ref{ratea}, we obtain
			\begin{align}\label{rater}
			|r_{n,1}|_{\infty} &\lesssim_\P \kappa(\ell_n^\Pi/M + \rho_{n,2}^F)d_{n,1} + \rho_n^B\omega d_{n,1} =:\varrho_{n,1},\notag\\
			|r_{n,2}|_{\infty}&\lesssim_\P   \{\rho_{n,2}^B\iota + (\kappa+\rho_{n,2}^B)\rho_{n,2}^A\}n^{-1/2}(\log q)^{1/2}=:\varrho_{n,2}.
			\end{align}
			%
		\end{proof}
		We will examine the detailed rates of $\ell_n^\Upsilon$, $\ell_n^\Pi$, and $\rho_{n,2}^F$, which are involved in the rate of $|r_n|_\infty$, in the following Section \ref{rn}. 

		\begin{proof}[\textbf{Proof of Theorem \ref{theorem.inference}}]
			The proof is similar to that of Corollary 5.8 of \citet{lasso2018}, which applies Theorem 5.1 of \cite{ZW15gaussian}. Therefore, it is is omitted here. In particular, with $r\geq4$ and $\varsigma>0$ such that $\|\mG_{k,\cdot}\|_{r,\varsigma}$ is bounded by a constant for any $k\in\mathcal S$, the following additional conditions on $b_n$ and $|\mathcal S|$ are required:
			\begin{align}\label{bn}
			&b_n=\smallO\{n(\log |\mathcal S|)^{-5}\},\, \text{ with }\{b_n^{-1}+n^{-\varsigma}+(n-b_n)b_n^{-\varsigma+1}/(nb_n)\}(\log |\mathcal S|)^2=\smallO(1),\,\text{if } \varsigma<1;\notag\\
			&\{b_n^{-1}+\log(n/b_n)/n+(n-b_n)\log b_n/(nb_n)\}(\log |\mathcal S|)^2=\smallO(1),\,\text{if } \varsigma=1;\notag\\
			&\{b_n^{-1}+n^{-1}b_n^{-\varsigma+1}+(n-b_n)/(nb_n)\}(\log |\mathcal S|)^2=\smallO(1),\,\text{if } \varsigma>1.\notag\\
			&n=\smallO\{n^{r/2}(\log |\mathcal S|)^{-r}|\mathcal S|^{-2}\},\, \text{if }\varsigma >1-2/r;\notag\\
			&l_nb_n^{r/2-\varsigma r/2}=\smallO\{n^{r/2}(\log |\mathcal S|)^{-r}|\mathcal S|^{-2}\},\,\text{if }1/2-2/r<\varsigma<1-2/r.
			\end{align}
		\end{proof}
		
		\subsubsection{Detailed Rate of $|r_n|_\infty$}\label{rn}
		Recall that in the case of linear moments models, the score functions are given by $g_j(D_{j,t},\theta)=z_{j,t}\vps_j(D_{j,t},\theta)$, where $\vps_j(D_{j,t},\theta)=y_{j,t}-x_{j,t}^\top\vartheta_j$. To simplify the notations, we denote $g_{jm,t}\defeq z_{jm,t}\vps_j(D_{j,t},\theta^0)$ and $\hat g_{jm,t}\defeq z_{jm,t}\vps_j(D_{j,t},\hat\theta)$, for all $j=1,\ldots,p$ and $m=1,\ldots,q_j$. Similarly, we define $g_{il,t}$ and $\hat g_{il,t}$, for $i=1,\ldots,p$ and $l=1,\ldots,q_i$. We note that when the time series is non-stationary and the mean varies with respect to $t$, we can replace $\E (g_{il,t}g_{jm,t})$ by
		$\E_n\E (g_{il,t}g_{jm,t})$.
		
		Let $C>0$ be an absolute constant such that $\max_{i,j,l,m}|\E(x_{i,t}x_{j,t}^\top z_{il,t}z_{jm,t})|_{\max}\leq C$ and $\max_{i,j,l,m}|\E(x_{i,t}z_{jm,t}z_{il,t}\vps_{i,t}))|_{\infty}\leq C$.
		\begin{lemma}[Rate of $\ell_n^\Upsilon$]\label{ln_ups}
			Under the conditions in Lemma \ref{cont} and 
			assumption \hyperref[A_id]{(A6)}, we have
			$$|\hat{\Omega}- {\Omega}|_{\max}\lesssim_\P  \ell_n^\Upsilon/ M,$$
			given $d_{n,1}^2(C+ \gamma_n) + d_{n,1}(C + \gamma'_n) + \gamma_{n,1} + \gamma_{n,2}\lesssim \ell_n^\Upsilon/ M$, where $d_{n,1}$ is defined in \eqref{ratetheta}, $\gamma_n, \gamma'_n, \gamma_{n,1}, \gamma_{n,2}$ are specified in \eqref{gamman1}, \eqref{gamman2} and \eqref{gamman3}.
		\end{lemma}
		
		\begin{proof}
			We first observe that
			\begin{eqnarray*}
				|\hat{\Omega}- {\Omega}|_{max} &=& \max_{i,j,l,m}|{\E}_n (\hat{g}_{il,t}\hat{g}_{jm,t}) - \E (g_{il,t} g_{jm,t})|\\
				&\leq&\max_{i,j,l,m}|{\E}_n\{(\hat{g}_{il,t}-g_{il,t})(\hat{g}_{jm,t}-g_{jm,t})\}| + 2\max_{i,j,l,m}|{\E}_n\{g_{il,t} (\hat{g}_{jm,t}-g_{jm,t})\}|\\&&+ \max_{i,j,l,m}|{\E}_n(g_{il,t} g_{jm,t}) - \E (g_{il,t}g_{jm,t}) |\\
				&=:& I_{n,1}+ I_{n,2}+ I_{n,3}. 
			\end{eqnarray*}
			
			For $I_{n,1}$, it can be seen that
			\begin{eqnarray*}
				I_{n,1}&\leq&\max_{i,j,l,m}|\hat\beta_i-\beta_i^0|_1|\hat\vartheta_j-\vartheta_j^0|_1|{\E}_n(x_{i,t}x_{j,t}^\top z_{il,t}z_{jm,t})|_{\max}\\
				&\leq&|\hat\theta - \theta^0|_1^2 \{ C + |{\E}_n(x_{i,t}x_{j,t}^\top z_{il,t}z_{jm,t}) - \E(x_{i,t}x_{j,t}^\top z_{il,t}z_{jm,t})|_{\max} \}.
			\end{eqnarray*}
			Let $\chi_t^{ijlm}\defeq\vec(x_{i,t}x_{j,t}^\top z_{il,t}z_{jm,t}) = (\chi_{k,t}^{ijlm})_{k=1}^{K_i*K_j}$ and define
			\begin{align}\label{gamman1}
			\gamma_n\defeq cn^{-1/2}(\log P_n)^{1/2}\max_{i,j,l,m,k}\|\chi_{k,\cdot}^{ijlm}\|_{2,\varsigma} + cn^{-1}c_{n,\varsigma}(\log P_n)^{3/2}\|\max_{i,j,l,m}|\chi_{\cdot}^{ijlm}|_\infty\|_{r,\varsigma},
			\end{align}
			with $P_n =(q\vee n\vee e)$, 
			$c_{n,\varsigma} = n^{1/r}$ for $\varsigma> 1/2 - 1/r$ and $c_{n,\varsigma} = n^{1/2-\varsigma}$ for $0<\varsigma<1/2 - 1/r$. By applying Lemma \ref{tail} and the results in \eqref{ratetheta}, we have 	
			$I_{n,1}\lesssim_\P d_{n,1}^2(C+ \gamma_n)$, for sufficiently large $c$.
			
			Similarly,
			\begin{eqnarray*}
				I_{n,2}&\leq&2\max_{i,j,l,m}|\hat\vartheta_j-\vartheta_j^0|_1|{\E}_n(x_{j,t}z_{jm,t}z_{il,t}\vps_{i,t})|_{\infty}\\
				&\leq&2|\hat\theta - \theta^0|_1 \{ C + |{\E}_n(x_{j,t}z_{jm,t}z_{il,t}\vps_{i,t}) - \E(x_{j,t}z_{jm,t}z_{il,t}\vps_{i,t})|_{\infty} \}.
			\end{eqnarray*}
			Let $\zeta_t^{ijlm}\defeq x_{j,t}z_{jm,t}z_{il,t}\vps_{i,t} = (\zeta_{k,t}^{ijlm})_{k=1}^{K_j}$ and define
			\begin{align}\label{gamman2}
			\gamma'_n\defeq cn^{-1/2}(\log P_n)^{1/2}\max_{i,j,l,m,k}\|\zeta_{k,\cdot}^{ijlm}\|_{2,\varsigma} + cn^{-1}c_{n,\varsigma}(\log P_n)^{3/2}\|\max_{i,j,l,m}|\zeta_{\cdot}^{ijlm}|_\infty\|_{r,\varsigma}.
			\end{align}
			It follows that	$I_{n,2}\lesssim_\P d_{n,1}(C + \gamma'_n)$, for sufficiently large $c$.
			
			Lastly, $I_{n,3}$ is handled by pointwise concentration for two parts as
			\begin{eqnarray*}
				I_{n,3} \leq \max_{i\neq j\text{ or }l\neq m} |{\E}_n (g_{il,t}g_{jm,t}) -\E  (g_{il,t}g_{jm,t})| + \max_{j,m} |{\E}_n g_{jm,t}^2 -\E  g_{jm,t}^2|,
			\end{eqnarray*}
			where H\"{o}lder's inequality is applied when dealing with the first part.
			
			Let
			\begin{align}\label{gamman3}
			\gamma_{n,1} &\defeq cn^{-1/2}(\log P_n)^{1/2}(\Phi_{4,\varsigma}^{\vps z})^2 + cn^{-1}c_{n,\varsigma}(\log P_n)^{3/2}\|\max_j|\vps_{j,\cdot}z_{j,\cdot}|_\infty\|_{2r,\varsigma}^2,\notag\\
			\gamma_{n,2} &\defeq cn^{-1/2}(\log P_n)^{1/2}\max_{j,m} \|\vps_{j,\cdot}^2z_{jm,\cdot}^2\|_{2,\varsigma} + cn^{-1}c_{n,\varsigma}(\log P_n)^{3/2}\|\max_{j,m}|\vps^2_{j,\cdot}z^2_{jm,\cdot}|\|_{r,\varsigma}.
			\end{align}
			Then, we have $I_{n,3} \lesssim_\P \gamma_{n,1} + \gamma_{n,2} $ for sufficiently large $c$.
			
			By collecting all the results above, we can claim that $|\hat{\Omega}- {\Omega}|_{\max}\lesssim_\P  \ell_n^\Upsilon/ M$ by selecting $\ell_n^\Upsilon$ such that $d_{n,1}^2(C+ \gamma_n) + d_{n,1}(C+ \gamma'_n) + \gamma_{n,1} + \gamma_{n,2}\lesssim \ell_n^\Upsilon/ M$.
		\end{proof}
		
		We shall provide an admissible rate for $\ell_n^\Upsilon$ under a specific example in Remark \ref{rateUps}. Next, we analyze the rate $\ell_n^\Pi$, for which we introduce the following definitions.
		
		Let the subset $\mathcal P^{(1)}\subseteq\{1,\ldots,p\}$ be the equation index space related to $\theta_1^0$. And for each $j\in\mathcal P^{(1)}$, the subset $\mathcal K_j^{(1)}\subseteq\{1,\ldots,K_j\}$ is the parameter index space related to $\theta_1^0$ in the $j$-th equation. Let
		$$\rho_{n}^{G_1}\defeq cn^{-1/2}(\log P_n)^{1/2} \Phi_{2,\varsigma}^{xz}
		+ cn^{-1}c_{n,\varsigma}(\log P_n)^{3/2}\|\max\nolimits_{j\in\mathcal P^{(1)},k\in\mathcal K_j^{(1)}}|x_{jk,\cdot}z_{j,\cdot}|_\infty\|_{r,\varsigma}.$$
		Define the matrix norms $|G_1|_{1,l}  = \max\limits_{1\leq j\leq K^{(1)}} \sum_{i=1}^q |{G}_{1,ij}|^l$, $|G_1|_{\infty,l} = \max\limits_{1\leq i\leq q} \sum_{j=1}^{K^{(1)}} |{G}_{1,ij}|^l$, and $|G_1|_0$ is the number of non-zero components in $G_1$.
		
		
		\begin{lemma}\label{sparse}
			Assume that $|\hat{G}_{1}- G_1|_{\max} \lesssim_\P \rho_{n}^{G_1}$. Then, we have
			$$|\hat{G}_{1}- G_1|_1 \lesssim_\P \rho_{n,2}^{G_1},\quad |\hat{G}_{1}- G_1|_2 \lesssim_\P \rho_{n,2}^{G_1},$$
			where $\rho_{n,2}^{G_1}= s(G_1)\rho_{n}^{G_1}$ in the sparse case with $s(G_1)=|G_1|_0$, and $\rho_{n,2}^{G_1}=L(\rho_{n}^{G_1})^{1-l}$ in the dense case with $\max\{|G_1|_{1,l}, |G_1|_{\infty,l},|G_1|_1,|G_1|_\infty\} \leq L$ for some $0\leq l<1$.		
		\end{lemma}
		
		\begin{proof}
			Recall that $\hat G_1=(\hat G_{1,ij})$ is a thresholding estimator with $\hat{G}_{1,ij} = \hat G^1_{1,ij }\IF\{|\hat G^1_{1,ij}|>T\}$, $\hat G^1_1 = (\hat G^1_{1,ij}) =\partial_{\theta_1^\top}\hat g(\theta_1,\hat\theta_2)|_{\theta_1=\hat\theta_1}$. Consider the event $\mathcal A$ defined by
			$$ \mathcal{A}\defeq \{ G_{1,ij} - \rho_{n}^{G_1}\leq \hat G^1_{1,ij} \leq G_{1,ij} + \rho_{n}^{G_1},\,\text{ for all } i=1,\ldots,q,j=1,\ldots,K^{(1)}\}.$$
			
			
			Let $T\geq \rho_{n}^{G_1}$. On the event $\mathcal{A}$, which holds with probability approaching one, we have
			\begin{eqnarray*}
				&&\max_{1\leq j\leq K^{(1)}} \sum_{i=1}^q |\hat{G}_{1,ij}- G_{1,ij}|\\
				&\leq & \max_{1\leq j\leq K^{(1)}} \sum_{i=1}^q |\hat {G}^1_{1,ij} - G_{1,ij}| \IF\{|\hat{G}^1_{1,ij}| >T\}+ \max_{1\leq j\leq K^{(1)}} \sum_{i=1}^q |G_{1,ij}| \IF\{|\hat{G}^1_{ij}| \leq T\}\\
				&\leq & \max_{1\leq j\leq K^{(1)}} \sum_{i=1}^q |\hat {G}^1_{1,ij} - G_{1,ij}| \IF\{|G_{1,ij}| >T+ \rho_{n}^{G_1}\}+ \max_{1\leq j\leq K^{(1)}} \sum_{i=1}^q |G_{1,ij}| \IF\{|G_{1,ij}| \leq T-\rho_{n}^{G_1}\} \\
				&\lesssim_\P & s(G_1) \rho_{n}^{G_1}+ (T- \rho_{n}^{G_1})s(G_1),
			\end{eqnarray*}
			in the sparse case. By picking $T= 2\rho_{n}^{G_1}$, we obtain that $|\hat{G}_{1}- G_1|_1 \lesssim_\P \rho_{n,2}^{G_1}=s(G_1)\rho_{n}^{G_1}$. Similarly, we can prove that $|\hat{G}_{1}- G_1|_\infty \lesssim_\P \rho_{n,2}^{G_1}$ and it follows that $|\hat{G}_{1}- G_1|_2 \lesssim_\P \rho_{n,2}^{G_1}$ by H\"older's inequality.
			
			%
			
			Likewise, for the dense case, on the event $\mathcal{A}$, we have
			\begin{eqnarray*}
				&&\max_{1\leq j\leq K^{(1)}} \sum_{i=1}^q |\hat{G}_{1,ij}- G_{1,ij}|\\
				&\leq & \max_{1\leq j\leq K^{(1)}} \sum_{i=1}^q |\hat {G}^1_{1,ij} - G_{1,ij}||G^l_{1,ij}/(T+\rho_{n}^{G_1})^l| \IF\{|G_{1,ij}| >T+ \rho_{n}^{G_1}\}+ L(T-\rho_{n}^{G_1})\\
				&\lesssim_\P & 
				L \rho_{n}^{G_1}/|T+\rho_n^{G_1}|^l+ L(T-\rho_{n}^{G_1}).
			\end{eqnarray*}
			It follows that $\rho_{n,2}^{G_1}=L (\rho_{n}^{G_1})^{1-l}$ in this case, if we select $T = 2\rho_{n}^{G_1}$.
		\end{proof}
		
		We denote $U \defeq G_2P(\Omega,G_2)$. Note that $|U|_2=1$ as it is an idempotent matrix. When $K^{(1)}$ is of high dimension potentially larger than $n$, we need to consider a regularized estimator given by $\hat{U} = \hat{G}_2\hat\Xi\hat G_2^\top\hat\Upsilon$. Denote by $\rho_{n,2}^U$ the rate such that $ |\hat U - U|_2\lesssim_\P\rho_{n,2}^U$. To further discuss the conditions on this rate, we assume that $|G_2|_2^2 \leq\omega_2$, $\sigma_{\min}(G_2)\geq\omega_2^{-1/2}$, and there exists constants $c$ and $C$ such that $0<c\leq\sigma_{\min}(\Upsilon^0)$ and $|\Upsilon^0|_2 \leq C$. It is not hard to see that
		\begin{eqnarray*}
			&& |\hat U - U|_2\\
			&\leq& |\hat{G}_2 - G_2|_2(|\hat\Xi\hat G_2^\top\hat\Upsilon - \Xi^0G_2^\top\Upsilon^0|_2 + \omega_2^{3/2}) + \omega_2^{1/2}|\hat\Xi\hat G_2^\top\hat\Upsilon - \Xi^0G_2^\top\Upsilon^0|_2, \\
			&& |\hat\Xi\hat G_2^\top\hat\Upsilon - \Xi^0G_2^\top\Upsilon^0|_2 \\
			&\leq& |\hat\Xi-\Xi^0|_2(|\hat G_2^\top\hat\Upsilon - G_2^\top\Upsilon^0|_2 + \omega_2^{1/2}) + \omega_2|\hat G_2^\top\hat\Upsilon - G_2^\top\Upsilon^0|_2, \\
			&&|\hat G_2^\top\hat\Upsilon - G_2^\top\Upsilon^0|_2 \\
			&\leq& |\hat G_2 - G_2|_2 + \rho_{n,2}^\Upsilon(|\hat G_2 - G_2|_2+\omega_2^{1/2}),
		\end{eqnarray*}
		where we have applied the results in Lemma \ref{rate_Ups} (where the rate of $\rho_{n,2}^\Upsilon$ is defined) in the last inequality. In particular, the rates of $|\hat{G}_2 - G_2|_2\lesssim_\P\rho_{n,2}^{G_2}$ and $|(\hat\Xi- \Xi^0)|_2\lesssim_\P\rho_{n,2}^\Xi$ can be derived similarly as in Lemma \ref{sparse} and \ref{rate_Pi}, with the same assumptions with respect to $G_2$ instead of $G_1$.
		
		In Remark \ref{rateGU}, we discuss the rate of 
		$\rho_{n,2}^U$, using the same specific example employed to analyze the admissible rates of 
		the tuning parameters in Remark \ref{rateUps}.

		
		\begin{lemma}[Rates of $\ell_n^\Pi$ and $\rho_{n,2}^F$]\label{ratef}
			Under the conditions of Lemma \ref{rate_Ups} and \ref{sparse}, assume that there exists a constant $C>0$ such that $|\Upsilon^0|_2 \leq C$. Additionally, given that $|G_1|_1 \vee |G_{1}|_{\infty}\leq \mu$, $|G_1|_{\max}\leq\bar\mu$, and $|G_1|_2^2 \leq\omega_1$, we have
			$$|\hat{G}_{1}^\top\hat\Upsilon\hat{G}_{1}- G_1^{\top} \Upsilon^0G_1|_{\max}\lesssim_\P \ell_n^\Pi/M,$$
			given $\rho_{n}^{G_1} (\rho_{n,2}^\Upsilon + M) (\rho_{n,2}^{G_1}+ \mu)+ \mu \rho_{n,2}^\Upsilon (\bar\mu+ \rho_{n}^{G_1} )+\mu M\rho_{n}^{G_1}\leq \ell_n^\Pi/M$. Moreover, we have
			$$|\hat{F}- F|_{2}\lesssim_\P\rho_{n,2}^F,$$
			provided $\big[\{\rho_{n,2}^{G_1}+(\omega_1^{1/2} + \rho_{n,2}^{G_1})\rho_{n,2}^\Upsilon\}(\rho_{n,2}^U + \omega_1^2) + \omega_1^{1/2}\rho_{n,2}^U\big](\omega_1^{1/2} + \rho_{n,2}^{G_1}) + \omega_1^{1/2}\rho_{n,2}^{G_1}\leq\rho_{n,2}^F$.
		\end{lemma}
		
		\begin{proof}
			By applying the results from Lemma \ref{rate_Ups} and Lemma \ref{sparse}, we have the following bound:
			\begin{eqnarray*}
				&& |\hat{G}_{1}^\top\hat\Upsilon\hat{G}_{1}- G_1^{\top} \Upsilon^0G_1|_{\max}\\
				&\leq &|\hat{G}_1- G_1|_{\max} (|\hat\Upsilon - \Upsilon^0|_1+ | \Upsilon^0|_1)(|\hat{G}_1-G_1|_1+ |G_1|_1) \\
				&&+ |G_1^\top|_{\infty}|\hat\Upsilon - \Upsilon^0|_{\infty}(|G_1|_{\max}+ |\hat{G}_{1}-G_1|_{\max})+ |G_1^\top|_{\infty}| \Upsilon^0|_{\infty}|\hat{G}_{1}- G_1|_{\max}\\
				&\lesssim_\P& \rho_{n}^{G_1} (\rho_{n,2}^\Upsilon + M) (\rho_{n,2}^{G_1}+ \mu)+ \mu \rho_{n,2}^\Upsilon (\bar\mu+ \rho_{n}^{G_1} )+\mu M\rho_{n}^{G_1}\\
				&\leq& \ell_n^\Pi/M.
			\end{eqnarray*}
			
			Next, recall that $F=-G_1^\top \Upsilon^0G_2\Xi^0 G_2^\top\Upsilon^0G_1=-G_1^\top\Upsilon^0UG_1$, and a regularized estimator is given by $\hat F=-\hat G_1^\top\hat\Upsilon\hat U\hat G_1$. Again, applying the results from Lemma \ref{rate_Ups} and Lemma \ref{sparse}, we obtain:
			\begin{eqnarray*}
				&&|\hat{F}- F|_{2} \\&= &|\hat G_1^\top\hat\Upsilon\hat U\hat G_1 - G_1^\top\Upsilon^0UG_1|_2\\
				&\leq & |\hat{G}_1^\top\hat\Upsilon- G_1^{\top} \Upsilon^0|_2|\hat{U} - U|_2 |\hat{G}_{1} |_2  + |\hat{G}_1^\top\hat\Upsilon- G_1^{\top} \Upsilon^0|_2|U|_2 |\hat{G}_{1} |_2  \\
				&& +|G_1^{\top}\Upsilon^0|_2|\hat{U}-U|_2 |\hat{G}_{1} |_2  +|G_1^{\top}\Upsilon^0U|_2|\hat{G}_{1} - G_1|_2\\
				&\leq& \big\{|\hat{G}_{1}- G_1|_2|\Upsilon^0|_2+ (|\hat{G}_{1} - G_1|_2 + |G_1|_2)|\hat\Upsilon- \Upsilon^0|_{2}\big\}|\hat{U} - U|_2 (|\hat{G}_{1} - G_1|_2 + |G_1|_2) \\
				&& + \big\{|\hat{G}_{1}- G_1|_2|\Upsilon^0|_2+ (|\hat{G}_{1} - G_1|_2 + |G_1|_2)|\hat\Upsilon- \Upsilon^0|_{2}\big\}|U|_2 (|\hat{G}_{1} - G_1|_2 + |G_1|_2) \\
				&&+ |G_1^{\top}\Upsilon^0|_2|\hat{U}-U|_2 |(|\hat{G}_{1} - G_1|_2 + |G_1|_2) + |G_1^{\top}\Upsilon^0U|_2|\hat{G}_{1} - G_1|_2\\
				&\lesssim_\P& \big[\{\rho_{n,2}^{G_1}+(\omega_1^{1/2} + \rho_{n,2}^{G_1})\rho_{n,2}^\Upsilon\}(\rho_{n,2}^U + \omega_1^2) + \omega_1^{1/2}\rho_{n,2}^U\big](\omega_1^{1/2} + \rho_{n,2}^{G_1}) + \omega_1^{1/2}\rho_{n,2}^{G_1}.
			\end{eqnarray*}
		\end{proof}
		
		A more detailed discussion on the rate of $\rho_{n,2}^F$ under a specific example (building on Remark \ref{rateUps} regarding the admissible rate of $\ell_n^\Pi$) will be provided in Remark \ref{rateGU}.
		
		So far, we have analyzed the rates of $\ell_n^\Upsilon$, $\ell_n^\Pi$, and $\rho_{n,2}^F$, which contribute to the rate of $|r_n|_\infty$. A concluding remark on the detailed rate of $|r_n|_\infty$ for the certain example is provided in Remark \ref{rnfinal}.
		
		\subsubsection{Additional Remarks}
		
		\begin{remark}[Rate of $|\hat B - B|_{\infty}$]\label{rateb}
			Suppose $|\hat V - V|_{\max}\lesssim_\P\rho_n^V=\smallO(1)$. Analogous to Lemma \ref{sparse}, we obtain $|\hat V - V|_{\infty}\lesssim_\P\rho_{n,2}^V$, where $\rho_{n,2}^V = s(V)\rho_n^V$ if $|V|_0=s(V)$. Alternatively, if $(|V|_{\infty,l}\vee |V|_\infty) \leq \nu$ for some $0\leq l<1$, then $\rho_{n,2}^V = \nu (\rho_n^V)^{1-l}$. Furthermore, assuming that $\max\{|\Pi^0|_\infty, |F|_\infty,|V|_\infty\}\leq\nu$, 
			similar steps as in the proof of Lemma \ref{df} yield $|\hat B - B|_{\infty}\lesssim_{\P}\nu^3(\rho_{n,2}^F\vee\rho_{n,2}^\Pi\vee\rho_{n,2}^V)\leq\nu^5(\rho_{n,2}^F\vee\rho_{n,2}^\Pi)=:\rho_{n,2}^B$, provided that $\rho_{n,2}^\Pi,\rho_{n,2}^V\to0$ as $n\to\infty$.
		\end{remark}
		
		\begin{remark}[Admissible rates of $\ell_n^\Upsilon$ and $\ell_n^\Pi$] \label{rateUps}
			Suppose $M \lesssim s$, and all dependence adjusted norms involved in $\gamma_n, \gamma'_n, \gamma_{n,1}, \gamma_{n,2}, d_{n,1}$ are bounded by constants. For the weak dependence case where $\varsigma> 1/2 - 1/r$, if $n^{-1/2+1/r}(\log P_n) =\bigO(1)$ for sufficiently large $r$, then $\gamma_n, \gamma'_n, \gamma_{n,1},\gamma_{n,2}\lesssim n^{-1/2}(\log P_n)^{1/2}$. Moreover, 
			since $d_{n,1}\lesssim s^2n^{-1/2}(\log P_n)^{1/2}$, an admissible rate for $\ell_n^\Upsilon$ to satisfy the relevant condition in Lemma \ref{ln_ups} is given by $sn^{-1/2}(\log P_n)^{1/2}$, provided that $d_{n,1}\to0$ as $n\to\infty$.
			
			Applying Lemma \ref{rate_Ups}, under this rate we have: $\rho_n^{\Upsilon} \lesssim s^{2} n^{-1/2}(\log P_n)^{1/2}$ and $\rho_{n,2}^\Upsilon \lesssim s^{3-2b}(n^{-1}\log P_n)^{(1-b)/2}$ for some $0\leq b<1$ and $s_0(q) \lesssim s$ such that $\Upsilon^0\in\mathcal{U}(b,s_0(q))$.
			
			According to Lemma \ref{sparse}, we have $\rho_{n}^{G_1} \lesssim n^{-1/2}(\log P_n)^{1/2}$. Assume that $L \lesssim s$ and $s(G_1)\lesssim s$. It follows that $\rho_{n,2}^{G_1}\lesssim s(n^{-1}\log P_n)^{(1-l)/2}$, where $l=0$ for the sparse case. Assume that $\mu \lesssim s$ and $\bar{\mu}$ is bounded by a constant. An admissible rate for $\ell_n^\Pi$ to satisfy the relevant condition in Lemma \ref{ratef} is given by $s^{6-2b}(n^{-1}\log P_n)^{(1-b)/2}$, 
			provided that $\rho_{n,2}^{G_1},\rho_{n,2}^\Upsilon\to0$ as $n\to\infty$. 
			
			Applying Lemma \ref{rate_Pi}, under this rate we obtain: 
			$\rho_n^\Pi \lesssim s^{7-2b}(n^{-1}\log P_n)^{(1-b)/2}$ 
			and $\rho_{n,2}^\Pi \lesssim s^{(7-2b)(1-b)+1}(n^{-1}\log P_n)^{(1-b)^2/2}$ 
			for some $0\leq b<1$ and $s_0(K^{(1)}) \lesssim s$, such that $\Pi^0\in\widetilde{\mathcal{U}}(b,s_0(K^{(1)}))$.
		\end{remark}
		
		\begin{remark}[Discussion of the rates of 
			$\rho_{n,2}^U$ and $\rho_{n,2}^F$]\label{rateGU}
			Consider the special case discussed in Remark \ref{rateUps}. 
			As a continuation, under analogous assumptions, we have $\rho_{n,2}^{G_2}\lesssim s(n^{-1}\log P_n)^{(1-l)/2}$ (with $l=0$ for the sparse case)  
			and $\rho_{n,2}^{\Xi}\lesssim s^{(7-2b)(1-b)+1}(n^{-1}\log P_n)^{(1-b)^2/2}$. 
			Suppose $\omega_2$ is a constant and $l\leq b$. It follows that $\rho_{n,2}^U\lesssim s^{(7-2b)(1-b)+1}(n^{-1}\log P_n)^{(1-b)^2/2}$, 
			under the condition that $\rho_{n,2}^{G_2},\rho_{n,2}^\Upsilon,\rho_{n,2}^\Xi\to0$ as $n\to\infty$.
			%
			Moreover, suppose $\omega_1$ is a constant and $l\leq b$. 
			In this case, we have $\rho_{n,2}^F\lesssim s^{(7-2b)(1-b)+1}(n^{-1}\log P_n)^{(1-b)^2/2}$ 
			provided that $\rho_{n,2}^{G_1},\rho_{n,2}^\Upsilon,\rho_{n,2}^U\to0$ as $n\to\infty$.
		\end{remark}

		\begin{remark}[Convergence of $\hat A$]\label{ratea}
			Recall that $A={G}_1^{\top}{\Omega}^{-1}(\mathbf I - {G}_2P({\Omega},{G}_2))=G_1^\top\Upsilon^0(\mathbf I - U)$, and we consider the regularized estimator $\hat A = \hat G_1^\top\hat\Upsilon(\mathbf I - \hat U)$. Given that $|\Upsilon^0|_2 \leq C$ and $|G_1|_2^2 \leq\omega_1$, by applying the results from Lemma \ref{rate_Ups}, we obtain:
			\begin{eqnarray*}
				|\hat{A}- A|_{\max} &\leq& |\hat G_1^\top\hat\Upsilon(\mathbf I - \hat U)- G_1^\top\Upsilon^0(\mathbf I - U)|_{2}\\
				&\leq& |\hat{G}_1^\top\hat\Upsilon - {G}_1^\top\Upsilon^0|_2  +|\hat U - U|_2|G_1^\top\Upsilon^0|_2 + |\hat{G}_1^\top\hat\Upsilon - {G}_1^\top\Upsilon^0|_2(|\hat U - U|_2+1)\\
				&\leq& |\hat{G}_{1}- G_1|_2|\Upsilon^0|_2+ (|\hat{G}_{1} - G_1|_2 + |G_1|_2)|\hat\Upsilon- \Upsilon^0|_{2} +|\hat U - U|_2|G_1^\top\Upsilon^0|_2 \\
				&&+ \{|\hat{G}_{1}- G_1|_2|\Upsilon^0|_2+ (|\hat{G}_{1} - G_1|_2 + |G_1|_2)|\hat\Upsilon- \Upsilon^0|_{2}\}(|\hat U - U|_2+1)\\
				&\lesssim_\P& 
				\{\rho_{n,2}^{G_1}+(\rho_{n,2}^{G_1}+\omega_1^{1/2})\rho_{n,2}^{\Upsilon}\}(\rho_{n,2}^U+2)+\omega_1^{1/2}\rho_{n,2}^U=: \rho_n^A.
			\end{eqnarray*}
			
			Analogous to Lemma \ref{sparse}, we have $|\hat{A}- A|_{\infty} \lesssim_\P \rho_{n,2}^A$, with $\rho_{n,2}^A = s(A)\rho_n^A$ if we assume $|A|_0=s(A)$, and $\rho_{n,2}^A = \iota (\rho_n^A)^{1-l}$ in the case of $(|A|_{\infty,l}\vee |A|_\infty) \leq \iota$ for some $0\leq l<1$.
		\end{remark}
		
		\begin{remark}[Detailed rate of $|r_n|_\infty$]\label{rnfinal}
			In Remarks \ref{rateUps}-\ref{rateGU} above, we set up a special case where all the dependence adjusted norms involved are bounded by constants and specifically discuss the relevant rates of $\ell_n^\Upsilon, \ell_n^\Pi, 
			\rho_{n,2}^U$, and $\rho_{n,2}^F$. Summarizing all the results, we obtain:
			$$\rho_{n}^B\lesssim s^{7-2b}(n^{-1}\log P_n)^{(1-b)/2}+s^{(7-2b)(1-b)+1}(n^{-1}\log P_n)^{(1-b)^2/2},$$
			which implies that 
			$$\varrho_{n,1}\lesssim \big\{s^{7-2b}(n^{-1}\log P_n)^{(1-b)/2}+s^{(7-2b)(1-b)+1}(n^{-1}\log P_n)^{(1-b)^2/2}\big\}s^2n^{-1/2}(\log P_n)^{1/2},$$
			for some $0\leq b<1$, given that $\kappa$ and $\omega$ are constants.
			
			Additionally, suppose that $(\nu\vee\iota)\lesssim s$. By Remark \ref{rateb}, it follows that $$\rho_{n,2}^B=\nu^5(\rho_{n,2}^F\vee\rho_{n,2}^\Pi)\lesssim s^{(7-2b)(1-b)+6}(n^{-1}\log P_n)^{(1-b)^2/2},$$ 
			given that $\rho_{n,2}^\Pi,\rho_{n,2}^V\to0$ as $n\to\infty$. Moreover, according to Remark \ref{ratea}, we obtain: $$\rho_{n}^A\leq\rho_{n,2}^{G_1}+ \omega_1^{1/2}\rho_{n,2}^\Upsilon + \omega_1^{1/2}\rho_{n,2}^U \lesssim s^{(7-2b)(1-b)+1}(n^{-1}\log P_n)^{(1-b)^2/2},$$ 
			where it is assumed that $\omega_1$ is a constant and $l\leq b$. In the sparse case where $s(A)\lesssim s$, we have: 
			$$\rho_{n,2}^A\lesssim s^{(7-2b)(1-b)+2}(n^{-1}\log P_n)^{(1-b)^2/2},$$ 
			provided that $\rho_{n,2}^{G_1},\rho_{n,2}^U,\rho_{n,2}^\Upsilon\to0$ as $n\to\infty$. Finally, we get: 
			$$\varrho_{n,2}\lesssim s^{(7-2b)(1-b)+7}(n^{-1}\log P_n)^{(b^2-2b+2)/2},$$ 
			given that $\rho_{n,2}^{B}\to0$ as $n\to\infty$.
		\end{remark}
		
		
		
		\renewcommand{\thesubsection}{B.\arabic{subsection}}
		\setcounter{equation}{0}
		\renewcommand{\theequation}{B.\arabic{equation}}
		\setcounter{theorem}{0}
		\renewcommand{\thetheorem}{B.\arabic{theorem}}
		\setcounter{lemma}{0}
		\renewcommand{\thelemma}{B.\arabic{lemma}}
		\setcounter{figure}{0}
		\renewcommand{\thefigure}{B.\arabic{figure}}
		\setcounter{table}{0}
		\renewcommand{\thetable}{B.\arabic{table}}
		\setcounter{remark}{0}
		\renewcommand{\theremark}{B.\arabic{remark}}
		\setcounter{corollary}{0}
		\renewcommand{\thecorollary}{B.\arabic{corollary}}
		\setcounter{example}{0}
		\renewcommand{\theexample}{B.\arabic{example}}
		\setcounter{assumption}{0}
		\renewcommand{\theassumption}{B.\arabic{assumption}}

\section{Extension to Nonlinear Moments}\label{nonlinearmo}
In the spatial statistics literature, researchers often use a combination of linear and quadratic moments to relax identification conditions, as exemplified by Lemma EX1 in \cite{kuersteiner2020dynamic}, though our model differs due to the presence of heterogeneous parameters.
In this section, we will explore the scenario where the moment conditions do not take a simple linear form. 
Specifically, we will examine the necessary conditions to establish the consistency of the GDS estimator in a more general setting and analyze the linearization error associated with the inference of the debiased estimator. 

\subsection{Consistency and Concentration}\label{nonlinear.con}
To establish the consistency of the GDS estimator $\hat\theta$ for the general form of moments, two key assumptions are required, which follow directly from \cite{belloni2018high}. 
Recall that $\mathcal{R}(\theta^0) \defeq \{\theta \in \Theta: |\theta|_1 \leq |\theta^0|_1\}$ denotes the restricted set. Let $\epsilon_n\downarrow0,\delta_n\downarrow0$ be sequences of positive constants.

\begin{itemize}
	\item[(C1)]\label{C1}(Concentration)
	\begin{equation*}
	\sup_{\theta \in \mathcal{R}(\theta^0)}|\hat g(\theta) - g(\theta)|_{\infty}\leq \epsilon_n
	\end{equation*}
	holds with probability at least $1-\delta_n$.
	\item[(C2)]\label{C2}(Identification)
	The target moment function $g(\cdot)$ satisfies the identification condition:
	$$\{|g(\theta)- g(\theta^0)|_{\infty} \leq \epsilon, \theta\in\mathcal R(\theta^0)\} \text{ implies } |\theta^0 - \theta |_a \leq \rho(\epsilon;\theta^0, a),$$
	for all $\epsilon>0$, $a=1$ or $2$, where $\epsilon\mapsto \rho(\epsilon;\theta^0,a)$ is a weakly increasing function mapping from $[0,\infty)$ to $[0,\infty)$ such that $\rho(\epsilon;\theta^0, a)\to0$ as $\epsilon\to0$.
\end{itemize}

\hyperref[C1]{(C1)} and \hyperref[C2]{(C2)} are high-level conditions that need to be verified for our analysis. 
Specifically, we provide a concentration inequality for nonlinear moments in the following subsection. 
However, the issue of identification is more complex, and we leave the detailed verification of this aspect to future research.

Let $g_j(D_{t},\theta)=[g_{jm}(D_t,\theta)]_{m=1}^{q_j}$ represent the score functions for $j=1,\ldots,p$, where $D_t=[D_{j,t}]_{j=1}^p\in\R^{\bar d}$ collects the observed data sample for all individuals. Additionally, decompose the parameter vector $\theta\in\R^K$ into $\theta=(\vartheta_1^\top,\ldots,\vartheta_{\bar u}^\top)$, 
where each $\vartheta_u$ is $K^u$-dimensional subvector of $\theta$ for $u=1,\ldots,\bar u$, and $K=K^1+\cdots+K^{\bar u}$. For $j=1,\ldots,p$ and $m=1,\ldots,q_j$, we assume the score functions have the index form:
$$g_{jm}(D_t,\theta)=h_{jm}(D_t,v_{jm,t})=h_{jm}(D_t,W_{u(j,m)}(D_t)^\top\vartheta_{u(j,m)}),$$
where $u(j,m)$ ranges over $1,\ldots,\bar u$, $h_{jm}(\cdot,\cdot)$ is a measurable map 
from $\R^{\bar d}\times\R$ to $\R$,  
and $W_u(\cdot)$ is a measurable map 
from $\R^{\bar d}$ to $\R^{K^u}$ 
for all $u=1,\ldots,\bar u$. The true parameter $\theta^0$ is identified as unique solution to the moment conditions:
$$\E\{g_{jm}(D_t,\theta^0)\}=\E\{h_{jm}(D_t,W_{u(j,m)}(D_t)^\top\vartheta^0_{u(j,m)})\}=0.$$
We assume that $|\vartheta^0_{u(j,m)}|_0=s_{j,m}$ and $|\theta^0|_0=\sum_{j=1}^p\sum_{m=1}^{q_j}s_{j,m}=s\ll K$.

To simplify the notations, we suppress the index pair $(j,m)$, where $j=1,\ldots,p$, $m=1,\ldots,q_j$, to the single index $j=1,\ldots,q$ ($q=\sum_{j=1}^pq_j$) thereafter. Accordingly, we define the function class: 
$$
\mathcal{H}_{j} = \big\{d\mapsto h_{j}(d,W_{u(j)}(d)^\top\vartheta_{u(j)}): |\vartheta_{u(j)}-\vartheta_{u(j)}^0|_1\leq c_j\big\},
$$
where $c_j$ can be chosen as 1 without loss of generality.


Within the context of this section, we  focus on cases involving sub-exponential or sub-Gaussian tails. Specifically, we define the dependence adjusted sub-exponential ($\nu=1$) or sub-Gaussian ($\nu=1/2$) norms as:
$$\|h_{j}(D_{\cdot},W_{u(j)}(D_{\cdot})^\top\vartheta^0_{u(j)})\|_{\psi_\nu,\varsigma}\defeq\sup_{r\geq2}r^{-\nu}\|h_{j}(D_{\cdot},W_{u(j)}(D_{\cdot})^\top\vartheta^0_{(j)})\|_{r,\varsigma},$$
which we assume to be finite for any $j=1,\ldots,q$.
Note that the following results can be generalized to finite moment conditions by applying Nagaev-type inequalities (e.g., Theorem 2 of \citet{wu2016performance}) in place of Lemma \ref{exp}.

Observe that
\begin{eqnarray*}
	&&\E{_n} h_j(D_t,v_{j,t}) - \E h_j(D_t,v_{j,t}) \\
	&=& \E{_n} h_j(D_t,v_{j,t}) - \E{_n}\E\{h_j(D_t,v_{j,t})|\mF_{t-1}\} + \E{_n}\E\{h_j(D_t,v_{j,t})|\mF_{t-1}\}  - \E h_j(D_t,v_{j,t})\\
	&=:& L_{n,1} + L_{n,2},
\end{eqnarray*}
where the first term $L_{n,1}=\E{_n} h_j(D_t,v_{j,t}) - \E{_n}\E\{h_j(D_t,v_{j,t})|\mF_{t-1}\} $ is a summand of martingale differences and the second term $L_{n,2}=\E{_n}\E\{h_j(D_t,v_{j,t})|\mF_{t-1}\}  - \E h_j(D_t,v_{j,t})$ shall be dealt via chaining steps.

We begin by deriving the bound for $L_{n,2}$. Let $\tilde h_j(D_t,v_{j,t})=\E\{h_j(D_t,v_{j,t})|\mF_{t-1}\} - \E h_j(D_t,v_{j,t})$ 
and define the function class
$$\widetilde{\mathcal H}_j=\big\{d\mapsto\tilde h_j(d,W_{u(j)}(d)^\top\vartheta_{u(j)}):|\vartheta_{u(j)}-\vartheta_{u(j)}^0|_1\leq 1\big\}.$$

\begin{assumption}\phantomsection\label{A4.2}
	\begin{enumerate}
		\item[i)] The function class $\widetilde{\mathcal{H}}_{j}$ is enveloped with $$\max_{1\leq j\leq q}\sup_{\tilde h_j\in\widetilde{\mathcal{H}}_{j}}|\tilde h_j(d,W_{u(j)}(d)^\top\vartheta_{u(j)})|\leq \widetilde H(d).$$
		\item[ii)]
{Assume that $\tilde h_j(d,W_{u(j)}(d)^\top\vartheta_{u(j)})$ is differentiable with respect to $\vartheta_{u(j)}$.}
Suppose the dependence adjusted norm of the derivative evaluated at the true parameters, i.e.
		$$\Psi_{j,\nu,\varsigma} = \| |\partial\tilde h_j(D_\cdot,W_{u(j)}(D_\cdot)^\top\vartheta_{u(j)}^0)/\partial\vartheta_{u(j)}|_\infty\|_{\psi_\nu,\varsigma}$$
		is finite.
		Moreover, assume that the partial derivative of $\tilde h_j(d,v)$ with respect to the the second argument has an envelope. That is 
	$$\max_{1\leq j\leq q}\sup_{\tilde h_j\in\widetilde{\mathcal{H}}_{j}}|\partial_v \tilde h_j(d,v)| \leq \widetilde H^1(d).$$
		\item[iii)] Denote $c_r\defeq \E|\widetilde H(D_t)|^r \vee \E|\widetilde H^1(D_t)|^r \vee \E\Big(\max\limits_{1\leq j\leq q}|W_{u(j)}(D_t)|_\infty^r\Big)$ and assume that $c_rn^{-r/2+1}\to0$, for an integer $r>4$.
	\end{enumerate}
\end{assumption}

Note that here the differentiability condition is imposed on $\tilde h_j(D_t,v_{j,t})$, 
rather than $h_j(D_t,v_{j,t})$, as in the Condition ENM in \citet{belloni2018high}. This allows for greater generality, accommodating non-smooth score functions.


For any finitely discrete measure $\mathcal Q$ on a measurable space, let $\mathcal L^r(\mathcal Q)$ denote the space of all measurable functions $h$ such that $\|h\|_{\mathcal Q,r}=(\mathcal Q|h|^r)^{1/r}<\infty$, where $\mathcal Qh\defeq\int h d\mathcal Q$, $1\leq r<\infty$, and $\|h\|_{\mathcal Q,\infty}=\lim\limits_{r\to\infty}\|h\|_{\mathcal Q,r}<\infty$. For a class of measurable functions $\mathcal H$, the $\delta$-covering number with respect to the $\mathcal L^r(\mathcal Q)$-metric is denoted as $\mathcal{N}(\delta, \mathcal H, \|\cdot\|_{\mathcal Q,r})$ and let $\operatorname{ent}_r(\delta, \mathcal H)= \log \sup_{Q}\mathcal{N}(\delta\|H\|_{\mathcal Q,r}, \mathcal H, \|\cdot\|_{\mathcal Q, r})$ denote the uniform entropy number with the envelope $H=\sup_{h\in\mathcal H}|h|$.

Given a truncation constant $M$, we define the event
$${\mathcal A}_M=\big\{\max_{1\leq t\leq n}|\widetilde H(D_t)|\leq M, \max_{1\leq t\leq n}|\widetilde H^1(D_t)|\leq M, \max_{1\leq j\leq q,1\leq t\leq n}|W_{u(j)}(D_t)|_\infty\leq M\big\}.$$
Accordingly, we define the function class $\widetilde{\mathcal{H}}_j$ on the event $\mathcal{A}_M $ to be $\widetilde{\mathcal H}_{j,M}$.
	\begin{assumption}\label{entropy}
	Consider the class of functions $\widetilde{\mathcal H}=\big\{d\mapsto\tilde h_j(d,W_{u(j)}(d)^\top\vartheta_{u(j)}): j=1,\ldots,q, |\vartheta_{u(j)}-\vartheta_{u(j)}^0|_1\leq 1\big\}$ and let $\widetilde{\mathcal H}_M$ be the function class  $\widetilde{\mathcal H}$ on the event ${\mathcal A}_M$. Assume the entropy number of the function class $\widetilde{\mathcal H}_{M}$ with respect to the $\mathcal L^2$-metric is bounded as $\operatorname{ent}_{2}(\delta, \widetilde{\mathcal H}_{M})\lesssim s \log(P_n/\delta) $, where $P_n = q\vee {n} \vee e$.
	\end{assumption}
We discuss the validity of Assumption \ref{entropy} in 
the following remark 
in the case of empirical metric.

\begin{remark}[Verification of Assumption \ref{entropy} under empirical norm]\label{assent}
		Define $\omega_{n}(\tilde h_j, \tilde h^{\prime}_j) = [\E_n\{\tilde h_j(D_t,W_{u(j)}(D_t)^\top\vartheta_{u(j)}) - \tilde h^\prime_j(D_t,W_{u(j)}(D_t)^\top\vartheta^\prime_{u(j)})\}^2]^{1/2}$.
		The $\delta$-covering number of the function class $\widetilde{\mathcal{H}}_{j,M}$ with respect to the $\omega_n(\cdot,\cdot)$ metric is denoted as $\mathcal{N}(\delta, \widetilde{\mathcal H}_{j,M}, \omega_n(\cdot,\cdot))$. Moreover, let $\operatorname{ent}_{n,2}(\delta, \widetilde{\mathcal H}_{j,M}) = \log \mathcal{N}(\delta|\widetilde H_{j,M}|_{n,2}, \widetilde{\mathcal H}_{j,M},\omega_n(\cdot,\cdot))$ denote the entropy number, where $\widetilde H_{j,M}=\sup_{h\in\widetilde{\mathcal H}_{j,M}}|h|$ (the envelope) and $|\widetilde H_{j,M}|_{n,2} = [\E_n\{\widetilde H_{j,M}(D_t)\}^2]^{1/2}$. 
		
		On the event $\mathcal{A}_M$, for any $\vartheta_{u(j)} $ belonging to $\varTheta_j= \{\vartheta_{u(j)} :|\vartheta_{u(j)} -\vartheta^0_{u(j)}|_1\leq 1\}$, there exists a $\vartheta^\prime_{u(j)}$ in the $\delta$-nets of  $\varTheta_j$ with respect to $|\cdot|_{1}$, 
		such that
		\begin{eqnarray*}
			\omega_{n}(\tilde h_j, \tilde h^{\prime}_j) &\leq&\max_{1\leq t\leq n} |\widetilde H^1(D_t)|[\E{_n}\{W_{u(j)}(D_t)^\top\vartheta_{u(j)} - W_{u(j)}(D_t)^\top\vartheta^\prime_{u(j)}\}^2]^{1/2}\\
			&\leq& M\max_{1\leq t\leq n}|W_{u(j)}(D_t)|_\infty|\vartheta_{u(j)}-\vartheta^\prime_{u(j)}|_1\\
			&\leq&M^2|\vartheta_{u(j)}-\vartheta^\prime_{u(j)}|_1\leq M^2\delta.
		\end{eqnarray*}
		It follows that 
		\begin{align*}
		\mathcal N(\delta,\widetilde{\mathcal H}_{j,M},\omega_n(\cdot,\cdot)) &\lesssim_\P  \mathcal N(\delta/M^2,\varTheta_j,|\cdot|_1)\\
		&\leq{K^{u(j)}\choose s_j}(1+2M^2/\delta)^{s_j}\\
		&\lesssim (eK^{u(j)}/s_j)^{s_j}(1+2M^2/\delta)^{s_j},
		\end{align*}
		where the last inequality is implied by the Stirling formula.
		Consider the case with $|\widetilde H_{j,M}|_{n,2}=\bigO_\P(1)$, the entropy number of the function class $\widetilde{\mathcal H}_{j,M}$ with respect to the $\omega_n(\cdot,\cdot)$ metric is bounded as follows:
		\begin{align*}
		\operatorname{ent}_{n,2}(\delta, \widetilde{\mathcal H}_{j,M})=\log \mathcal{N}(\delta|\widetilde H_{j,M}|_{n,2}, \widetilde{\mathcal H}_{j,M},\omega_n(\cdot,\cdot)) \lesssim_\P s_j \{\log (K^{u(j)}) + \log(2M^2/\delta+1)\}.
		\end{align*}
		By choosing $M = \delta n^{1/2}$, we have $\operatorname{ent}_{n,2}( \delta,\widetilde{\mathcal H}_{M})\leq \sum_{j=1}^q\operatorname{ent}_{n,2}( \delta,\widetilde{\mathcal H}_{j,M}) \lesssim_\P s \log(P_n/\delta)$, with $P_n = q\vee {n} \vee e$.
		%
		
	\end{remark}

 \begin{remark}[Alternative function class]
		The function class $\widetilde{\mathcal{H}}_j $ can be replaced by $\widetilde{\mathcal{H}}^\prime_j= \big\{d\mapsto\tilde h_j(d,W_{u(j)}(d)^\top\vartheta_{u(j)}): \max\limits_{1\leq t\leq n} |\tilde{h}_{j}(D_t,W_{u(j)}(D_t)^\top\vartheta_{u(j)})-\tilde{h}_{j}^0(D_t,W_{u(j)}(D_t)^\top\vartheta^0_{u(j)})|\leq 1\big\}$, where $\tilde h_j^0$ is associated with $\vartheta_{u(j)}^0$. In this case, additional conditions similar to 
        assumption \hyperref[A_id]{(A6)} would be required for identification. To be more specifically, we shall adopt the following assumption.
		\begin{assumption}\label{iden}
			Let $\tg(\theta) = [\E_n\{\th_{j}(D_t,W_{u(j)}(D_t)^\top\vartheta_{u(j)})\}]_{j=1}^q$ and $\widetilde G(\theta^0)=\partial_{\theta^\top}\tg(\theta)|_{\theta=\theta^0}$. Assume that $$\min_{\mathcal I:|\mathcal I|\leq s}\min_{\xi\in \mathcal C_{\mathcal I}(u)} |\widetilde G(\theta^0)\xi|_{\infty}\geq |\xi|_1 s^{-1}C(u),$$ for some $C(u)>0$.
			Moreover, assume there exists a positive constant $C$ such that $$\max_{\mathcal I:|\mathcal I|\leq s}\max_{\xi\in \mathcal C_{\mathcal I}(u)} |\widetilde G(\theta^0)\xi|_{\infty}\leq |\xi|_1 C.$$
		\end{assumption}
		Let $\widetilde{\mathcal H}_j(c)$ be the function class of $\tilde{h}_{j}$ with $|\vartheta_{u(j)} -\vartheta^0_{u(j)}|_1\leq c$ and $\widetilde{\mathcal H}^\prime_j(c)$ be that with $ \max\limits_{1\leq t\leq n} |\tilde{h}_{j}(D_t,W_{u(j)}(D_t)^\top\vartheta_{u(j)})-\tilde{h}_{j}^0(D_t,W_{u(j)}(D_t)^\top\vartheta^0_{u(j)})|\leq c$. Then, we have $\widetilde{\mathcal H}_j(c/C)\subseteq \widetilde{\mathcal H}^\prime_j(c)\subseteq \widetilde{\mathcal H}_j(cs/c(u))$. We can use this relationship to switch between the function classes. In particular, we have
		\begin{equation*}
		\sup_{\tilde h_{j} \in \widetilde{\mathcal H}^\prime_j(c)}\big|\E{_{n}}\th_{j}(D_t,v_{j,t})\big| \leq  \sup_{\tilde h_{j} \in \widetilde{\mathcal H}_j(cs/c(u))}\big|\E{_{n}}\th_{j}(D_t,v_{j,t})\big|.
		\end{equation*}
	\end{remark}

In the following theorem we provide a tail probability inequality of $L_{n,2}$. There are two terms, namely an exponential term and a polynomial term. It can be seen that the exponential bound depends on the dimensionality $P_n$ and sparsity level $s$. The polynomial rate is reflected by the term $n^{-r/2+1}c_r$.
\begin{theorem}\label{concent1}
Under Assumptions \ref{A4.2}-\ref{entropy} and the same conditions as in Lemma \ref{cont}, we have the following probability inequality:
\begin{equation}\label{eq.uniform}
\P\Big(\max_{1\leq j\leq q} \sup_{\tilde h_j \in \widetilde{\mathcal{H}}_j }\big|\E{_n}\tilde h_j(D_t,v_{j,t})\big|\geq e_n\Big)
 \lesssim\exp(- s \log P_n) + n^{-r/2+1}c_r \to 0,
\end{equation}
as $n\to\infty$, where $e_n = n^{-1/2}(s \log P_n)^{1/\gamma}\max\limits_{1\leq j\leq q}\Psi_{j,\nu,0}$, $\gamma=2/(1+2\nu)$. In particular, $\gamma=1$ and $\gamma=2/3$ correspond to the sub-Gaussian and sub-exponential cases, respectively.
\end{theorem}

	\begin{proof}
		Define the set $\varTheta= \{\vartheta_{u(j)} :|\vartheta_{u(j)} -\vartheta^0_{u(j)}|_1\leq 1,j=1,\ldots,q\}$. Given $\delta>0$, we pick $\kappa= \min k: 2^{-k}\delta < \epsilon$, for a small constant $\epsilon >0$. Let $\widetilde{\mathcal H}_M(\delta_k )$ denote the space of the functions 
        in $\widetilde{\mathcal H}_M$ corresponding to the $\delta_k$-nets ($\delta_k\defeq 2^{-k}\delta$) of $\varTheta$ with respect to the $|\cdot|_{1}$-metric (denoted by $\varTheta(\delta_k)$), such that for all $j=1,\ldots,q$, $\tilde h^0_j\in\widetilde{\mathcal H}_M(\delta_0)\subseteq\widetilde{\mathcal H}_M(\delta_1)\subseteq\cdots\subseteq\widetilde{\mathcal H}_M(\delta_\kappa)\subseteq\widetilde{\mathcal H}_M$, where $\tilde h_j^0(D_t)=\tilde h_j(D_t, W_{u(j)}(D_t)^\top\vartheta^0_{u(j)})$. To simplify the notations, we let $\tilde h_{j,t} \defeq \tilde h_j(D_t,W_{u(j)}(D_t)^\top\vartheta_{u(j)})$ and $\tilde h^0_{j,t} \defeq \tilde h_j^0(D_t)$. It can be observed that
		\begin{eqnarray*}
			\sup_{\th_{j}\in\widetilde{\mathcal H}_M} \big|{\E}_n\tilde h_j(D_t,v_{j,t})\big|& \leq& \sup_{\th_{j}\in \widetilde{\mathcal H}_M}\inf_{\th^\prime_{j} \in \widetilde{\mathcal{H}}_M(\delta_{\kappa})}  \big|{\E}_n (\th_{j,t}-\th^\prime_{j,t})\big|\\
			&&+\,\sum_{k=1}^\kappa\sup_{\th_{j} \in \widetilde{\mathcal H}_M(\delta_k)}\inf_{\th^\prime_{j} \in \widetilde{\mathcal{H}}_M(\delta_{k-1})} \big|{\E}_n (\th_{j,t}-\th^\prime_{j,t})\big|+  \sup_{\th_{j} \in \widetilde{\mathcal H}_M(\delta_0)} \big|{\E}_n \th_{j,t}\big|\\
			&\leq & 2M^2\epsilon   +\sum_{k=1}^\kappa \sup_{\th_j \in \widetilde{\mathcal{H}}_M(\delta_{k})}\inf_{\th^\prime_{j} \in \widetilde{\mathcal{H}}_M(\delta_{k-1})}\big|{\E}_n (\th_{j,t}-\th^\prime_{j,t})\big|+  \sup_{\th_{j} \in \widetilde{\mathcal{H}}_M(\delta_0)} \big|{\E}_n\th_{j,t}\big|,
		\end{eqnarray*}
		where the last inequality is due to the definition of $\kappa$. By breaking the above inequality with $\sum^\kappa_{k=0}\zeta_{k} = 1$, we have
		\begin{eqnarray}\label{sum}
		\P\Big(\sup_{\th_{j}\in\widetilde{\mathcal H}_M} \big|{\E}_n\tilde h_j(D_t,v_{j,t})\big|\geq u \Big)
		&\leq &\sum_{k=1}^\kappa\P\Big( \sup_{\th_j \in \widetilde{\mathcal{H}}_M(\delta_{k})}\inf_{\th^\prime_{j} \in \widetilde{\mathcal{H}}_M(\delta_{k-1})}\big|{\E}_n (\th_{j,t}-\th^\prime_{j,t})\big|\geq (u - 2M^2\epsilon)\zeta_{k}\Big)\nonumber\\
		&& + \,\, \P\Big(\sup_{\th_{j} \in \widetilde{\mathcal{H}}_M(\delta_0)}\big|{\E}_n\th_{j,t}\big| \geq (u - 2M^2\epsilon)\zeta_{0}\Big)\notag\\
		&\leq&\sum_{k=1}^\kappa N_k\sup_{\th_j \in \widetilde{\mathcal{H}}_M(\delta_{k})}\inf_{\th^\prime_{j} \in \widetilde{\mathcal{H}}_M(\delta_{k-1})}\P\big(\big|{\E}_n (\th_{j,t}-\th^\prime_{j,t})\big| \geq (u - 2M^2\epsilon)\zeta_{k}\big)\notag\\
		&&+ \,\, N_0\sup_{\th_{j} \in \widetilde{\mathcal{H}}_M(\delta_0)}\P\big(\big|{\E}_n\th_{j,t}\big| \geq (u - 2M^2\epsilon)\zeta_{0}\big),
		\end{eqnarray}
		where $N_k \defeq \mathcal{N}(\delta_k, \varTheta, |\cdot|_{1})$.
		Note that $\th_{j,t},\th^\prime_{j,t}$ are associated with $\vartheta_{u(j)}, \vartheta^\prime_{u(j)}$, respectively.
		Similarly to Definition \ref{dep}, let $\tilde{h}_{j,t}^\ast$ denote $\tilde{h}_{j,t}$ with the innovations $\xi_0,\eta_0$ replaced by $\xi_0^\ast,\eta_0^\ast$ (likewise for $\tilde{h}^{\prime\ast}_{j,t}$ and \textbf{$\tilde{h}^{0\ast}_{j,t}$}). For any $j$ and $t$, we have
		\begin{eqnarray*}
			&&\Big\|\sup_{\th_j \in \widetilde{\mathcal{H}}_M(\delta_{k})}\inf_{\th^\prime_{j} \in \widetilde{\mathcal{H}}_M(\delta_{k-1})} |\tilde{h}_{j,t} - \tilde{h}^\prime_{j,t}  - (\tilde{h}_{j,t}^\ast - \tilde{h}_{j,t}^{\prime\ast})| \Big\|_r\\
			&\leq& \Big\|\sup_{\th_j \in \widetilde{\mathcal{H}}_M(\delta_{k})}|\tilde{h}_{j,t} - \tilde{h}_{j,t}^0  - (\tilde{h}_{j,t}^\ast - \tilde{h}_{j,t}^{0\ast})| \Big\|_r  + \Big\|\sup_{\th^\prime_{j} \in \widetilde{\mathcal{H}}_M(\delta_{k-1})} |\tilde{h}^\prime_{j,t} -  \tilde{h}_{jt}^0  - ( \tilde{h}^{\prime\ast}_{jt} - \tilde{h}_{j,t}^{0\ast}) |\Big\|_r\\
			&\lesssim&\Big\|\sup_{\vartheta_{u(j)}\in\varTheta(\delta_k)}|(\partial \tilde{h}_{j,t}^0/\partial \vartheta_{u(j)} - \partial \tilde{h}_{j,t}^{0\ast}/\partial \vartheta_{u(j)})^{\top} (\vartheta_{u(j)}- \vartheta_{u(j)}^0)| \Big\|_r
			\\&&+ \,\, \Big\|\sup_{\vartheta^\prime_{u(j)}\in\varTheta(\delta_{k-1})}|( \partial \tilde{h}_{j,t}^0/\partial \vartheta_{u(j)} - \partial \tilde{h}_{j,t}^{0\ast}/\partial \vartheta_{u(j)})^{\top} (\vartheta^\prime_{u(j)}- \vartheta_{u(j)}^0)|\Big\|_r\\
			&\leq&\||\partial \tilde{h}_{j,t}^0/\partial \vartheta_{u(j)} - \partial \tilde{h}_{j,t}^{0\ast}/\partial \vartheta_{u(j)} |_{\infty} \|_r \sup_{\vartheta_{u(j)}\in\varTheta(\delta_k)}|\vartheta_{u(j)}- \vartheta_{u(j)}^0|_1\\
			&& + \,\, \||\partial \tilde{h}_{j,t}^0/\partial \vartheta_{u(j)} - \partial \tilde{h}_{j,t}^{0\ast}/\partial \vartheta_{u(j)}|_{\infty}\|_r \sup_{\vartheta^\prime_{u(j)}\in\varTheta(\delta_{k-1})}|\vartheta^\prime_{u(j)}- \vartheta_{u(j)}^0|_1\\
			&\leq& 3\delta_k\||\partial \tilde{h}_{j,t}^0/\partial \vartheta_{u(j)} - \partial \tilde{h}_{j,t}^{0\ast}/\partial \vartheta_{u(j)}|_{\infty}\|_r.
		\end{eqnarray*}
		It follows that the dependence adjusted norm of $|\th_{j,t}-\th^\prime_{j,t}|$ is bounded by $3\delta_k\Psi_{j,\nu,\varsigma} $, where $\Psi_{j,\nu,\varsigma} =  \||\partial \tilde{h}_{j,\cdot}^0/\partial \vartheta_{u(j)}|_\infty\|_{\psi_\nu,\varsigma}$.
		
		Combining \eqref{sum} and Lemma \ref{exp}, we have the following concentration inequality
		\begin{eqnarray}\label{prob}
		&&\P\Big(\max_{1\leq j\leq q}\sup_{\th_{j}\in\widetilde{\mathcal H}_{j,M}} \big|{\E}_n\tilde h_j(D_t,v_{j,t})\big|\geq u\Big)\notag\\
		&\leq&\P\Big(\sup_{\th_{j}\in\widetilde{\mathcal H}_M} \big|{\E}_n\tilde h_j(D_t,v_{j,t})\big|\geq u\Big)\nonumber\\
		&\lesssim&\sum^\kappa_{k=0} \exp\big(\log N_k +\log q  - C_{\gamma}\{\sqrt{n}(u-2M^2\epsilon)\zeta_k\}^{\gamma}/\big(3\delta_k \max_{1\leq j\leq q}\Psi_{j,\nu,0}\big)^{\gamma} \big),
		\end{eqnarray}
		where $\gamma = \nu/(1+2\nu)$, and we need to pick up $\zeta_k$'s such that the right hand side tends to zero as $n\to\infty$.
		
		Define $\varphi_n\defeq \sqrt{n}(u-2M^2\epsilon)\big/\max\limits_{1\leq j\leq q}\Psi_{j,\nu,0}$ and consider $\zeta_{k} = 3(C')^{1/\gamma}2^{-k} \delta(\log N_k \vee \log q)^{1/\gamma} \varphi_n^{-1} $. We have the term involved in \eqref{prob} is given by
		$$C_\gamma\zeta_{k}^{\gamma} \varphi_n^{\gamma}(3\delta_k)^{-\gamma} =C_{\gamma}C' (\log N_k\vee\log q) \geq \log N_k \vee \log q,$$
		for a sufficiently large constant $C'$.
		It is left to justify that $\sum_{k=0}^\kappa \zeta_k \leq 1$, with a properly chosen ``$u-2M^2\epsilon$''. Observe that
		$\sum_{k=0}^\kappa \zeta_k \lesssim \sum^\kappa_{k=0}2^{-k}\delta(\log N_k\vee\log q)^{1/\gamma} \varphi_n^{-1}$, which means we could have $\sum_{k=0}^\kappa \zeta_k$ is bounded by a constant, provided $\sum^\kappa_{k=0}2^{-k}\delta(\log N_k\vee\log q)^{1/\gamma} \lesssim \varphi_n$. 
		Thus, it suffices to verify that
		$$\int_{\epsilon}^{\delta}\big(\log\mathcal{N}(x, \varTheta, |\cdot|_1)\big)^{1/\gamma}dx  \vee (\delta-\epsilon) (\log q)^{1/\gamma}  \lesssim \sqrt{n}(u-2M^2\epsilon)\big/\max_{1\leq j\leq q}\Psi_{j,\nu,0}.$$
		We set $\delta$ to be a constant. By letting $\epsilon \lesssim{n}^{-3/2}$, for $\gamma=1,2/3$, we have
		$$\int_{\epsilon}^{\delta}\big(\log \mathcal{N}(x, \varTheta, |\cdot|_1)\big)^{1/\gamma}dx \vee (\delta-\epsilon) (\log q)^{1/\gamma}\lesssim (s \log P_n)^{1/\gamma} .$$
		Moreover, by letting $u=\bigO\big(n^{-1/2}(s \log P_n)^{1/\gamma}\max\limits_{1\leq j\leq q}\Psi_{j,\nu,0}\big)$ and choosing $M$ such that $M\lesssim n^{1/2}\delta$ and $2M^2\epsilon\lesssim n^{-1/2}$, we could achieve $n^{-1/2}(s \log P_n)^{1/\gamma}\max\limits_{1\leq j\leq q}\Psi_{j,\nu,0}\lesssim (u-2M^2\epsilon) $.
		
		Based on the discuss above, we shall pick $\zeta_k \geq \delta(2^k/3)^{-1}k^{1/2}$. It can be shown that $\sum_{k=0}^\kappa \delta(2^k/3)^{-1}k^{1/2} \lesssim \int_0^{\infty}  2^{-x}x^{1/2} dx=\sqrt{\pi}/\{2(\log2)^{3/2}\}$.
		
		
		So far, we have analyzed the right hand side of \eqref{prob}, which is of the order as follows:
		\begin{align*}
		\sum_{k=0}^\kappa \exp\big(\log N_k +\log q  - C_\gamma\zeta_{k}^{\gamma} \varphi_n^{\gamma}(3\delta_k)^{-\gamma}\big) &\leq \sum_{k=0}^\kappa \exp(\log N_k +\log q  - C_\gamma k\varphi_n^\gamma)\\
		&\lesssim\exp(-\varphi_n^\gamma)\lesssim \exp(-s\log P_n).
		\end{align*}
		Recognize that
		\begin{eqnarray*}
			&&\P\Big(\max_{1\leq j\leq q} \sup_{\tilde h_j \in \widetilde{\mathcal{H}}_j }\big|\E{_n}\tilde h_{j}(D_t,v_{j,t})\big|\geq u\Big)\\
			&\leq & \P\Big(\sup_{\tilde h_j \in \widetilde{\mathcal H}_M }\big|\E{_n}\tilde h_{j}(D_t,v_{j,t})\IF(\mathcal A_M)\big|\geq u/2\Big) + \P\Big(\max_{1\leq j\leq q} \sup_{\tilde h_j \in \widetilde{\mathcal H}_j }\big|\E{_n}\tilde h_{j}(D_t,v_{j,t})\IF(\mathcal A_M^c)\big|\geq u/2\Big)\\
			&\leq& \P\Big(\sup_{\tilde h_j \in \widetilde{\mathcal H}_M }\big|\E{_n}\tilde h_{j}(D_t,v_{j,t})\IF(\mathcal A_M)\big|\geq u/2\Big) + \P\big(\big|\E{_n}H(D_t)\IF(\mathcal A_M^c)\big|\geq u/2\big)\\
			&\leq&\P\Big(\sup_{\tilde h_j \in \widetilde{\mathcal H}_M }\big|\E{_n}\tilde h_{j}(D_t,v_{j,t})\IF(\mathcal A_M)\big|\geq u/2\Big) + \P(\mathcal{A}_M^c),
		\end{eqnarray*}
		where $\mathcal A_M^c$ is denoted as the complement of event $\mathcal A_M$. The last step is to bound the probability of $\mathcal{A}_M^c$. By Markov inequality, we have
		\begin{eqnarray*}
			\P(\mathcal{A}_M^c) &\leq& \sum_{t=1}^n \P(|\widetilde H(D_t)|\geq M) + \sum_{t=1}^n \P(|\widetilde H^{1}(D_t)|\geq M)+ \sum_{t=1}^n\P\big(\max_{1\leq j\leq q}|W_{u(j)}(D_t)|_{\infty}\geq M\big) \\
			&\leq& n^{-r/2+1}c_r,
		\end{eqnarray*}
		where $c_r\defeq \E|\widetilde H(D_t)|^r \vee \E|\widetilde H^1(D_t)|^r \vee \E\Big(\max\limits_{1\leq j\leq q}|W_{u(j)}(D_t)|_\infty^r\Big)$. By letting $M=n^{1/2}\delta$, $u=n^{-1/2}(s \log P_n)^{1/\gamma}\max\limits_{1\leq j\leq q}\Psi_{j,\nu,0}$, we obtain the desired probability inequality.
	\end{proof}

Next, we handle the bound for $L_{n,1}=\E{_{n}}h_{j}(D_t,v_{j,t}) - \E{_{n}}\E\{h_{j}(D_t,v_{j,t})|\mathcal{F}_{t-1}\}$, which is a summand of martingale differences. We shall derive the tail probability of $L_{n,1}$ in Corollary \ref{cont1}. Let $\bar h_{j,t}\defeq\bar h_{j}(D_t,v_{j,t})  = h_{j}(D_t,v_{j,t}) -  \E(h_{j}(D_t,v_{j,t})|\mathcal{F}_{t-1})$ and define the function class
$$\ohH_{j} =\big\{d\mapsto\bar h_j(d,W_{u(j)}(d)^\top\vartheta_{u(j)}):|\vartheta_{u(j)}-\vartheta_{u(j)}^0|_1\leq 1\big\}.$$


\begin{assumption}\phantomsection\label{covernumber}
	\begin{enumerate}
		\item[i)] The function class $\ohH_{j}$ is enveloped with
		$$\max_{1\leq j\leq q}\sup_{\bar h_j\in\overline{\mathcal{H}}_{j}}|\bar h_{j}(d,W_{u(j)}(d)^\top\vartheta_{u(j)})|\leq \overline H(d).$$
		Suppose there exists $\delta>0$ such that $\E[|\oH(D_t)|\IF\{|\oH(D_t)| > \sqrt{n}\delta\}]\to0$ as $n\to\infty$.
		\item[ii)] Consider the class of functions $\overline{\mathcal H}=\big\{d\mapsto\bar h_j(d,W_{u(j)}(d)^\top\vartheta_{u(j)}): j=1,\ldots,q, |\vartheta_{u(j)}-\vartheta_{u(j)}^0|_1\leq 1\big\}$. Assume the entropy number of $\ohH$ with respect to the $\mathcal L^2$-metric is bounded as $\operatorname{ent}_{2}(\delta, \ohH)\lesssim s \log(P_n/\delta)$, and $2s\log P_n\lesssim n^{1/3}$.
	\end{enumerate}
\end{assumption}

Define the truncated function $\bar h_{j}(\cdot)$ as
$$\bar h_j^c(\cdot)=\bar h_j(\cdot)\IF(|\bar h_{j}(\cdot)|\leq c) - \E\{\bar h_j(\cdot)\IF(|\bar h_{j}(\cdot)|\leq c) |\mathcal F_{t-1}\},$$
and let $\bar h_{j,t}^c\defeq\bar h_j^c(D_t,v_{j,t})$, for some $c>0$. Accordingly, the space of the truncated functions corresponding to the function class $\overline{\mathcal H}$ is denoted by $\overline{\mathcal H}_c$.
\begin{assumption}\phantomsection\label{rhon}
	\begin{enumerate}
	\item[i)] For any $\bar h_j^c, \bar h_j^{\prime c} \in \ohH_c$, $\exists L>0$ such that $\P \big(\tilde\omega_{n}(\bar h_j^c,\bar h^{\prime c}_j)/\omega_{n}(\bar h_j^c,\bar h^{\prime c}_j) >L\big) \to 0$ as $n \to \infty$, where $\omega_{n}(\bar h_j^c,\bar h^{\prime c}_j) = \{\E_n(\bar h^c_{j,t}-\bar  h^{\prime c}_{j,t})^2\}^{1/2}$ and $\tilde\omega_{n}(\bar h_j^c,\bar h^{\prime c}_j) =  [\E_n\{\E(\bar  h^c_{j,t}-\bar h^{\prime c}_{j,t}|\mathcal{F}_{t-1})\}^2]^{1/2}$.
	\item[ii)]
	Assume that 
	$\bar h_{j,t}^{c}$ is a sub-Gaussian random variable and
	the dependence adjusted sub-Gaussian norm of $\E\{(\bar h_{j,t}^{c}-\bar h_{j,t}^{\prime c})^2|\mathcal{F}_{t-1}\}$ (denoted by $\Lambda_{j,\nu,\varsigma,c}$ with $\nu=1$) satisfies $\Lambda_{j,\nu,\varsigma,c}=\bigO(n^{-1})$ if $\tilde\omega_{n}(\bar h_j^c,\bar h^{\prime c}_j) \lesssim_\P n^{-1/2}$.
	\end{enumerate}
\end{assumption}

In Assumption \ref{covernumber}, i) concerns a moment condition on the envelope and ii) restricts the complexity of the function class. Assumption \ref{rhon} i) is imposed on the closeness between the two metrics $\tilde\omega_{n}(\cdot,\cdot)$ and $\omega_n(\cdot,\cdot)$ and the condition that $\Lambda_{j,\nu,\varsigma}=\bigO(n^{-1})$  if $\tilde\omega_{n}(\bar h_j^c,\bar h^{\prime c}_j) \lesssim_\P n^{-1/2}$ in ii) can be inferred by the smoothness of $\E\{(\bar h_{j,t}^{c}-\bar h_{j,t}^{\prime c})^2|\mathcal{F}_{t-1}\}$. We note that our results can be extended to more general moment conditions by replacing the tail probability accordingly and a more restrictive rate on the dimensionality and sparsity would be required.

\begin{theorem}\label{unbound}
Under Assumptions \ref{covernumber}-\ref{rhon} and the same conditions as in Lemma \ref{cont}, we have
$$\E\Big(\max_{1\leq j\leq q}\sup_{\bar h_{j} \in \overline{\mathcal H}_j}\big|\E{_n}\bar h_{j}(D_t,v_{j,t})\big|\Big) \lesssim \delta\sqrt{(s\log P_n)/n}.$$
\end{theorem}

\begin{proof}
		Recall the definition of the truncated function
		$$\bar h_j^c(\cdot)=\bar h_j(\cdot)\IF(|\bar h_{j}(\cdot)|\leq c) - \E\{\bar h_j(\cdot)\IF(|\bar h_{j}(\cdot)|\leq c) |\mathcal F_{t-1}\}.$$
		Applying a truncation argument for $\E_n\bar h_{j,t}$ gives us
		\begin{eqnarray*}
			\big|\E{_n}\bar h_{j,t}\big| &\leq  & \big|\E{_n}\{\bar h_{j,t}\IF(|\bar h_{j,t}|\leq c)\} - \E{_n}\E\{\bar h_{j,t}\IF(|\bar h_{j,t}|\leq c)|\mathcal F_{t-1}\}\big|\\
			&&+ \big|\E{_n}\{\bar h_{j,t}\IF(|\bar h_{j,t}|> c)\} - \E{_n}\E\{\bar h_{j,t}\IF(|\bar h_{j,t}|> c)|\mathcal F_{t-1}\}\big|.
		\end{eqnarray*}
		In particular, the second term on the right-hand side can be bounded as:
		\begin{eqnarray*}
			&&\big|\E{_n}\{\bar h_{j,t}\IF(|\bar h_{j,t}|> c)\} - \E{_n}\E\{\bar h_{j,t}\IF(|\bar h_{j,t}|> c)|\mathcal F_{t-1}\}\big|\\
			& \leq& \E{_n}\{|\overline H_{t}|\IF(|\overline H_{t}|>c)\} + \E{_n}\E\{|\overline H_{t}|\IF(|\overline H_{t}|>c)|\mathcal F_{t-1}\},
		\end{eqnarray*}
		where $\overline H(\cdot)$ is the envelope of $\overline{\mathcal H}$, and $\overline H_{t} \defeq\overline H(D_t)$. It follows that
		\begin{eqnarray*}
			\E\Big(\max_{1\leq j\leq q}\sup_{\bar h_{j} \in \overline{\mathcal H}_j}\big|n\E{_n}\bar h_{j,t}\big|\Big)&\leq&\E\Big(\sup_{\bar h_{j} \in \overline{\mathcal H}}\big|n\E{_n}\bar h_{j,t}\big|\Big)\\
			&\leq& \E\Big(\sup_{\bar h^c_{j} \in \overline{\mathcal H}_c}\big|n\E{_n}\bar h^c_{j,t} \big|\Big) + 2n\E\{|\overline H_{t}|\IF(|\overline H_{t}|>c)\}\\
			&=:&I_n + II_n.
		\end{eqnarray*}
		According to Assumption \ref{covernumber} i), we shall choose $c= \sqrt{n}\delta$. For any $\bar h_j^c, \bar h_j^{\prime c} \in \ohH_c$, we pick $\tau_n = \max_{1 \leq k \leq n }: \tilde\omega_{k}(\bar h_{j}^c,\bar h^{\prime c}_{j})\leq L\omega_n(\bar h_{j}^c,\bar h^{\prime c}_{j})$ as the stopping time. Then we have $I_n$ is bounded by
		$$I_n\leq \E\Big\{\sup_{\bar h^c_{j} \in \overline{\mathcal H}_c}\big|n\E{_n}\bar h^c_{j,t} \big|\IF(\tau_n=n)\Big\} + 2cn\E\{\IF(\tau_n\neq n)\}.$$
		Given Assumption \ref{rhon} i), for any $\bar h^c_j, \bar h_j^{\prime c} \in \ohH_c$, as $n\to\infty$, with probability approaching 1, we have $\tilde\omega_{n}(\bar h^c_{j},\bar h^{\prime c}_{j}) \leq L \omega_{n}(\bar h_{j}^c,\bar h^{\prime c}_{j})$, which implies that $\E\{\IF(\tau_n\neq n)\}\to0$.
		
		Let $\mathcal B^c_k$ denote the $2^{-k}\delta$-covering set of $\overline{\mathcal H}_c$ with respect to the metric $\omega_n(\cdot,\cdot)$, for $k=0,1,\ldots,\overline K$, where $\overline K$ satisfies $2^{-\bar K}=\bigO(n^{-1/2})$ and $\overline K=\bigO(\log n)$. Let $\bar h_j^{\ast c}=\arg\sup\limits_{\bar h^c_j\in\overline{\mathcal H}_c}\big|\E_n\bar h^c_{j,t}\big|$, and $\bar h_j^{(k)c}=\arg\inf\limits_{\bar h_j^c\in \mathcal B^c_k}\omega_n(\bar h^c_j,\bar h^{\ast c}_j)$ for $k=1,\ldots,\overline K$. Note that by these definitions we have $\omega_n(\bar h^{(k)c}_j,\bar h^{\ast c}_j)\leq 2^{-k}\delta$ holds for all $k$, which implies that
		$$\omega_n(\bar h^{(k-1)c}_j,\bar h^{(k)c}_j)\leq \omega_n(\bar h^{(k-1)c}_j,\bar h^{\ast c}_j) + \omega_n(\bar h^{(k)c}_j,\bar h^{\ast c}_j)\leq 3\cdot2^{-k}\delta.$$
		In addition, we let $\bar h^{(0)c}_{j}(\cdot)\equiv0$ and assume that $\omega_n(\bar h^{(0)c}_{j}, \bar h^{\ast c}_{j})\leq\delta$.

		Analogue to the definition of $\bar h_j^c$, for $c_k = 2^{-k}c$, we define $\bar h_j^{[c_k,c_{k-1}]}(\cdot)=\bar h_j(\cdot)\IF(c_k\leq|\bar h_{j}(\cdot)|\leq c_{k-1}) - \E\{\bar h_j(\cdot)\IF(c_k\leq|\bar h_{j}(\cdot)|\leq c_{k-1}) |\mathcal F_{t-1}\}$. Accordingly, we define $\bar h_j^{\ast[c_k,c_{k-1}]}$, which is similar to the definition of $\bar h_j^{\ast c}$.
		By a standard chaining argument, we can express any partial sum of  $\bar h_{j,t}^c$
		by a telescope sum:
		\begin{eqnarray*}
			\Big|\sum^{\tau_n}_{t=1} \bar h^c_{j,t}\Big| & \leq &\Big| \sum^{\tau_n}_{t=1}\bar h_{j,t}^{(0)c}\Big|+ \Big|\sum^{\overline{K}}_{k=1} \sum^{\tau_n}_{t=1}(\bar h_{j,t}^{(k)c_{k-1}} - \bar h_{j,t}^{(k-1)c_{k-1}})\Big|+\Big|\sum^{\tau_n}_{t=1}(\bar h_{j,t}^{(\overline{K})c_{\overline{K}}} - \bar h_{j,t}^{\ast c_{\overline{K}}})\Big|\\
			&& +\Big|\sum^{\overline{K}}_{k=1} \sum^{\tau_n}_{t=1}(\bar h_{j,t}^{\ast[c_k,c_{k-1}]} - \bar h_{j,t}^{(k)[c_k,c_{k-1}]})\Big|.
		\end{eqnarray*}
		On the event $\{\tau_n=n\}$, it follows that
		\begin{eqnarray*}
			\E\Big(\sup_{\bar h^c_{j} \in \overline{\mathcal H}_c}\big|n\E{_n}\bar h^c_{j,t} \big|\Big) \lesssim \sum_{k=1}^{\overline{K}}\E\Bigg(\max_{\substack{\bar h_j^{c_{k-1}} \in \mathcal B_{k}^{c_{k-1}} , \bar h_j^{\prime c_{k-1}}\in \mathcal B_{k-1}^{c_{k-1}},\\ \omega_{n}(\bar h_j^{c_{k-1}},\bar h_j^{\prime c_{k-1}}) \leq 3\cdot2^{-k}\delta}}\big|\E{_n}(\bar h_{j,t}^{c_{k-1}} - \bar h_{j,t}^{\prime c_{k-1}})\big|\Bigg) + \bigO(n^{-1/2}).
		\end{eqnarray*}
		To bound the $k$th component in the above inequality, we shall apply Lemma \ref{max}. In particular, for a ball with $\omega_n(\bar h_j^{c_{k-1}},\bar h_j^{\prime c_{k-1}})\leq 3\cdot2^{-k}\delta$, by Assumption \ref{rhon} i), we have $\{\tilde\omega_n(\bar h_j^{c_{k-1}},\bar h_j^{\prime c_{k-1}})\}^2\leq L^2\{\omega_n(\bar h_j^{c_{k-1}},\bar h_j^{\prime c_{k-1}})\}^2\leq (3\cdot 2^{-k}\delta L)^2$ holds with probability approaching 1. We choose $A_k  = 2c_{k-1}/\sqrt{s\log( P_n/(2^{-k}\delta))}$ and $B_k = 2(3\cdot2^{-k}\delta L)^2n=2(3c_kL)^2$. We then verify the condition ``$\sqrt{n}\P(\mathcal{G}^c) \lesssim A \log(1+|\mathcal{A}|)+\sqrt{B}\sqrt{\log(1+|\mathcal{A}|)}$'' in Lemma \ref{max} for $k=\overline K$, with the other results following similarly.
		
		Provided Assumption \ref{rhon} ii), we have $\P(|\bar h_{j,t}^{c_{\overline K-1}}-\bar h_{j,t}^{\prime c_{\overline K-1}}|> x)\leq C\exp\{-x^2/(2^{-\overline K}\delta b)^2\}$ for some $C,b>0$. It follows that
		\begin{eqnarray*}
			&&\P\Bigg(\max_{1\leq t\leq n}\max_{\substack{\bar h_j^{c_{\overline K-1}} \in \mathcal B_{\overline K}^{c_{\overline K-1}} , \bar h_j^{\prime c_{\overline K-1}}\in \mathcal B_{\overline K-1}^{c_{\overline K-1}},\\ \omega_{n}(\bar h_j^{c_{\overline K-1}},\bar h_j^{\prime c_{\overline K-1}}) \leq 3\cdot2^{-\overline K}\delta}}(\bar h_{j,t}^{c_{\overline K-1}} - \bar h_{j,t}^{\prime c_{\overline K-1}})\geq A_{\overline K}\Bigg)\\
			&\lesssim& n\mathcal N^2(2^{-\overline K}\delta,\overline{\mathcal H}_{c_{\overline K-1}}, \omega_n(\cdot,\cdot))\exp\{-A_{\overline K}^2/(2^{-2\overline K}\delta^2)\}\\
			&\lesssim_\P& n\sup_{\mathcal Q}\mathcal N^2(2^{-\overline K}\delta, \overline{\mathcal H}_{c_{\overline K-1}}, \|\cdot\|_{\mathcal Q,2})\exp\{ -n c_{\overline{K}-1}^2/(s\log( P_n/(2^{-\overline K}\delta))\delta^2)\}\\
			&\lesssim& \exp\{2s\log(P_n/(2^{-\overline K}\delta)) -n c_{\overline{K}-1}^2/(s\log( P_n/(2^{-\overline K}\delta))\delta^2)\}.
		\end{eqnarray*}
		We set $\delta$ to be a constant. Assumption \ref{covernumber} ii)) ensures that $2(s\log(P_n/(2^{-\overline K}\delta)))^{2}<n c_{\overline{K}-1}^2/\delta^2$, which makes the tail probability tends to zero.
		
		By Lemma \ref{exp}, we obtain that
		\begin{eqnarray*}
			&&\hspace{-0.45cm}\P\Bigg(\max_{\substack{\bar h_j^{c_{\overline K-1}} \in \mathcal B_{\overline K}^{c_{\overline K-1}} , \bar h_j^{\prime c_{\overline K-1}}\in \mathcal B_{\overline K-1}^{c_{\overline K-1}},\\ \omega_{n}(\bar h_j^{c_{\overline K-1}},\bar h_j^{\prime c_{\overline K-1}}) \leq 3\cdot2^{-\overline K}\delta}}\big|n\E{_n}\E\{(\bar h_{j,t}^{c_{\overline K-1}} - \bar h_{j,t}^{\prime c_{\overline K-1}})^2|\mathcal{F}_{t-1}\}- n\E\{(\bar h_{j,t}^{c_{\overline K-1}} - \bar h_{j,t}^{\prime c_{\overline K-1}})^2\}\big|\geq B_{\overline{K}}\Bigg)\\
			&\lesssim& \mathcal N^2(2^{-\overline K}\delta,\overline{\mathcal H}_{c_{\overline K-1}}, \omega_n(\cdot,\cdot))\exp\big\{-C_\gamma B_{\overline{K}}^{\gamma}/\big(\sqrt{n}\max_{1\leq j\leq q}\Lambda_{j,\nu,0,c_{\overline K-1}}\big)^{\gamma}\big\}\\
			& \lesssim_\P& \exp\big\{2s (\log (P_n/2^{-\overline{K}}\delta)) -C_\gamma B_{\overline{K}}^{\gamma}/\big(\sqrt{n}\max_{1\leq j\leq q}\Lambda_{j,\nu,0,c_{\overline K-1}}\big)^{\gamma}\big\},
		\end{eqnarray*}
		where $\Lambda_{j,\nu,\varsigma,c}$ is defined in Assumption \ref{rhon} ii). Note that $\nu=1$ and $\gamma=2/3$ for the sub-Gaussian case. Since $n\E{_n}\E\{(\bar h_{j,t}^{c_{\overline K-1}} - \bar h_{j,t}^{\prime c_{\overline K-1}})^2|\mathcal{F}_{t-1}\}\lesssim_\P B_{\overline K}$, it can be inferred that $n\E\{(\bar h_{j,t}^{c_{\overline K-1}} - \bar h_{j,t}^{\prime c_{\overline K-1}})^2\}\lesssim B_{\overline{K}}$. Then, we have the tail probability approaching 0 as $2s (\log (P_n/2^{-\overline{K}}\delta))\leq C_\gamma B_{\overline{K}}^{\gamma}/\big(\sqrt{n}\max\limits_{1\leq j\leq q}\Lambda_{j,\nu,0,c_{\overline K-1}}\big)^{\gamma}\lesssim n^{\gamma/2}$ can be guaranteed by Assumption \ref{covernumber} ii).
		
		Combing the two tail probability inequities above shows that the probability of the union of these two tail events decays exponentially. 
		This implies that the required condition in Lemma \ref{max} holds true.
		Thus, on the events $\{\exists L>0, \text{s.t. } \tilde\omega_{n}(\bar h_j^c,\bar h^{\prime c}_j)/\omega_{n}(\bar h_j^c,\bar h^{\prime c}_j) \leq L, \forall \bar h_j^c, \bar h_j^{\prime c} \in \ohH_c\}$ and $\{\tau_n = n \}$, we have
		\begin{eqnarray*}
			I_n&\leq& \E\Big\{\sup_{\bar h^c_{j} \in \overline{\mathcal H}_c}\big|n\E{_n}\bar h^c_{j,t} \big|\Big\} \\
			& \lesssim& \sum^{\overline{K}}_{k=1} \big\{A_k\log (1+  \mathcal{N}^2(2^{-k}\delta,\overline{\mathcal{H}}_c, \omega_{n}(\cdot,\cdot))) + c_kL \sqrt{\log (1+ \mathcal{N}^2(2^{-k}\delta, \mathcal{H}_c, \omega_{n}(\cdot,\cdot)))} \big\}\\
			&\lesssim& \sqrt{n}\int^1_{0} \delta \sqrt{\log \mathcal{N}(x\delta, \overline{\mathcal{H}}_c, \omega_{n}(\cdot,\cdot) )}dx\\
			&\lesssim_\P& \sqrt{n}\int^1_{0} \delta \sqrt{\log \sup_{\mathcal Q}\mathcal{N}(x\delta, \overline{\mathcal{H}}_c, \|\cdot\|_{\mathcal Q,2})}dx\\
			&\lesssim & \sqrt{n}\int^\delta_{0} \{s\log (P_n/x)\}^{1/2}dx\\
			&\lesssim& \delta\sqrt{ns\log P_n}.
		\end{eqnarray*}
		Moreover, by Assumption \ref{covernumber} i), we get
		\begin{eqnarray*}
			II_n/n &=& 2\E[ |\oH(D_t)| \IF\{|\oH(D_t)| > c\}] \to 0, \text{ as } n\to\infty.
		\end{eqnarray*}
		Then the conclusion that $\E\Big(\max\limits_{1\leq j\leq q}\sup\limits_{\bar h_{j} \in \overline{\mathcal H}_j}\big|n\E{_n}\bar h_{j,t}\big|\Big)\leq \delta\sqrt{(s\log P_n)/n}$ follows.
	\end{proof}
 
As a consequence of Theorem \ref{unbound}, we have the following probability inequality.
\begin{corollary}
\label{cont1}
Suppose the conditions in Theorem \ref{unbound} hold. Then, we have
$$\max_{1\leq j\leq q}\sup_{\bar h_{j} \in \overline{\mathcal H}_j}\big|\E{_n}\bar h_{j}(D_t,v_{j,t})\big|\lesssim_\P \delta\sqrt{(s\log P_n)/n}.$$
\end{corollary}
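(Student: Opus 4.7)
The plan is to deduce the probability bound directly from the expectation bound established in Theorem \ref{unbound} via Markov's inequality. Indeed, the random variable
$$Z_n\defeq\max_{1\leq j\leq q}\sup_{\bar h_{j} \in \overline{\mathcal H}_j}\big|\E{_n}\bar h_{j}(D_t,v_{j,t})\big|$$
is non-negative, and Theorem \ref{unbound} asserts that $\E(Z_n) \leq C\,\delta\sqrt{(s\log P_n)/n}$ for some constant $C>0$ depending only on the constants appearing in Assumptions \ref{covernumber}-\ref{rhon}. Setting $a_n \defeq \delta\sqrt{(s\log P_n)/n}$, Markov's inequality gives, for every $M>0$,
$$\P(Z_n > M a_n) \leq \frac{\E(Z_n)}{M a_n} \leq \frac{C}{M}.$$

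Since the right-hand side can be made arbitrarily small by choosing $M$ large (uniformly in $n$), this is precisely the statement $Z_n = \bigO_\P(a_n)$, i.e. $Z_n \lesssim_\P a_n$, which is the claim of the corollary.

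There is no real obstacle here; the substantive probabilistic and combinatorial work (truncation, chaining, control of the uniform entropy, and handling the dependence via the conditional-variance metric $\tilde\omega_n$ vs.\ $\omega_n$) has already been carried out in the proof of Theorem \ref{unbound}. The only thing worth remarking is measurability of the supremum: because the index set $\overline{\mathcal H}_j$ is parameterized by the finite-dimensional $\vartheta_{u(j)}$ ranging over an $\ell_1$-ball and the maps are continuous in that parameter, $Z_n$ is a well-defined random variable, so the Markov step is rigorous.
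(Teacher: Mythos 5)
Your proposal is correct and coincides with the paper's (implicit) argument: the corollary is stated as an immediate consequence of Theorem \ref{unbound}, obtained by applying Markov's inequality to the nonnegative supremum whose expectation is bounded by $\delta\sqrt{(s\log P_n)/n}$. The remark on measurability of the supremum is a reasonable extra precaution but not needed beyond what you wrote.
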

Theorem \ref{unbound} and Corollary \ref{cont1} concern the maximal inequalities for a martingale difference summand.
Combining Theorem \ref{concent1} and Corollary \ref{cont1}, we have the following tail probability bounds.
\begin{theorem}[Concentration for the nonlinear moments model] \label{concent2}
Under the same conditions as in Theorem \ref{concent1} and Corollary \ref{concent1}, by letting {$e_n = n^{-1/2}(s \log P_n)^{1/\gamma}\max\limits_{1\leq j\leq q}\Psi_{j,\nu,0}$, $\gamma=2/(1+2\nu)$}, we have the following result:
\begin{equation*}
\max_{1\leq j\leq q} \sup_{h \in \mathcal{H}_j  }\big|\E{_n}h_{j}(D_t,v_{j,t}) - \E h_{j}(D_t,v_{j,t})\big|\lesssim_\P \delta\sqrt{(s\log P_n)/n} + e_n.
\end{equation*}
\end{theorem}

Similarly to what was shown in Theorem \ref{theorem.cons} and \ref{linear}, the consistency under nonlinear moments follows by replacing the concentration results in Lemma \ref{cont} with those in Theorem \ref{concent2}. 

\subsection{Linearization and Simultaneous Inference}
In this subsection, we extend the discussion from Section \ref{check} to cover the case of nonlinear moments. Specifically, we will analyze the linearization error for the general form of moments and restate the required conditions for applying Gaussian approximation results, facilitating simultaneous inference.

Recall the linearization of the debiased estimator, given by:
$$\check{\theta}_1  -\theta_1^{0} =\hat{\theta}_{1} -\theta_{1}^0 - \hat B\hat{A}\hat{g}({\hat{\theta}})=-BA\hat{{g}}({{\theta}^0})+ r_n.$$
When the moments do not take a simple linear form, i.e., the Jacobian matrix of the moment functions remains a function of $\theta$, the remainder term is expressed as $r_n=r_{n,1}+r_{n,2}+r_{n,3}$, where 
$$
r_{n,1}=(\mathbf I - \hat B\hat A\hat G_1)(\hat\theta_1-\theta_1^0),\, r_{n,2}=(BA - \hat B\hat A)\hat g(\theta^0),\, r_{n,3}=\hat B\hat A(\hat G_1-\tilde G_1)(\hat\theta_1-\theta_1^0).
$$
In the third term, $r_{n,3}$, the notations $\hat{G}_1\defeq\partial_{\theta_1^\top}\hat g(\theta_1,\hat\theta_2)|_{\theta_1=\hat\theta_1}$ and $\tilde{G}_1\defeq\partial_{\theta_1^\top}\hat g(\theta_1,\tilde\theta_2)|_{\theta_1=\tilde\theta_1}$ represent the partial derivative of $\hat{g}(\theta_1,\theta_2)$ with respect to $\theta_1$, evaluated at $\hat\theta=(\hat\theta_1,\hat\theta_2)$ and $\tilde{\theta}=(\tilde\theta_1,\tilde\theta_2)$, respectively. Here, $\tilde\theta$ is the corresponding point lying in the line segment between $\hat{\theta}=(\hat\theta_1,\hat\theta_2)$ and $\theta^{0}=(\theta_1^0,\theta_2^0)$. It follows that
$$|r_{n,3}|_{\infty}\leq |\hat B|_{\infty}|\hat{A}(\hat{G}_1- \tilde{G}_{1})|_{\max}|\hat{\theta}_1-\theta_1^0|_1.$$

Next, we outline how to analyze the rate of convergence for the estimator of $G$ in the nonlinear case. Observe that
\begin{eqnarray*}
    |\hat{G}-\tilde{G}|_1 &=&\big|{\E}_n\partial_{\theta^\top}g(D_t,\hat\theta) - {\E}_n\partial_{\theta^\top}g(D_t,\tilde\theta)\big|_1\\
    &\leq&\big|{\E}_n\partial_{\theta^\top}g(D_t,\hat\theta) - \E\partial_{\theta^\top}g(D_t,\hat\theta) - {\E}_n\partial_{\theta^\top}g(D_t,\tilde\theta) + \E\partial_{\theta^\top}g(D_t,\tilde\theta)\big|_1 \\
    &&+ \,\big|\E\partial_{\theta^\top}g(D_t,\hat\theta) - \E\partial_{\theta^\top}g(D_t,\tilde\theta)\big|_1.
\end{eqnarray*}
Provided that $|\hat\theta - \tilde\theta|_1=\smallO_\P(1)$, the second term can be addressed using the Lipschitz condition of the score function. To handle the first term, we need additional assumption on the modulus of continuity of the function $\hat G$ with respect to $\theta$. Verifying this condition requires a uniform argument similar to the derivation of the concentration inequality shown in Section \ref{nonlinear.con}.
\begin{assumption}\label{mc}
Let $\mathcal M_u\defeq \{\theta:|\theta-\theta^0|_2\leq u,u>0\}$. Assume that 
$$\sup_{\theta\in\mathcal M_u}\big|\hat G(\theta) - \hat G(\theta^0) - \E\{\hat G(\theta) - \hat G(\theta^0)\}|_1\lesssim_\P\phi_n(u),$$
where $\phi_n(u)$ is an increasing function of $u$ and satisfies $\phi_n(u)\to0$ as $u\to0$.
\end{assumption}

It should be noted that the proof of Lemma \ref{ln_ups} presented in Section \ref{rn} is not applicable to the case of nonlinear moments. Let $g_{jm,t}$ and $\hat g_{jm,t}$ denote the $m$-th element of the vectors $g_j(D_{j,t},\theta^0)$ and $g_j(D_{j,t},\hat\theta)$, respectively, where $j=1,\ldots,p$ and $m=1,\ldots,q_j$. Similarly, we define $g_{il,t}$ and $\hat g_{il,t}$, where $i=1,\ldots,p$ and $l=1,\ldots,q_i$. As an extension to the nonlinear case, we bound the concentration of the covariances of moments as follows:
  \begin{eqnarray*}
  |\hat{\Omega}- {\Omega}|_{\max}&=&\max_{i,j,l,m}|{\E}_n (\hat{g}_{il,t}\hat{g}_{jm,t}) - \E (g_{il,t} g_{jm,t})|\\
			&\leq&\max_{i,j,l,m}|{\E}_n\{(\hat{g}_{il,t}-g_{il,t})(\hat{g}_{jm,t}-g_{jm,t})\}- {\E}\{(\hat{g}_{il,t}-g_{il,t})(\hat{g}_{jm,t}-g_{jm,t})\}|\\
            && + \,2\max_{i,j,l,m}|{\E}_n\{g_{il,t} (\hat{g}_{jm,t}-g_{jm,t})\}-{\E}\{g_{il,t} (\hat{g}_{jm,t}-g_{jm,t})\}|\\ 
            &&+\, \max_{i,j,l,m}|{\E}_n(g_{il,t} g_{jm,t}) - \E (g_{il,t}g_{jm,t}) |\\
            &&+\,2\max_{i,j,l,m}|{\E}\{g_{il,t} (\hat{g}_{jm,t}-g_{jm,t})\}| + \max_{i,j,l,m}|{\E}\{(\hat{g}_{il,t}-g_{il,t})(\hat{g}_{jm,t}-g_{jm,t})\}|.
   \end{eqnarray*}
   Specifically, the first two terms on the right-hand side can be addressed by uniform concentration. The third term represents the concentration evaluated at the true parameter point, while the last two terms are expected to be small due to the continuity of the expected score functions. 



As in Section \ref{inference}, we focus on testing the hypothesis $H_0:\theta_{1,k}^0=0,\forall k\in \mathcal S$, where $\mathcal S\subseteq \{1,\ldots,K^{(1)}\}$. To evoke the Gaussian approximation results required for conducting simultaneous inference, a more general set of conditions is necessary. 
Recall the definition of $\mG_t=(\mG_{k,t})_{k\in\mathcal S}$, where $\mG_{k,t}=-\zeta_kg(D_t,\theta^0)$, and $\zeta_k$ is the $k$-th row of the matrix $(AG_1)^{-1}A$. Define the aggregated dependence adjusted norm as:
$$\|\mG_\cdot\|_{r,\varsigma}\defeq\sup_{s\geq 0}(s+1)^\varsigma\sum_{t=s}^\infty\||\mG_t-\mG^\ast_t|_\infty\|_r,$$
where $r\geq1$, $\varsigma>0$. Moreover, define the following quantities:
\begin{align*}
&\Phi^\mG_{r,\varsigma}\defeq \max_{j\in \mathcal S} \|\mG_{j,\cdot}\|_{r,\varsigma}, \,\, \Gamma^\mG_{r, \varsigma} \defeq  \bigg(\sum_{j\in \mathcal S} \|\mG_{j,\cdot}\|^{r}_{r,\varsigma}\bigg)^{1/r},\,\,\Theta^\zeta_{r,\varsigma} \defeq \Gamma^\mG_{r,\varsigma}\wedge \big\{\|\mG_{\cdot}\|_{r,\varsigma}(\log|\mathcal S|)^{3/2}\big\}.
\end{align*}
Let $L_1^\mG = \{\Phi^\mG_{2,\varsigma}\Phi^\mG_{2,0} (\log|\mathcal S|)^2\}^{1/\varsigma}$, $W_1^\mG = \{(\Phi^\mG_{3,0})^6+ (\Phi^\mG_{4,0})^4\}\{\log(|\mathcal S|n)\}^7$, $W_2^\mG = (\Phi^\mG_{2,\varsigma})^2\{\log(|\mathcal S|n)\}^4$, $W_3^\mG = [n^{-\varsigma} \{\log (|\mathcal S|n)\}^{3/2} \Theta^\mG_{r,\varsigma}]^{1/(1/2-\varsigma-1/r)}$, $N_1^\mG =(n/\log|\mathcal S|)^{r/2}/(\Theta^\mG_{r, \varsigma})^{r}$, $N_2^\mG=n(\log|\mathcal S|)^{-2}(\Phi^\mG_{2,\varsigma})^{-2}$, $N_3^\mG = \{n^{1/2}(\log|\mathcal S|)^{-1/2}\Theta^{\mG}_{r, \varsigma}\}^{1/(1/2-\varsigma)}$.
The following assumptions generalize the assumption \hyperref[A_Srate]{(A7)} in the main text from the case of linear moments to nonlinear moments under weak and strong dependency:
\begin{itemize}
	\item[(A7')]\label{A_Srate2}
	(i) (weak dependency case) Given $\Theta^\mG_{r,\varsigma} < \infty$ with $r \geq 2$ and $\varsigma > 1/2 - 1/r$, then $\Theta^\mG_{r, \varsigma} n^{1/r-1/2}\{\log (|\mathcal S|n)\}^{3/2} \to 0$ and $L_1^\mG\max(W_1^\mG, W_2^\mG) = \smallO(1) \min (N_1^\mG,N_2^\mG)$.\\
	(ii) (strong dependency case) Given $0<\varsigma< 1/2 -1/r$, then $\Theta^\mG_{r,\varsigma}(\log |\mathcal S|)^{1/2} = \smallO(n^{\varsigma})$ and $L_1^\mG \max(W_1^\mG,W_2^\mG,W_3^\mG) = \smallO(1)\min(N_2^\mG,N_3^\mG)$.
\end{itemize}

Similar results to those outlined in Section \ref{inference} would follow by essentially replacing the assumption \hyperref[A_Srate]{(A7)} with \hyperref[A_Srate2]{(A7')} and imposing a more general set of conditions on the rate of the block size (as demonstrated in the proof of Corollary 5.8 in \citet{lasso2018}), provided an approximate linearization of the debiased estimator.

		\renewcommand{\thesubsection}{C.\arabic{subsection}}
		\setcounter{equation}{0}
		\renewcommand{\theequation}{C.\arabic{equation}}
		\setcounter{theorem}{0}
		\renewcommand{\thetheorem}{C.\arabic{theorem}}
		\setcounter{lemma}{0}
		\renewcommand{\thelemma}{C.\arabic{lemma}}
		\setcounter{figure}{0}
		\renewcommand{\thefigure}{C.\arabic{figure}}
		\setcounter{table}{0}
		\renewcommand{\thetable}{C.\arabic{table}}
		\setcounter{remark}{0}
		\renewcommand{\theremark}{C.\arabic{remark}}
		\setcounter{corollary}{0}
		\renewcommand{\thecorollary}{C.\arabic{corollary}}
		\setcounter{example}{0}
		\renewcommand{\theexample}{C.\arabic{example}}
		\setcounter{assumption}{0}
		\renewcommand{\theassumption}{C.\arabic{assumption}}
		
		\section{Connection to Semiparametric Efficiency}\label{efficiency}

In this subsection, we demonstrate the connection between our estimator and a semiparametric efficient estimator.
Semiparametric efficiency is extensively discussed in Chapter 25 of \cite{van2000asymptotic}; see also, for example,
\cite{newey1990semiparametric} and \cite{newey1994asymptotic} for practical guidance. Concerning the semiparametric efficiency bound for time series models, we refer to \cite{bickel2001inference} as an example. \cite{jankova2018semiparametric} show the semiparametric efficiency bounds for high-dimensional models.

Within the context of this section, we assume 
the vector $\theta_1$ containing the parameters of interest is of low dimension (LD) $K^{(1)}\times 1$ ($K^{(1)}$ is fixed), and $\theta_2$ including the nuisance parameters is of high dimension (HD) $K^{(2)}\times 1$ $(K^{(2)}$ is diverging).
Let $\Theta$ be a compact set in $\mathbb{R}^K$, and define $\Theta_s \defeq \{\theta\in \Theta: |\theta|_0 \leq s, |\theta|_2 \leq c\}$, for a fixed positive constant $c$. The $q$-dimensional vector-valued score function $g(D_t,\theta)$ 
satisfies $\E g(D_t,\theta^0) =0$ and $\sup_{\theta\in \Theta_s} \E[g(D_t,\theta)^{\top}g(D_t,\theta)] < \infty$. Moreover, we assume it is twice continuously differentiable with respect to $\theta$. Recall the definitions $\Omega = \E[g(D_t,\theta^0)g(D_t,\theta^0)^\top]$, $G_1=\partial_{\theta_1^\top}g(\theta_1,\theta_2^0)|_{\theta_1=\theta_1^0}$ and $G_2=\partial_{\theta_2^\top}g(\theta_1^0,\theta_2)|_{\theta_2=\theta_2^0}$.
More generally, we define $\Omega(\theta)\defeq\E[g(D_t,\theta)g(D_t,\theta)^\top]$, $G_1(\theta)\defeq\partial_{\theta_1^\top}g(\theta_1,\theta_2)$ and $G_2(\theta)\defeq\partial_{\theta_2^\top}g(\theta_1,\theta_2)$. 

In Section \ref{pro} we discuss the link of our estimator to the decorrelated score function, which is named by \citet{ning2017general} as a general framework for penalized $M$-estimators. Section \ref{se.est} concerns the formal theorems on the efficiency and the asymptotic variance of our proposed estimator. We look at the case that $\{D_t\}_{t=1}^n$ is i.i.d. and follows the cumulative distribution function $\P_{\theta^0}(\cdot)$ and the probability density function $f_{\theta^0}(\cdot)$, characterized by $\theta^0$ respectively.

\subsection{Link to the Decorrelated Score Function}\label{pro}
For a vector $\alpha\in\R^K$, we denote $a_S$ as a subvector of $\alpha$ indexed by the subset $S\subseteq \{1, \cdots, K\}$, namely $\alpha_S= (\alpha_j)_{j\in S}\in\R^{|S|}$. In addition, we let $\alpha(S)=(\alpha(S)_{j})_{j=1}^K\in\R^K$, where $\alpha(S)_{j}=\alpha_j$ if $j\in S$, $\alpha(S)_{j}=0$ if $j\notin S$.

\begin{assumption}\label{den}
For $a_1 \in \R^{K^{(1)}}, a_2 \in \R^{K^{(2)}}$, $\|a_1^{\top}\{\partial_{\theta_1} \log f_{\theta^0}(D_t) -  G_1^{\top}\Omega^{-1} g(D_t,\theta^0)\}\|_2 \to 0$, $\|a_2^{\top}\{\partial_{\theta_2} \log f_{\theta^0}(D_t) - G_2^{\top}\Omega^{-1} g(D_t,\theta^0)\}\|_2\to 0$, as $ q\to \infty$. 
Moreover, there exists a subset $S\subseteq \{1,\ldots,K^{(2)}\}$ with cardinality $|S|\leq s$, such that $\|a_{2,S}^{\top}\partial_{\theta_{2,S}} \log f_{\theta^0}(D_t) -\alpha_2^{\top}G_2^{\top}\Omega^{-1} g(D_t,\theta^0)\|_2\to 0$, as $q\to\infty$, where $a_{2,S},\theta_{2,S}$ are the subvectors of $a_2,\theta_2$ indexed by $S$ respectively.
\end{assumption}

Intuitively, we want to associate the score $G_1(\theta)^\top \Omega^{-1}(\theta) \E{_{n}}g(D_t,\theta)$ for the parameters of interest $\theta_1$ with $G_2(\theta)^\top \Omega^{-1}(\theta) \E{_{n}}g(D_t,\theta)$ for the nuisance parameters $\theta_2$.
To explain the intuition of the projection, 
we define the Hilbert space spanned by the two score functions as follows:
\begin{align*}
\mathcal{T}_q =\{&\ell=a_1^{\top} G_{1}(\theta)^\top\Omega^{-1}(\theta)g(D_t,\theta) - a_2^{\top} G_2(\theta)^{\top}\Omega^{-1}(\theta) g(D_t,\theta):a_1 \in \R^{K^{(1)}}, a_2  \in \R^{K^{(2)}}, \\
&\theta\in \Theta_s,\|\ell\|_2 < \infty \}.
\end{align*}
Note that the space depends on $q$ as $g(D_t,\theta)$ is a vector-valued function mapping to $\R^q$. The closure of $\mathcal T_q$ is defined as
$$\mathcal{T} = \{\ell: \|\ell - \ell_q\|_2 \xrightarrow{q\to\infty}0, \ell_q \in \mathcal{T}_q , \|\ell\|_2 < \infty \}.$$
Define the Hilbert space spanned by the two score functions with respect to $S$ as follows:
\begin{align*}
\mathcal{T}_q(S) = \{&\ell=a_1^{\top} G_{1}(\theta)^\top\Omega^{-1}(\theta)g(D_t,\theta) - a_2(S)^{\top} G_2(\theta)^{\top}\Omega^{-1}(\theta) g(D_t,\theta):a_1 \in \R^{K^{(1)}}, a_2  \in \R^{K^{(2)}},\\ &\theta\in \Theta_s, |a_2|_0\leq s,  \|\ell\|_2 < \infty \},
\end{align*}
with the closure
$$\mathcal{T}(S) = \{\ell: \|\ell - \ell_q\|_2 \xrightarrow{q\to\infty}0, \ell_q \in \mathcal{T}_q(S) , \|\ell\|_2 < \infty \}.$$
We also consider the space spanned by the nuisance score function:
\begin{equation*}
\mathcal{T}_{q}^N = \{\ell =a_2^{\top} G_2(\theta)^\top \Omega^{-1}(\theta) g(D_t,\theta): a_2\in\R^{K^{(2)}}, \theta \in \Theta_s, \|\ell\|_2 < \infty \}.
\end{equation*}
The corresponding closure is defined as
$$\mathcal{T}^N  = \{\ell: \|\ell - \ell_q\|_2 \xrightarrow{q\to\infty}0, \ell_q \in \mathcal{T}_{q}^N , \|\ell\|_2 < \infty \},$$
and the orthogonal complement of $\mathcal{T}^N$ is given by
\begin{equation*}
\mathcal{U}^N = \{g \in \mathcal{T} :\langle g, u\rangle = 0, u \in \mathcal{T}^N \},
\end{equation*}
where $\langle g, s\rangle = \E(g^{\top} s)$ denotes the inner product. Similarly to $\mathcal T(S)$, we can define $\mathcal{T}^N(S)$ for the nuisance score function with respect to the subset $S$. In particular, $\mathcal{T}^N(S)$ is a low-dimensional subspace (indexed by the subset $S$) of the high-dimensional space $\mathcal{T}^N$, given the cardinality $|S|$ is small compared to $K^{(2)}$ ($|S|\ll K^{(2)}$).

Note that both $\mathcal{T}_N$ and $\mathcal{U}_N$ are closed space. Thus, the projection is well defined and an efficient score function can be constructed involving a matrix given by $\varPi(\theta)  = G_1(\theta)^\top \Omega^{-1}(\theta) G_2(\theta) \big(G_2(\theta)^{\top}\Omega^{-1}(\theta)G_2(\theta)\big)^{-1}$. It can be shown that our debiased estimator proposed in Section \ref{est} is induced by a decorrelated score function for $\theta_1$ which is orthogonal to $\mathcal{T}^N(S)$. The specific form of the decorrelated score function is given by
\begin{eqnarray*}
\psi_1(D_t,\theta) = \psi_1(D_t,\theta_1,\theta_2) &=&{G}_1(\theta)^\top{\Omega}^{-1}(\theta)\big\{\mathbf I_q-{G}_2(\theta)P\big({\Omega(\theta)},{G}_2(\theta)\big)\big\}g(D_t,\theta)\\
&=& {G}_1(\theta)^\top{\Omega}^{-1}(\theta){g}(D_t,\theta) -  \varPi(\theta) {G}_2(\theta)^\top{\Omega}^{-1}(\theta){g}(D_t,\theta),
\end{eqnarray*}
where $P\big({\Omega(\theta)},{G}_2(\theta)\big)=\big(G_2(\theta)^{\top}\Omega^{-1}(\theta)G_2(\theta)\big)^{-1}G_2(\theta)^{\top}\Omega^{-1}(\theta)$. Let $\hat \psi_{1}(\theta)=\hat \psi_1(\theta_1,\theta_2)$ be the empirical analogue of $\E\psi_1(D_t,\theta_1,\theta_2)$.

One can estimate ${\theta}_1$ by solving $\hat \psi_{1}(\theta_1, \hat{\theta}_2) = 0$ with a preliminary estimator $\hat\theta_2$. Furthermore, we can also consider a one-step estimator. We define the following quantities to simplify the notations:
\begin{align*}
&F_{11}(\theta) = G_1(\theta)^\top\Omega^{-1} (\theta) G_1(\theta), \,F_{22}(\theta) = G_2(\theta)^\top \Omega^{-1}(\theta) G_2(\theta),\\
&F_{12}(\theta) = G_1(\theta)^\top \Omega^{-1}(\theta) G_2(\theta),\,F_{21}(\theta) = G_2(\theta)^\top \Omega^{-1}(\theta) G_1(\theta),\\
&F_{1|2}(\theta) = F_{11}(\theta)- F_{12}(\theta)F^{-1}_{22}(\theta)F_{21}(\theta).
\end{align*}
We observe that the estimator in the form of \eqref{est.eq} is same as the one-step estimator related to the decorrelated score function, namely the solution to
\begin{equation*}
\hat \psi_{1}(\hat\theta) + \hat{F}_{1|2}(\hat{\theta})(\theta_1- \hat{\theta}_1) = 0.
\end{equation*}
That is
\begin{equation*}
\check{\theta}^{\operatorname{OS}}_1 = \hat{\theta}_1 - \hat{F}_{1|2}^{-1}(\hat{\theta})\hat \psi_{1}(\hat{\theta}).
\end{equation*}
In particular, the estimator of $\varPi(\theta)=F_{12}(\theta)F_{22}^{-1}(\theta)$, denoted by $\hat{\varPi}(\theta)$, can be attained by solving
\begin{equation} 
\min_{A\in\R^{K^{(1)}\times K^{(2)}}} \sum_{i=1}^{K^{(1)}}\sum_{j=1}^{K^{(2)}}|A_{ij}|:\quad |\hat{F}_{12}(\theta) - A\hat{F}_{22}(\theta)|_{\max}\leq \lambda_n.\notag
\end{equation}
When $\theta_1$ is of fixed dimension, we can obtain $\hat{F}_{1|2}^{-1}(\hat{\theta})$ from $\hat{F}_{1|2}(\hat{\theta}) =  \hat{F}_{11}(\hat{\theta}) - \hat{\varPi}(\hat\theta)\hat{F}_{21}(\hat{\theta})$ directly. The rate of $|\hat{\varPi}(\hat\theta) -\varPi(\theta^0) |_{\max}$ is discussed in following remark and the rest of the rate analysis remains unchanged as we have shown in Section \ref{linear.rate}.
\begin{remark}[The rate of $|\hat{\varPi}(\hat\theta) -\varPi(\theta^0) |_{\max}$]\label{simplerate}



We observe that
\begin{eqnarray*}
|\hat{\varPi}(\hat\theta) -\varPi(\theta^0)  |_{\max}&=& |(\hat{\varPi}(\hat\theta) -\varPi(\theta^0))F_{22}(\theta^0)F_{22}^{-1}(\theta^0) |_{\max}\\
&\leq & |(\hat{\varPi}(\hat\theta) -\varPi(\theta^0))F_{22}(\theta^0) + F_{12}(\theta^0)- F_{12}(\theta^0)|_{\max}|F_{22}^{-1}(\theta^0)|_1\\
&= & |\hat{\varPi}(\hat\theta) F_{22}(\theta^0) - F_{12}(\theta^0)|_{\max}|F_{22}^{-1}(\theta^0)|_1\\
&\leq& |\hat{\varPi}(\hat\theta)F_{22}(\theta^0) - \hat{\varPi}(\hat\theta)\hat F_{22}(\hat\theta)|_{\max }|F_{22}^{-1}(\theta^0)|_1\\
&&+ |\hat{\varPi}(\hat\theta)\hat F_{22}(\hat\theta) - \hat F_{12}(\hat\theta)|_{\max}|F_{22}^{-1}(\theta^0)|_1 + |\hat F_{12}(\hat\theta) - F_{12}(\theta^0) |_{\max}|F_{22}^{-1}(\theta^0)|_1.
\end{eqnarray*}
Consider the case with $|F_{22}^{-1}(\theta^0)|_1 =\bigO(1)$ and let $ |\hat{F}_{22}(\hat{\theta}) - F_{22}(\theta^0)|_{1} \lesssim_\P \delta_{n,2}^{F_{22}}$, $|\hat{F}_{12} (\hat{\theta}) - {F}_{12} ({\theta}^0) |_{\max}\lesssim_\P \delta_{n}^{F_{12}}$. The inequality above can be further bounded by 
$$|\hat{\varPi}(\hat\theta) -\varPi(\theta^0)  |_{\max}\lesssim_\P |\hat{\varPi}(\hat\theta)|_{\max}\delta_{n,2}^{F_{22}}  + \lambda_n+  \delta_{n}^{F_{12}}.$$
Given $|\varPi(\theta^0)|_{\max} = |F_{12}(\theta^0)F_{22}^{-1}(\theta^0)|_{\max}=\bigO(1)$ and $\delta_{n,2}^{F_{22}}\to0$ as $n\to\infty$, it follows that
$$|\hat{\varPi}(\hat\theta) -\varPi(\theta^0)  |_{\max}\lesssim_\P \lambda_n+ \delta_{n}^{F_{12}}.$$
\end{remark}

Recall that $\{D_t\}_{t=1}^n$ follows the cumulative distribution function $\P_{\theta^0}(\cdot)=\P_{\theta_1^0, \theta_2^0}(\cdot)$. It is required to estimate the value of $\theta_1(\P_{\theta})$ of a functional $\theta_1:\{\P_\theta:\theta\in\Theta_s\}\mapsto\R^{K^{(1)}}$. We assume that $\theta_1(\cdot)$ differentiable at the true distribution $\P_{\theta_1^0, \theta_2^0}$ .
To characterize the efficiency of the estimator, we consider a neighborhood around the true value $\theta_1^0$, namely $\{b(\tau): |b(\tau)- \theta^0_1- \tau a_1|_2 = \smallO(\tau), 0 < \tau< \epsilon, a_1\in\R^{K^{(1)}}\} \subseteq\Theta_1$, 
where $\Theta_1$ is the parameter space of $\theta_1$. The derivative of $\theta_1\big(\P_{\theta_1^0+ \tau a_1, \theta_2^0+ \tau a_2(S)}\big)$ with respect to $\tau$ (evaluated at $\tau=0$) is given by
\begin{equation*}
\frac{\partial \theta_1\big(\P_{\theta_1^0+ \tau a_1, \theta_2^0+ \tau a_2(S)}\big)}{\partial \tau}\Big|_{\tau=0} = \langle \tilde\psi_1(D_t,\theta^0)^\top, a_1^{\top}\partial_{\theta_1} \log f_{\theta^0}+ a_2(S)^{\top} \partial_{\theta_2} \log f_{\theta_1^0,\theta_2}\rangle_{\P_{\theta^0}},
\end{equation*}
where $\tilde\psi_1(D_t,\theta^0)$ is orthogonal to $\mathcal{T}^{N}(S)$ and the inner product on the right hand side is defined under $\P_{\theta^0}$.

In particular, by setting $\frac{\partial \theta_1\big(\P_{\theta_1^0+\tau a_1, \theta_2^0+ \tau a_2(S)}\big)}{\partial \tau}\Big|_{\tau=0}  = a_1 = 0$,
we obtain
$$\langle\tilde\psi_1(D_t,\theta^0)^\top, a_2(S)^{\top}\partial_{\theta_2} \log f_{\theta_1^0,\theta_2}\rangle_{\P_{\theta^0}} = 0.$$
As a result, we have the influence function $\tilde\psi_1(D_t,\theta^0) = F_{1|2}^{-1}(\theta^0) \psi_{1}(D_t,\theta^0)$ belongs to $\mathcal{U}^{N}$, which is orthogonal to $\mathcal{T}^N$ and thus to $\mathcal{T}^{N}(S)$ (under Assumption \ref{reg} iii)). It is not hard to see that our decorrelated score function $\psi_1(D_t,\theta)$ satisfies this property.

\subsection{Efficiency of the Estimator}\label{se.est}
In this section, we provide the theoretical results on the efficiency of our debiased estimator $\check{\theta}_1$ and its asymptotic normality. 
\begin{assumption}
\phantomsection\label{reg}
\begin{enumerate}
	\item[i)] For any $a_1\in\R^{K^{(1)}}, a_2\in\R^{K^{(2)}}$, there exists a path $\tau>0$ such that 
	\begin{equation*}
		\int \Big[\frac{d\P^{1/2}_{\theta_1+ \tau a_1, \theta_2 + \tau a_2(S)} - d\P^{1/2}_{\theta_1, \theta_2}}{\tau}- \frac{1}{2}\{a_1^{\top}\partial_{\theta_1} \log f_\theta + a_2(S)^\top \partial_{\theta_2} \log f_\theta\}d\P_{\theta_1,\theta_2}^{1/2}\Big] ^{2}\to 0.
	\end{equation*}
	\item[ii)] $F_{11}(\theta),F_{22}(\theta), F_{21}(\theta)F^{-1}_{11}(\theta)F_{12}(\theta),F_{12}(\theta)F^{-1}_{22}(\theta)F_{21}(\theta)$ are nonsingular for any $\theta \in \Theta_s$.
	\item[iii)] There exists $S\subseteq\{1,\ldots,K^{(2)}\}$ with $|S|\leq s$, such that the projection of $a_1^{\top} \partial_{\theta_1} \log f_{\theta^0}$ onto the lower-dimensional subspace $\mathcal{T}(S)$ is the same as onto the space $\mathcal{T}$.
\end{enumerate}
\end{assumption}
\begin{theorem}\label{se.theom1}
Under Assumptions \ref{den}-\ref{reg}, with a regular estimator sequence, the influence function $\tilde{\psi}_1(D_t,\theta)$ is efficient for $\theta_1(\P_{\theta})$, which is differentiable with respect to the tangent space $\mathcal{T}$ at $\P_{\theta^0}$ .
\end{theorem}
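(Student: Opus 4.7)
The plan is to invoke the H\'ajek--Le Cam convolution theorem (e.g.\ Theorem 25.20 of \cite{van2000asymptotic}) by identifying $\tilde\psi_1(D_t,\theta^0)$ as the efficient influence function of $\theta_1(\P_\theta)$ at $\P_{\theta^0}$. Three ingredients need to be put in place: a workable tangent space at $\P_{\theta^0}$, pathwise differentiability of $\theta_1(\cdot)$, and the membership of $\tilde\psi_1$ in that tangent space.

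First I would construct the tangent space. Assumption \ref{reg} i) supplies quadratic-mean differentiability along every submodel $\tau\mapsto\P_{\theta^0+\tau(a_1,a_2(S))}$, so that the scores $a_1^\top\partial_{\theta_1}\log f_{\theta^0}+a_2(S)^\top\partial_{\theta_2}\log f_{\theta^0}$ belong to $L^2(\P_{\theta^0})$ and generate the sparse tangent set. Assumption \ref{den} then permits me, in $L^2$-norm, to replace these scores by their moment-based surrogates $a_1^\top G_1^\top\Omega^{-1}g+a_2(S)^\top G_2^\top\Omega^{-1}g$, so the $L^2$-closure of the sparse tangent set coincides with $\mathcal T(S)$. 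Assumption \ref{reg} iii) is then the key sparsity-compatibility condition: it promotes projections onto $\mathcal T(S)$ to projections onto the full closure $\mathcal T$, so that restricting to $s$-sparse submodels costs no information.

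Next, along the above submodel $\theta_1(\P_{\theta^0+\tau(a_1,a_2(S))})=\theta^0_1+\tau a_1$ is smooth in $\tau$ with derivative $a_1$, yielding pathwise differentiability. Using the $L^2$-replacement from Assumption \ref{den} and the block identity $F_{11}a_1+F_{12}a_2(S)-F_{12}F_{22}^{-1}(F_{21}a_1+F_{22}a_2(S))=F_{1|2}a_1$, a direct inner-product calculation gives
\begin{equation*}
\big\langle\tilde\psi_1(D_t,\theta^0),\;a_1^\top\partial_{\theta_1}\log f_{\theta^0}+a_2(S)^\top\partial_{\theta_2}\log f_{\theta^0}\big\rangle_{\P_{\theta^0}}=F_{1|2}^{-1}F_{1|2}a_1=a_1,
\end{equation*}
which identifies $\tilde\psi_1$ as a gradient of $\theta_1(\cdot)$; the same computation yields $\E[\tilde\psi_1\tilde\psi_1^\top]=F_{1|2}^{-1}$, finite and invertible by Assumption \ref{reg} ii). The cancellation of the $a_2(S)$-part is precisely the orthogonality of $\tilde\psi_1$ to $\mathcal T^N(S)$ built in via the Neyman-orthogonal decorrelation.

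Finally, by construction $\psi_1=G_1^\top\Omega^{-1}g-\varPi G_2^\top\Omega^{-1}g$ is the residual from projecting the $\theta_1$-score onto $\mathcal T^N$, so $\tilde\psi_1=F_{1|2}^{-1}\psi_1\in\mathcal U^N\subseteq\mathcal T$. Uniqueness of the canonical gradient inside $\mathcal T$ then identifies $\tilde\psi_1$ as the efficient influence function, and the convolution theorem delivers the efficiency conclusion for any regular estimator sequence. The main obstacle will be the $q\to\infty$ bookkeeping: one must verify that the $L^2$-closure of the sparse tangent sets really recovers $\mathcal T$, that the $\varPi$-based projection remains well-defined and stable as $K^{(2)}$ diverges, and that the canonical gradient survives these limits. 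Assumptions \ref{den} and \ref{reg} ii)--iii) are designed precisely to license these limit interchanges, but making the closure argument rigorous--particularly ensuring that the projection in the growing-$K^{(2)}$ case coincides with the one in the closure--is the technically delicate step.
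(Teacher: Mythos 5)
Your proof is correct, but it follows a genuinely different route from the paper's. The paper's proof is essentially an algebraic shortcut: it invokes the known result that the optimal GMM weighting is $\mathcal A(\theta)=G(\theta)^\top\Omega^{-1}(\theta)$ (citing the semiparametric GMM efficiency literature), forms the block Jacobian $\mathcal J(\theta)=\mathcal A(\theta)G(\theta)$, inverts it via the Schur complement, and reads off that the $\theta_1$-block of the resulting efficient influence function is exactly $F_{1|2}^{-1}(\theta)\psi_1(D_t,\theta)=\tilde\psi_1(D_t,\theta)$, finishing with the observation that this function lies in $\mathcal U^N$. You instead run the full first-principles convolution-theorem program: build the tangent set from the quadratic-mean-differentiable sparse submodels of Assumption \ref{reg} i), pass to the moment-based surrogate scores via Assumption \ref{den}, use Assumption \ref{reg} iii) to promote the projection from $\mathcal T(S)$ to $\mathcal T$, and verify the canonical-gradient identity $\langle\tilde\psi_1,\,a_1^\top\partial_{\theta_1}\log f_{\theta^0}+a_2(S)^\top\partial_{\theta_2}\log f_{\theta^0}\rangle_{\P_{\theta^0}}=a_1$ by the inner-product computation $(G_1^\top\Omega^{-1}-\varPi G_2^\top\Omega^{-1})(G_1a_1+G_2a_2)=F_{1|2}a_1$, together with $\E[\tilde\psi_1\tilde\psi_1^\top]=F_{1|2}^{-1}$. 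Both calculations check out. What each buys: the paper's argument is shorter and keeps the algebra transparent, but outsources the efficiency claim to the cited optimal-weighting theorem; yours is self-contained and makes explicit exactly where Assumptions \ref{den} and \ref{reg} ii)--iii) enter, which the paper's proof leaves implicit (indeed, much of your setup already appears in the surrounding text of Section C.1 rather than in the proof itself). The $q\to\infty$ closure issue you flag as delicate is not resolved by the paper either --- it is absorbed into Assumptions \ref{den} and \ref{reg} iii) --- so your treatment is no less rigorous than the paper's on that point.
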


	\begin{proof}
	Let $\mathcal{A}(\theta)$ be a $K\times q$ matrix and define $\mathcal{J}(\theta) \defeq \mathcal{A}(\theta)G(\theta)$. Consider the moment condition $\mathcal A(\theta)\E g(D_t,\theta)=0$. Differentiating the identity with respect to $\theta$ yields
	$$\frac{\partial\theta(\P_\theta)}{\partial\theta}\Big|_{\theta=\theta^0}=\langle[\mathcal J^{-1}(\theta^0)\mathcal A(\theta^0)g(D_t,\theta^0)]^\top,\partial_\theta\log f_{\theta^0}\rangle_{\P_{\theta^0}}.$$
	%
	%
	
	According to the proof of Theorem 1 in \citet{chen2008semiparametric}, the optimal weights that lead to the efficient score take the form $\mathcal A(\theta) = G(\theta)^\top\Omega^{-1}(\theta)$, where $G(\theta)=\begin{bmatrix}G_1(\theta) & G_2(\theta)\end{bmatrix}$. Then, $\mathcal J(\theta)$ is given by
	$$\mathcal{J\text{(\ensuremath{\theta})}}=\begin{bmatrix}G_{1}(\theta)^{\top}\Omega^{-1}(\theta)G_{1}(\theta) & G_{1}(\theta)^{\top}\Omega^{-1}(\theta)G_{2}(\theta)\\
	G_{2}(\theta)^{\top}\Omega^{-1}(\theta)G_{1}(\theta) & G_{2}(\theta)^{\top}\Omega^{-1}(\theta)G_{2}(\theta)
	\end{bmatrix}=\begin{bmatrix}F_{11}(\theta) & F_{12}(\theta)\\
	F_{21}(\theta) & F_{22}\text{(\ensuremath{\theta})}
	\end{bmatrix}.$$
	It follows that 
	$$\mathcal{J}^{-1}(\theta)   =\begin{bmatrix}F_{11}(\theta)  & F_{12}(\theta) \\
	F_{21}(\theta)  & F_{22}(\theta) 
	\end{bmatrix}^{-1}=\begin{bmatrix}F_{1|2}^{-1}(\theta)   &- F_{1|2} ^{-1}(\theta) F_{12}(\theta)  F_{22}^{-1}(\theta) \\
	-F_{2|1}^{-1}(\theta) F_{21}(\theta)  F_{11}^{-1}(\theta)  & F_{2|1}^{-1}(\theta) 
	\end{bmatrix},$$
	where $F_{1|2}(\theta) = F_{11}(\theta)- F_{12}(\theta)F^{-1}_{22}(\theta)F_{21}(\theta)$, $F_{2|1}(\theta) = F_{22}(\theta)- F_{21}(\theta)F^{-1}_{11}(\theta)F_{12}(\theta)$. Thus the efficient influence function 
	has the following form:
	\begin{eqnarray*}
		&&\mathcal J^{-1}(\theta)G(\theta)^\top\Omega^{-1}(\theta)g(D_t,\theta)\\
		&=&\begin{bmatrix}F_{1|2}^{-1}(\theta) & -F_{1|2}^{-1}(\theta)F_{12}(\theta) F_{22}^{-1}(\theta)\\
			-F_{2|1}^{-1}(\theta)F_{21}(\theta^0) F_{11}^{-1}(\theta^0) & F_{2|1}^{-1}(\theta)
		\end{bmatrix}\begin{bmatrix}G_{1}(\theta)^{\top}\Omega^{-1}(\theta)g(D_t,\theta)\\
			G_{2}(\theta)^{\top}\Omega^{-1}(\theta) g(D_t,\theta)
		\end{bmatrix}\\
		& = & \begin{bmatrix}F_{1|2} ^{-1}(\theta)G_{1}(\theta)^{\top}\Omega^{-1}(\theta)g(D_t,\theta)-F_{1|2}^{-1}(\theta)F_{12}(\theta) F_{22}^{-1}(\theta)G_{2}(\theta)^{\top}\Omega^{-1}(\theta)g(D_t,\theta)\\
			F_{2|1}^{-1}(\theta)G_{2}(\theta)^{\top}\Omega^{-1}(\theta)g(D_t,\theta)-F_{2|1}^{-1}(\theta)F_{21}(\theta)F_{11}^{-1}(\theta)G_{1}(\theta)^{\top}\Omega^{-1}(\theta)g(D_t,\theta)
		\end{bmatrix}.
	\end{eqnarray*}

	It can be seen that the efficient influence function for $\theta_1$ coincides with the one constructed by our decorrelated score, namely $\tilde\psi_1(D_t,\theta)=F_{1|2}^{-1}(\theta)\psi_1(D_t,\theta)$. In particular, $\tilde\psi_1(D_t,\theta^0)$ is orthogonal to $\mathcal{T}^N(S)$ and thus $\mathcal{T}^N$, i.e., lying within $\mathcal{U}^{N}$.
\end{proof}

\begin{assumption}\label{reg1}
Let $A(\theta)=G_1(\theta)^\top\Omega^{-1}(\theta) - \varPi(\theta)G_2(\theta)^\top\Omega^{-1}(\theta)$, and let $\hat A(\theta)$ denote the estimator of $A(\theta)$. Assume that $|\hat A(\tilde\theta_1,\theta_2^0)\partial_{\theta_1}\hat g(\tilde\theta_1,\theta_2^0) - A(\theta^0_1,\theta^0_2)G_1|_\infty=\smallO_\P(1)$, where $\tilde\theta_1$ is on the line segment connecting $\check\theta_1$ and $\theta^0_1$. Moreover, suppose the score function $\hat\psi_{1}(\theta^0)$ satisfies $\sqrt{n}\hat\psi_1(\theta^0)\stackrel{\mathcal{L}}{\rightarrow} \N\big(0, F_{1|2}(\theta^0)\big)$ and $\sqrt{n}\{\hat\psi_1(\theta_1^0,\hat\theta_2) - \hat\psi_1(\theta^0_1,\theta^0_2)\}=\smallO_\P(1)$ for a preliminary estimator $\hat\theta_2$.
\end{assumption}

\begin{theorem}\label{se.theom2}
Under Assumption \ref{reg1} and given $|F_{1|2}^{-1}(\theta^0)|_\infty=\bigO(1)$, we have 
$$\sqrt{n}(\check{\theta}_1 - {\theta_1^0}) \stackrel{\mathcal{L}}{\rightarrow} \N \big(0, F_{1|2}^{-1}(\theta^0)\big).$$
\end{theorem}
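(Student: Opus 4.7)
The plan is to exploit the one-step estimator representation already established in the paper. By equation \eqref{onestep} together with \eqref{maineq}, the debiased estimator admits the closed-form expansion
\[
\check\theta_1 - \theta_1^0 \;=\; (\hat\theta_1 - \theta_1^0) \;-\; \hat{F}_{1|2}^{-1}(\hat\theta)\,\hat\psi_1(\hat\theta_1,\hat\theta_2),
\]
so it suffices to derive a first-order expansion of $\hat\psi_1(\hat\theta_1,\hat\theta_2)$ around $\theta^0$. First I would expand in $\theta_1$ via the exact mean-value identity $\hat\psi_1(\hat\theta_1,\hat\theta_2) = \hat\psi_1(\theta_1^0,\hat\theta_2) + \hat A(\tilde\theta_1,\hat\theta_2)\,\partial_{\theta_1}\hat g(\tilde\theta_1,\hat\theta_2)\,(\hat\theta_1-\theta_1^0)$, where $\tilde\theta_1$ lies on the segment joining $\check\theta_1$ and $\theta_1^0$. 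Substituting back yields
\[
\check\theta_1 - \theta_1^0 \;=\; -\hat F_{1|2}^{-1}(\hat\theta)\,\hat\psi_1(\theta_1^0,\hat\theta_2) \;+\; \bigl[\mathbf I - \hat F_{1|2}^{-1}(\hat\theta)\,\hat A(\tilde\theta_1,\hat\theta_2)\,\partial_{\theta_1}\hat g(\tilde\theta_1,\hat\theta_2)\bigr](\hat\theta_1-\theta_1^0).
\]

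For the leading term, the Neyman-orthogonality part of Assumption \ref{reg1} gives $\sqrt{n}\{\hat\psi_1(\theta_1^0,\hat\theta_2) - \hat\psi_1(\theta^0)\}=\smallO_\P(1)$, and then the stated CLT $\sqrt{n}\hat\psi_1(\theta^0)\stackrel{\mathcal L}{\to}\N(0,F_{1|2}(\theta^0))$ applies. Combined with the fact that $\hat F_{1|2}(\hat\theta)\stackrel{\P}{\to} F_{1|2}(\theta^0)$ (which follows from the rate analysis of $\hat A\hat G_1$ in Lemma \ref{ratef} together with $|F_{1|2}^{-1}(\theta^0)|_\infty=\bigO(1)$), Slutsky delivers $-\sqrt{n}\hat F_{1|2}^{-1}(\hat\theta)\hat\psi_1(\theta_1^0,\hat\theta_2)\stackrel{\mathcal L}{\to}\N(0,F_{1|2}^{-1}(\theta^0))$. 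The variance identification uses the algebraic facts $A(\theta^0)G_1 = F_{11}-F_{12}F_{22}^{-1}F_{21}=F_{1|2}(\theta^0)$ and $A(\theta^0)\Omega A(\theta^0)^\top = F_{1|2}(\theta^0)$, so the sandwich $F_{1|2}^{-1}F_{1|2}F_{1|2}^{-1}=F_{1|2}^{-1}$ collapses to the desired limit.

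The main obstacle is showing that the second (remainder) bracket is $\smallO_\P(n^{-1/2})$. By H\"older, $|\text{remainder}|_\infty \le |\mathbf I - \hat F_{1|2}^{-1}(\hat\theta)\hat A(\tilde\theta_1,\hat\theta_2)\partial_{\theta_1}\hat g(\tilde\theta_1,\hat\theta_2)|_\infty\cdot|\hat\theta_1-\theta_1^0|_1$. The first factor can be rewritten as $\hat F_{1|2}^{-1}(\hat\theta)\bigl[F_{1|2}(\theta^0) - \hat A(\tilde\theta_1,\hat\theta_2)\partial_{\theta_1}\hat g(\tilde\theta_1,\hat\theta_2)\bigr]$, which by $|\hat F_{1|2}^{-1}|_\infty=\bigO_\P(1)$ and Assumption \ref{reg1} is $\smallO_\P(1)$; the second factor is $\bigO_\P(d_{n,1})$ from \eqref{ratetheta}. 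Under the sparsity and dependence adjusted norm conditions invoked in Remark \ref{consrate}, we have $d_{n,1}=\bigO_\P(s^2 n^{-1/2}(\log P_n)^{1/2})$, so the product is $\smallO_\P(n^{-1/2})$ as long as the $\smallO_\P(1)$ factor decays fast enough. Combining the two displays and applying Slutsky a final time completes the proof.
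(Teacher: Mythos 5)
Your route differs from the paper's in one structurally important way, and that difference is exactly where the argument breaks. The paper starts from $\hat\psi_1(\check\theta_1,\hat\theta_2)=0$ and applies the mean value theorem around $\theta_1^0$, so the Jacobian-approximation remainder is multiplied by $(\check\theta_1-\theta_1^0)$ itself; moving that term to the left-hand side, one only needs $\big|F_{1|2}(\theta^0)-\hat A(\tilde\theta_1,\theta_2^0)\partial_{\theta_1}\hat g(\tilde\theta_1,\theta_2^0)\big|_\infty=\smallO_\P(1)$ --- precisely what Assumption \ref{reg1} supplies --- to invert $\mathbf I - \smallO_\P(1)$ and conclude. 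In your one-step representation the analogous remainder is instead multiplied by the \emph{preliminary} estimator's error $(\hat\theta_1-\theta_1^0)$, whose $\ell_1$ rate is $d_{n,1}\asymp s^2 n^{-1/2}(\log P_n)^{1/2}\gg n^{-1/2}$. Assumption \ref{reg1} only gives $\smallO_\P(1)$ for your bracket, so the product is $\smallO_\P(d_{n,1})$, which is \emph{not} $\smallO_\P(n^{-1/2})$. You flag this yourself (``as long as the $\smallO_\P(1)$ factor decays fast enough''), but nothing in the stated hypotheses delivers the required rate $\smallO_\P\big(s^{-2}(\log P_n)^{-1/2}\big)$, so the step fails as written. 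In the linear-moment case it can be rescued by noting the bracket equals $\hat F_{1|2}^{-1}(\hat\theta)\{\hat A(\hat\theta)-\hat A(\tilde\theta_1,\hat\theta_2)\}\hat G_1$, which is $\bigO_\P(d_{n,1})$ under a Lipschitz condition on $\theta\mapsto\hat A(\theta)$, giving a product of order $d_{n,1}^2$; but that imposes an additional sparsity/rate restriction not present in the theorem, so it is a genuinely different (and stronger) set of hypotheses than the paper uses.

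Two smaller points. Your ``exact mean-value identity'' is not exact: $\partial_{\theta_1^\top}\hat\psi_1=\hat A\,\partial_{\theta_1^\top}\hat g+(\partial_{\theta_1^\top}\hat A)\,\hat g$, and the second piece must be separately argued negligible (the paper does so by observing it is multiplied by $\hat g(\tilde\theta_1,\cdot)=\smallO_\P(1)$); you should say this rather than drop the term silently. The treatment of the leading term --- Neyman orthogonality to replace $\hat\theta_2$ by $\theta_2^0$, the CLT for $\sqrt{n}\hat\psi_1(\theta^0)$ from Assumption \ref{reg1}, consistency of $\hat F_{1|2}(\hat\theta)$, and the sandwich $F_{1|2}^{-1}F_{1|2}F_{1|2}^{-1}=F_{1|2}^{-1}$ --- matches the paper and is correct.
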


\begin{proof}
	By the definition of $\check\theta_1$ and the mean value theorem, we have
	\begin{equation*}
	0 = \hat\psi_1(\check\theta_1,\hat\theta_2) = \frac{\partial \hat\psi_1(\tilde\theta_1,\hat\theta_2)}{\partial \theta_1^\top}(\check\theta_1 - \theta_1^0) + \hat\psi_1(\theta_1^0,\hat\theta_2),
	\end{equation*}
	where $\tilde\theta_1$ is on the line segment connecting $\check\theta_1$ and $\theta_1^0$, and $\hat\theta_2$ is a preliminary estimator of $\theta_2^0$. It follows that
	\begin{eqnarray*}
		\sqrt{n}(\check\theta_1-\theta_1^0) &=& \sqrt{n}F_{1|2}^{-1}(\theta^0)\Big\{F_{1|2}(\theta^0) - \frac{\partial \hat\psi_1(\tilde\theta_1,\hat\theta_2)}{\partial \theta_1^\top}\Big\}(\check\theta_1 - \theta_1^0)\\
		&&-\sqrt{n}F_{1|2}^{-1}(\theta^0)\{\hat\psi_1(\theta_1^0,\hat\theta_2) - \hat\psi_1(\theta_1^0,\theta_2^0)\} - \sqrt{n}F_{1|2}^{-1}(\theta^0)\hat\psi_1(\theta_1^0,\theta_2^0).
	\end{eqnarray*}
	Recall that $\hat\psi_1(\theta) = \hat A(\theta)\hat g(\theta)$ and $F_{1|2}(\theta^0)=A(\theta^0)G_1$. Then, we have
	\begin{eqnarray*}
		\check\theta_1 - \theta_1^0 &=& F_{1|2}^{-1}(\theta^0)\{A(\theta^0)G_1 - \hat A(\tilde\theta_1,\theta_2^0)\partial_{\theta_1}\hat g(\tilde\theta_1,\theta_2^0)\}(\check\theta_1 - \theta_1^0)\\
		&& - F_{1|2}^{-1}(\theta^0)\{\hat\psi_1(\theta_1^0,\hat\theta_2) - \hat\psi_1(\theta_1^0,\theta_2^0)\} - F_{1|2}^{-1}(\theta^0)\hat\psi_1(\theta^0) + \smallO_\P(1),
	\end{eqnarray*}
	where the terms involving $\partial_{\theta_1}\hat A(\tilde\theta_1,\theta_2^0)$ are asymptotically negligible, as they are multiplied by $\hat g(\tilde\theta_1,\theta_2^0)$. Consequently, the asymptotic normality results follow from the assumptions stated in the theorem. 
\end{proof}

		\renewcommand{\thesubsection}{D.\arabic{subsection}}
		\setcounter{equation}{0}
		\renewcommand{\theequation}{D.\arabic{equation}}
		\setcounter{theorem}{0}
		\renewcommand{\thetheorem}{D.\arabic{theorem}}
		\setcounter{lemma}{0}
		\renewcommand{\thelemma}{D.\arabic{lemma}}
		\setcounter{figure}{0}
		\renewcommand{\thefigure}{D.\arabic{figure}}
		\setcounter{table}{0}
		\renewcommand{\thetable}{D.\arabic{table}}
		\setcounter{remark}{0}
		\renewcommand{\theremark}{D.\arabic{remark}}
		\setcounter{corollary}{0}
		\renewcommand{\thecorollary}{D.\arabic{corollary}}
		\setcounter{assumption}{0}
		\renewcommand{\theassumption}{D.\arabic{assumption}}
		\setcounter{example}{0}
		\renewcommand{\theexample}{D.\arabic{example}}
		
		\section{Supplementary Discussions}
		\subsection{General Model Framework}\label{general}
		Here we present a general model framework to which the main theorems shown in Section \ref{theoretical} also apply. Specifically, for $t = 1,\cdots,n$ and $j=1,\ldots,p$, we consider the stochastic equations system in the form of:
		\begin{eqnarray*}
			y_{j,t} &=& x_{j,t}^{\top}{b_j} + \vps_{j,t}\\
			&=& \underbrace{x_{j,t}^\top B_{j}}_{\tilde x_{j,t}^\top} \beta^0_{j} + \vps_{j,t},
			\quad \E(z_{j,t} \vps_{j,t}) = 0,
		\end{eqnarray*}
		where $y_{j,t}$ is the scalar outcome,
		$x_{j,t}$ is a $K'_j$-dimensional vector of original covariates, and $\vps_{j,t}$ is a stochastic shock. 
		In the second line, we express the $K'_j$-dimensional vector $b_j$ by $B_j\beta_j$, where $B_j$ is an observed $K'_j\times K_j$ matrix and $\beta_j^0$ is a $K_j\times 1$ vector ($K_j\leq K'_j+1$). Denote by $\tilde x_{j,t}^\top\in\R^{K_j}$ the transformed covariates $x_{j,t}^\top B_j$ in the $j$-th equation. The error is assumed to be orthogonal to a vector $z_{j,t}$ of instrumental variables with dimension of at least $K_j$.
		
		To show that the spatial panel network model presented in Section \ref{system} (Eq. \eqref{spn}) is covered by the general model framework, we introduce the following notations. Denote by $e_j$ the $p\times1$ unit vector with the $j$-th element is equal to $1$. Define $\bm X_t =[e_j^{\top}\otimes  y_{t}^{\top}]_{j=1}^p$ ($p\times p^2$), $\widetilde{\bm B}_{p^2\times(p^2+1)}=([e_j^{\top}\otimes\mathbf I_p]_{j=1}^p, [w_j]_{j=1}^p)$, and $\widetilde{\bm\beta}^0=(\delta_1^{0\top},\ldots,\delta_p^{0\top},\rho^0)^\top$, where the notation $[A_j]_{j=1}^p$ indicates we stack $A_j$ by rows over $j=1,\ldots,p$. The model in the compact form can be expressed as:
		$$y_t = \bm X_t\bm B \bm\beta^0 
		+ \vps_{t},$$
		where $\bm B$ is $\widetilde{\bm B}$ with the ($pj^*+k^*$)-th column removed, and $\bm\beta^0$ is $\widetilde{\bm\beta}^0$ with the ($pj^*+k^*$)-th element removed. This linear model fits within the general framework, with $\bm X_{t}$ 
		containing the original covariates and $\bm X_{t}\bm B=:\widetilde{\bm X}_t$ representing the transformed covariates. 
		
		\subsubsection{Identification Conditions}
		We note that the model in \ref{general} involves a transformation of the covariates. Therefore, it is important to discuss how the identification conditions on the original covariates translate to the transformed covariates, which are relevant for sparsity-based estimation methods.
		
		To begin, we examine a simple high-dimensional linear regression model with a scalar outcome to closely analyze the necessary conditions on the design matrix. 
		Consider the following regression model with high-dimensional exogeneous covariates $x_t\in\R^p$ and a scalar outcome $y_t$:
		\begin{equation}\label{simple}
		y_{t}=x_{t}^{\top}\underbrace{(\rho w+\delta)}_{b}+\varepsilon_{t}, \quad \E(x_t \varepsilon_{t}) = 0,\quad t=1,\ldots,n,
		\end{equation}
		where $b=(b_{k})_{k=1}^{p}$ is a $p \times 1$ parameter vector given by the pre-specified vector $w=\left(w_{k}\right)_{k=1}^{p}$, times the \text{effect size} $\rho$, and an (approximately) sparse deviation $\delta=\left(\delta_{k}\right)_{k=1}^{p}$ from this vector. 
		The objective is to perform estimation and inference on parameter $\rho$ or any components of $\delta$.
		
		Furthermore, we can rewrite $\rho w+\delta = B_{p \times (p+1)} \beta_{(p+1) \times 1}^{0}$, where $B= [w, \mathbf I_p]$, and $\beta^0 = (\rho, \delta^{\top})^{\top}$. That is, the first column of $B$ is given as $w$ and the first element in the vector $\beta^{0}$ is $\rho$, and the remaining $\delta$ measures the extent of sparse deviation. 
		With these definitions, we obtain the model:
		\begin{equation}\label{simple2}
		y_{t}=x_{t}^{\top} B \beta^{0}+\varepsilon_{t}.
		\end{equation}	
		
		Estimation will employ regularized estimators of $\beta^0$. 
		Define $y=(y_t)_{t=1}^n$ and $X=(x_t^\top)_{t=1}^n$. For instance, Given $|\beta^0|_0=\smallO(n)$, a Dantzig selector estimator is defined as the solution to the following program:
		\begin{equation}\label{simple.dan}
		\min_{\beta} |\beta|_{1} \quad \text{subject to}\quad  |B^\top X^\top(y - XB\beta)|_{\infty} \leq \lambda,
		\end{equation}
		where $\lambda>0$ is the tuning parameter.
		
		Now the question is what condition we need to impose on $X$ such that a restricted isometry property (RIP) or restricted eigenvalue (RE) condition is ensured on the design matrix $XB$. Also, it may be helpful to understand the format of $B$ as well. For example, when $p=4$, $B^{\top}$ can take the form
		$\begin{pmatrix}
		1/4 & 1/4& 1/4 & 1/4\\
		1& 0 & 0&0\\
		0&1&0&0\\
		0&0&1&0\\
		0&0&0&1
		\end{pmatrix}$.
		
		\begin{remark}[{Restriction on $B$, for fixed design}]
			We notice that for the full rank matrix $X$, if there exists {a full rank matrix $A_{p\times (p-n)}$ ($\rank(A) = p-n$)}, such that $XA =0$ (i.e., the columns of $A$ form the null space of $X$), then for each $\xi \neq \xi^0$ ($\xi, \xi_0\in\R^p$), we can find a non-zero vector $\eta\in \mathbb{R}^{p-n}$ such that $\xi = A \eta+ \xi_0$, if we have $X\xi = X\xi_0$. Thus, we shall restrict the columns of $B$ such that they do not belong to the space spanned by the columns of $A$, namely there does not exist a column of $B$, $B_i$, such that $B_i  = A \eta $.
		\end{remark}
		
		
		The RE for $XB$ in the case that $X$ is deterministic is discussed in the following lemma.
		
		\begin{lemma}\label{fixm}
			Let $B_{\mathcal I}$ denote the sub-matrix of $B$ containing columns indexed by the set $\mathcal I\subseteq\{1,\ldots,p+1\}$, where the cardinality $|\mathcal I|$ is given by $s$ ($s\leq n$). Define $\mathcal{V}_{B_{\mathcal I}} \defeq \{\xi \in \mathbb{R}^p: \xi=B_{\mathcal I} \xi_{\mathcal I}, \xi_{\mathcal I} \in \mathcal{S}^{s-1}\}$, where $\mathcal{S}^{s-1}$ denotes the unit Euclidean sphere, i.e., $\xi_{\mathcal {\mathcal I}}$ is an $s\times 1$ unit vector with $|\xi_{\mathcal I}|_2=1$.
			If $B_{\mathcal I}$ is of rank $s$ for any such ${\mathcal I}$, and $cn\leq \tilde{\lambda}_{s,B}\leq \lambda_1(X^{\top} X)\leq Cn$ for some $c,C>0$, where $\tilde{\lambda}_{s,B}\defeq \underset{{\mathcal I}:|{\mathcal I}|=s}{\min} \,\underset{\xi \in \mathcal{V}_{B_{\mathcal I}}}{\min}\frac{\xi^{\top} X^{\top} X\xi}{\xi^{\top}\xi}$, then there exist constants $c',C'>0$ such that 
			$$c'\sqrt{n}\leq \sigma_s(XB_{\mathcal I})\leq \sigma_1(XB_{\mathcal I})\leq C'\sqrt{n}.$$
		\end{lemma}

		\begin{proof}
			
			
			Let $\mV:\dim(\mV)=j$ be a subspace of $\R^s$ of dimension $j$, $j=1,\ldots,s$. Due to the Min-max theorem for singular values, we have \begin{align*}
			\lambda_s(B_{\mathcal I}^{\top}X^{\top}XB_{\mathcal I}) = \sigma^2_s(XB_{\mathcal I}) &= \max\limits_{\mV:\dim(\mV) = s}\,\min\limits_{\substack{\xi_{\mathcal I} \in \mV,\\ \xi_{\mathcal I}^\top\xi_{\mathcal I}=1}}\,\xi_{\mathcal I}^{\top}B_{\mathcal I}^{\top}X^{\top}XB_{\mathcal I} \xi_{\mathcal I} \\
			&= \min\limits_{\mV:\dim(\mV) = 1}\,\max\limits_{\substack{\xi_{\mathcal I} \in \mV,\\ \xi_{\mathcal I}^\top\xi_{\mathcal I}=1}}\,\xi_{\mathcal I}^{\top}B_{\mathcal I}^{\top}X^{\top}XB_{\mathcal I} \xi_{\mathcal I}.
			\end{align*}
			
			For any fixed $\mV:\dim(\mV)=s$, we have
			\begin{eqnarray*}
				\min\limits_{\substack{\xi_{\mathcal I} \in \mV,\\ \xi_{\mathcal I}^\top\xi_{\mathcal I}=1}}\frac{\xi_{\mathcal I}^{\top} B_{\mathcal I}^{\top} X^{\top}XB_{\mathcal I} \xi_{\mathcal I}}{\xi_{\mathcal I}^{\top}\xi_{\mathcal I}}&=& \min\limits_{\substack{\xi_{\mathcal I} \in \mV,\\ \xi_{\mathcal I}^\top\xi_{\mathcal I}=1}} \frac{\xi_{\mathcal I}^{\top} B_{\mathcal I}^{\top} X^{\top}X B_{\mathcal I} \xi_{\mathcal I}}{\xi_{\mathcal I}^{\top}B_{\mathcal I}^{\top}B_{\mathcal I}\xi_{\mathcal I}}\frac{\xi_{\mathcal I}^{\top} B_{\mathcal I}^{\top} B_{\mathcal I} \xi_{\mathcal I}}{\xi_{\mathcal I}^{\top}\xi_{\mathcal I}}\\
				&\geq&\min_{\xi \in \mathcal{V}_{B_{\mathcal I}}}\frac{\xi^{\top} X^{\top}X \xi}{\xi^{\top}\xi} \lambda_s(B_{\mathcal I}^{\top}B_{\mathcal I})\\
				& \geq & \tilde{\lambda}_{s,B}\lambda_s(B_{\mathcal I}^{\top}B_{\mathcal I}),
			\end{eqnarray*}
			where the last inequality is due to the definition of $\tilde{\lambda}_{s,B}$ and the full rank property of $B_{\mathcal I}^{\top}B_{\mathcal I}$, which implies $\lambda_{\min}(B_{\mathcal I}^{\top}B_{\mathcal I}) = \lambda_{s}(B_{\mathcal I}^{\top}B_{\mathcal I})$ is positive.
			As the above inequality holds for any subspace $\mathcal{V}$ of dimension $s$, thus we have
			$\lambda_s(B_{\mathcal I}^{\top}X^{\top}XB_{\mathcal I})\geq \tilde{\lambda}_{s,B}\lambda_s(B_{\mathcal I}^{\top}B_{\mathcal I})$.
			Similarly, we have $\lambda_1(B_{\mathcal I}^{\top}X^{\top}XB_{\mathcal I})\leq \lambda_1(X^{\top}X) \lambda_1(B_{\mathcal I}^{\top}B_{\mathcal I})$.
			
			Given $\lambda_s(B_{\mathcal I}^{\top}X^{\top}XB_{\mathcal I})= \sigma_s^2(XB_{\mathcal I})$, we have proved that if $B_{\mathcal I}$ is of rank $s$ for any ${\mathcal I}$, and $cn \leq \tilde{\lambda}_{s,B}\leq \lambda_1(X^\top X)\leq Cn$, then the RE for $XB$ follows.
		\end{proof}
		
		Next, we provide another Lemma for the random design $X$ with i.i.d. sub-Gaussian entries.
		We define $\|Z\|_{\psi_1} =\inf\{u>0: \E \exp(|Z|/u) \leq 2 \}$ and  $\|Z\|_{\psi_2} =\inf\{u>0: \E \exp(|Z|^2/u^2) \leq 2 \}$ as the sub-exponential norm and sub-Gaussian norm of the random variable $Z$.
		
		\begin{lemma}\label{random}
			Let $X$ be an $n\times p$ matrix whose rows $\{X_t\}_{t=1}^n$ are independent, mean-zero, sub-Gaussian isotropic random vectors in $\R^p$. Suppose that for any subset $\mathcal I\subseteq\{1,\ldots,p+1\}$ with $|\mathcal I|\leq s$, the column-indexed sub-matrix $B_{\mathcal I}$ satisfies $c\leq \sigma_s (B_{\mathcal I})\leq \sigma_1(B_{\mathcal I})\leq C$ for some $c,C>0$. Assume further that $n \gg s \log (pe/s)$. Then, with probability approaching one, for any such $\mathcal I$, we have:
			\begin{equation*}
			c_K \sqrt{n} \leq \sigma_s(XB_{\mathcal I}) \leq \sigma_1(XB_{\mathcal I}) \leq C_K \sqrt{n},
			\end{equation*}
			where $c_K, C_K$ are positive constants depending on $K=\max_{1\leq t\leq n}\|X_t\|_{\psi_2}$.
		\end{lemma}
		
		\begin{proof}
			
			\underline{Step 1:} For $\xi \in \mathcal S^{s-1}$, where $\mathcal{S}^{s-1}$ denotes the unit Euclidean sphere, i.e. $|\xi|_2=1$, we first show that $\xi^{\top}B_{\mathcal I}^{\top} X_tX_t^{\top}B_{\mathcal I} \xi $ is concentrated around its mean $\xi^{\top}B_{\mathcal I}^{\top}B_{\mathcal I} \xi$. Let $U_{t} \defeq \xi^{\top}B_{\mathcal I}^{\top} X_tX_t^{\top}B_{\mathcal I} \xi - \xi^{\top}B_{\mathcal I}^{\top} B_{\mathcal I} \xi $. By Bernstein inequality, we have
			\begin{eqnarray*}
				\P\Big(\Big|n^{-1} \sum_{t=1}^n U_{t}\Big|\geq \vps/2\Big) &\leq& 2\exp\Big(-c \min \Big(\frac{\vps^2n^2}{\sum_{t=1}^n\|U_{t}\|_{\psi_1}^2}, \frac{\vps n}{\max\limits_{1\leq t\leq n} \|U_{t}\|_{\psi_1}}\Big)\Big).
			\end{eqnarray*}
			By utilizing the properties of sub-Gaussian and sub-exponential random variables, we have
			\begin{eqnarray*}
				\|U_{t}\|_{\psi_1} \leq  C_1 \|(\xi^{\top} B_{\mathcal I}^{\top}X_t)^{2}\|_{\psi_1} &=& C_1 \|\xi ^{\top}B_{\mathcal I}^{\top} X_t\|_{\psi_2}^2  \\
				&\leq & C_1 \Big\|\sum_{j=1}^s \xi_j B_{\mathcal I,j}^{\top}X_t\Big\|^2_{\psi_2}\\
				&\leq &C_1 \sum_{j=1}^s \xi_j^{2}\|B_{\mathcal I,j}^{\top}X_t\|^2_{\psi_2}\\
				&\leq& C_1 \max_{1\leq j\leq s} \|B_{\mathcal I,j}^{\top}X_t\|_{\psi_2}^2\leq C_2 \lambda_{\max}(B_\mathcal I^\top B_\mathcal I) =: K,
			\end{eqnarray*}
			where $B_{\mathcal I,j}, j=1,\ldots,s$ is the $j$-th column vector of $B_\mathcal I$ and the last inequality follows given $\max\limits_{1\leq j\leq s}\|B_{\mathcal I,j}^{\top} X_t\|^2_{\psi_2} \leq \max\limits_{1\leq j\leq s}|B_{\mathcal I,j}|_2^2 \max\limits_{1\leq k\leq p} \|X_{t,k}\|_{\psi_2}^2\leq C\lambda_{\max}(B_\mathcal I^{\top} B_{\mathcal I})$. It follows that
			\begin{eqnarray*}
				\P\Big(\Big|n^{-1} \sum_{t=1}^n U_{t}\Big|\geq  \vps/2\Big)
				&\leq& 2 \exp\Big(-c\min\Big(\frac{\vps^2}{K^2}, \frac{\vps}{K}n\Big)\Big).
			\end{eqnarray*}
			
			\underline{Step 2:} Let $\overline\sigma \defeq \sigma_1(B_{\mathcal I})$, and $\underline\sigma = \sigma_s(B_\mathcal I)$, which are bounded positive constants.
			
			Note that
			\begin{equation*}
			|n^{-1}B_\mathcal I^{\top}X^\top X B_\mathcal I - B_\mathcal I^{\top} B_\mathcal I|_2= \sup_{\xi \in \mathcal S^{s-1}} |{n}^{-1} \xi^{\top} B_\mathcal I^{\top}X^\top X B_\mathcal I\xi - \xi^{\top} B_\mathcal I^{\top}B_\mathcal I \xi|.
			\end{equation*}
			Moreover, for any $\xi\in\mathcal S^{s-1}$, we have
			\begin{eqnarray*}
				&&\big|n^{-1}|\xi^{\top} B_\mathcal I^{\top} X^\top|_2^2- |\xi^{\top} B_\mathcal I^{\top}|_2^2\big |\\
				&=& \big|1/\sqrt{n} |\xi^{\top} B_\mathcal I^{\top} X^\top|_2 - |\xi^{\top}B_\mathcal I^{\top}|_2\big|\big(1/\sqrt{n} |\xi^{\top} B_\mathcal I^{\top} X^\top|_2 + |\xi^{\top}B_\mathcal I^{\top}|_2\big)\\
				&\geq& |\xi^{\top} B_\mathcal I^\top|_2 \big|1/\sqrt{n} |\xi^{\top} B_\mathcal I^{\top} X^\top|_2 - |\xi^{\top}B_\mathcal I^\top|_2|\big| \geq \underline\sigma\big|1/\sqrt{n} |\xi^{\top} B_\mathcal I^{\top} X^\top|_2 - |\xi^{\top}B_\mathcal I^\top|_2|\big|.
			\end{eqnarray*}
			Therefore, we have shown that $|n^{-1}B_{\mathcal I}^{\top}X^\top XB_\mathcal I - B_\mathcal I ^{\top}B_\mathcal I|_2\leq \vps$ holds with high probability implies $\vps{\sqrt{n}}(\underline \sigma - \vps/\underline \sigma)\leq \sigma_s(XB_\mathcal I) \leq \sigma_1(XB_\mathcal I ) \leq \vps{\sqrt{n}}(\overline \sigma + \vps/\underline \sigma)$ holds with the same probability.
			
			\underline{Step 3:} By applying the Corollary 4.2.13 of \citet{vershynin2019high}, we can find a $1/4$-net $\mathcal N$ of the unit sphere $\mathcal S^{s-1}$ with cardinality $|\mathcal N|\leq 9^s$. By the discretized property of the net, we have
			\begin{eqnarray*}
				|n^{-1}B_\mathcal I^{\top}X^{\top}X B_\mathcal I - B_\mathcal I^{\top} B_\mathcal I|_2& =& \sup_{\xi \in \mathcal S^{s-1}} |{n}^{-1} \xi^{\top} B_\mathcal I^{\top}X^{\top}X B_\mathcal I\xi - \xi^{\top} B_\mathcal I^{\top}B_\mathcal I \xi|\\
				&\leq & 2 \sup_{\xi \in \mathcal{N}} |{n}^{-1} \xi^{\top} B_\mathcal I^{\top}X^{\top}X B_\mathcal I\xi - \xi^{\top} B_\mathcal I^{\top}B_\mathcal I \xi|.
			\end{eqnarray*}
			Using the union bounds, we obtain
			\begin{equation*}
			\P\Big(\sup_{\xi \in \mathcal{N}} |{n}^{-1} \xi^{\top} B_\mathcal I^{\top}X^{\top} XB_\mathcal I\xi - \xi^{\top} B_\mathcal I^{\top}B_\mathcal I \xi|\geq \vps/2\Big) \leq 2\cdot 9^s\exp\big(-c\min(\vps^2/K^2, \vps/K)n\big).
			\end{equation*}
			We have proved that the pointwise concentration in Step 1 implies that $|n^{-1}B_\mathcal I^{\top}X^\top X B_\mathcal I - B_\mathcal I^{\top} B_\mathcal I|_2 \leq \vps$ holds with high probability.
			
			\underline{Step 4:} By Step 2 and 3 we know that provided $n\min(\vps^2/K^2, \vps/K)\gg s\log 9$ we can get
			$$\vps{\sqrt{n}}(\underline \sigma - \vps/\underline \sigma)\leq \sigma_s(XB_\mathcal I) \leq \sigma_1(XB_\mathcal I ) \leq \vps{\sqrt{n}}(\overline \sigma + \vps/\underline \sigma)$$
			holds with probability $2\exp\big(-c'\min(\vps^2/K^2, \vps/K)n\big)$.
			In addition, we know that there are ${p\choose s} \leq (pe/s)^s$ possible subsets $\mathcal I$. 
			Thus, by the union bounds, we can bound the probability by
			$$1- 2 \exp\big(-c' \min(\vps^2/K^2, \vps/K)n + s \log (pe/s)\big).$$
		\end{proof}
		
		
		Finally, we extend the discuss to the GMM framework. Recall that $G=\partial_{\theta^\top}g(\theta)|_{\theta=\theta^0}$. The quantities $\sigma_{\min}(m,G)$ and $\sigma_{\max}(m,G)$ represent the $m$-sparse smallest and largest singular values of $G$, where $m\geq s$. As demonstrated in Section \ref{sec.id}, the boundedness of $\sigma_{\min}(m,G)$ and $\sigma_{\max}(m,G)$ is essential for satisfying the identification assumption \hyperref[A_id]{(A6)}. 
		The following lemma shows how these singular values of the sub-matrices of $G$ are bounded under certain conditions. 
		
		
		\begin{lemma}\label{lemma.id1}
			Suppose $G$ can be expressed as $G = \Sigma^{xz} B$, where $\Sigma^{xz}$ is a $q \times K$ matrix and $B$ is a $K\times K$ matrix. Let $B_{\mathcal I}$ denote the sub-matrix of $B$ containing columns indexed by the set $\mathcal I\subseteq\{1,\ldots,K\}$, and $\Sigma^{xz}_{\mathcal H}$ denote the sub-matrix of $\Sigma^{xz}$ containing rows indexed by the set $\mathcal H\subseteq\{1,\ldots,q\}$. Assume that there exist constants $c_1,c_2>0$ such that:
			$$\min\limits_{\mathcal I:|\mathcal I|\leq m} \lambda_{\min}(B_\mathcal I^{\top}B_\mathcal I)> c_1,$$
			and 
			$$\sigma_{\min,B}(m,\Sigma^{xz})\defeq\min\limits_{\mathcal I:|\mathcal I|\leq m} \max\limits_{\mathcal H:|\mathcal H|\leq m} \min\limits_{\xi \in \mathcal{V}_{B_\mathcal I}}\frac{\xi^{\top} {\Sigma_\mathcal H^{xz}}^\top  \Sigma_\mathcal H^{xz}\xi}{\xi^{\top}\xi}>c_2,$$
			where $\mathcal{V}_{B_\mathcal I} \defeq \{ \xi:\xi = B_\mathcal I\xi_\mathcal I,\xi_\mathcal I^\top\xi_\mathcal I=1\}$. Additionally, assume that there exist constants $C_1,C_2>0$ such that: $$\max\limits_{\mathcal H:|\mathcal H|\leq m} \lambda_{\max}({\Sigma_\mathcal H^{xz}}^\top \Sigma_\mathcal H^{xz})< C_1,\, \text{ and } \,\max\limits_{\mathcal I:|\mathcal I|\leq m} \lambda_{\max}( B_\mathcal I^{\top}B_\mathcal I)<C_2.$$
			Under these assumptions, there exist constants $c',C'>0$ such that $\sigma_{\min}(m,G)>c'$ and $\sigma_{\max}(m,G)\leq C'$.
		\end{lemma}
		
		\begin{proof}
			Similarly to the proof of Lemma \ref{fixm}, we observe that
			\begin{align*}
			\sigma^2_{\min} (\Sigma^{xz}_\mathcal H B_\mathcal I) &= \lambda_{\min} (B_\mathcal I^{\top}{\Sigma_\mathcal H^{xz}}^\top\Sigma^{xz}_\mathcal H B_\mathcal I)\geq  \min_{\xi \in \mathcal{V}_{B_\mathcal I}}\frac{\xi^{\top} {\Sigma_\mathcal H^{xz}}^\top\Sigma^{xz}_\mathcal H\xi}{\xi^{\top}\xi}\lambda_{\min}(B_\mathcal I^{\top}B_\mathcal I),\\
			\sigma^2_{\max} (\Sigma^{xz}_\mathcal H B_\mathcal I) &= \lambda_{\max} (B_\mathcal I^{\top}{\Sigma_\mathcal H^{xz}}^\top\Sigma^{xz}_\mathcal H B_\mathcal I)\leq \lambda_{\max}(B_\mathcal I^{\top}B_\mathcal I)\lambda_{\max}({\Sigma_\mathcal H^{xz}}^\top \Sigma_\mathcal H^{xz}).
			\end{align*}
			Consequently, we have
			\begin{align*}
			\min_{\mathcal I:|\mathcal I|\leq m}\max_{\mathcal H:|\mathcal H|\leq m}\sigma^2_{\min} (\Sigma^{xz}_\mathcal H B_\mathcal I) &\geq \sigma_{\min,B}(m,\Sigma^{xz})\min_{\mathcal I:|\mathcal I|\leq m}\lambda_{\min}(B_\mathcal I^{\top}B_\mathcal I),\\
			\max_{\mathcal I:|\mathcal I|\leq m}\max_{\mathcal H:|\mathcal H|\leq m}\sigma^2_{\max} (\Sigma^{xz}_\mathcal H B_\mathcal I) &\leq \max_{\mathcal I:|\mathcal I|\leq m}\lambda_{\max}(B_\mathcal I^{\top}B_\mathcal I)\max_{\mathcal H:|\mathcal H|\leq m}\lambda_{\max}({\Sigma_\mathcal H^{xz}}^\top \Sigma_\mathcal H^{xz}).
			\end{align*}
			It follows that for some constants $c',C'>0$, we have $\sigma_{\min}(m,G)>c'$ and $\sigma_{\max}(m,G) <C'$.
		\end{proof}
		
		To understand the required assumptions in this lemma, consider a simple example with $p=1$ and moment conditions $\E[z_t(y_t-\tilde x_t^\top\beta^0)]=0$, where $\beta^0$ is a $K\times1$ vector, $z_t$ contains $q$ instrumental variables, and $\tilde x_t^\top=x_t^\top B$ for a $K\times K$ transformation matrix $B$. In this case, the Jacobian matrix is given by $G=-\E(z_t\tilde x_t^\top)=-\E(z_tx_t^\top)B$. Corresponding to Lemma \ref{lemma.id1}, we have $\Sigma^{xz}$ is in the form of $-\E(z_tx_t^\top)$, which reflects to the correlation between $x_t$ and $z_t$.
		
		Following the intuition presented in Lemma \ref{fixm} and Lemma \ref{random}, in order to ensure that the $m$-sparse singular values of $G$ are bounded, the column-indexed sub-matrix $B_{\mathcal I}$ needs to be full rank for any $|\mathcal I|\leq m\leq K$. Moreover, we require the matrix $\E(z_tx_t^\top)$ to satisfy the RE with the weighting vector $\xi$ involving $B_\mathcal I$. Specifically, 
		the boundedness of the singular values of the row-indexed sub-matrix of $\E(z_tx_t^\top)$ is crucial, as it relates to the relevance of the IVs.

		\subsection{Generalization on the Dependency of the Error Terms}\label{error}
		
		
		We note the assumption \hyperref[A_error]{(A3)} can be generalized to accommodate serial correlation, unobserved heterogeneity, and factor structures.
		\begin{enumerate}
			\item[(i)] The m.d.s. assumption implies that the innovations are unpredictable given the past information. Relaxing this condition would require computing the long-run variance-covariance matrix of the score functions. Consequently, a consistent estimator of the precision matrix would be needed both to construct the debiased estimator and to compute the standard errors for inference. 
			
			Specifically, when the process $[z_{j,t}\vps_{j,t}]_{j=1}^p$ exhibits serially correlation and conditional heteroskedasticity, the matrix $\Omega$ takes the general form:
			$$\Omega=\sum_{\ell=-\infty}^{\infty} \E\big[[z_{j,t}\vps_{j,t}]_{j=1}^p([z_{j,t-\ell}\vps_{j,t-\ell}]_{j=1}^p)^\top\big]=\Gamma_0+\sum_{\ell=1}^{\infty}(\Gamma_\ell+\Gamma_\ell^{\top}),$$
			where $\Gamma_\ell\defeq \E\big[[z_{j,t}\vps_{j,t}]_{j=1}^p([z_{j,t-\ell}\vps_{j,t-\ell}]_{j=1}^p)^\top\big]$ denotes the lag-$\ell$ auto-covariance matrix. Constructing a consistent sample estimator of $\Omega$ and its inverse is more involved. First, it is necessary to justify sufficiently weak temporal dependency (such as an algebraic decay rate of the auto-covariance matrix) to ensure summability and to facilitate truncation of the long-run covariance estimation. Additionally, in the high-dimensional setting, structural assumptions such as sparsity (to control the order of elements in $\Omega$) are typically required to obtain a feasible regularized estimator of the precision matrix.
			
			In this scenario, the proofs in Appendix \ref{a9.2}, as well as Lemmas \ref{ln_ups} and \ref{ratef}, which specifically concern the rate of $|r_n|_\infty$, would need to be modified to accommodate this extension. 
			\item[(ii)] To account for some of the potential serial correlation in the errors, suppose the error term $\vps_{j,t}$ contains an unobserved component $\alpha_j$, such that $\vps_{j,t}=\alpha_j+u_{j,t}$, where the idiosyncratic error $u_{j,t}$ is assumed to be uncorrelated with $\alpha_j$ for all $j$ and $t$. It is well known that the standard estimation yields inconsistent estimators if $\alpha_j$, with $\E(\alpha_{j}|x_{j,t})\neq 0$, is ignored. In our model framework, the unobserved individual effects $\alpha_j$ can be simply treated as equation-specific intercept terms and estimated accordingly. 
			
			Moreover, in some cases, estimating $\alpha_j$ in terms of the covariates $x_{j,t}$ is of special interest, e.g. in the correlated random effects models. One can follow the method of \cite{chamberlain1982multivariate} by considering the specification:
			\begin{eqnarray*}
				&&\E(\alpha_{j}|x_{j,1},\ldots,x_{j,n},\pi_{j,0},\ldots,\pi_{j,L},\nu_j)\\
				&=& \sum^L_{\ell=0} \pi_{j,\ell}^\top x_{j,t-\ell}+\nu_{j},\quad \E(\nu_j|x_{j,\cdot})=0,\quad t=L+1,\ldots,n.
			\end{eqnarray*}
			\item[(iii)] Correlation in shocks driven by time effects is another important source of confounding in social network analysis. Suppose the error term $\vps_{j,t}$ include some known factors. If these factors are uncorrelated with the instrumental variables, the estimation steps remain the same as outlined in Section \ref{est}.
			Alternatively, we can account for the known common factors $f_t$ (of dimension $L \times 1$) by partialling them out as follows.
			
			Denote $\mathbb{Y}_{p\times n} \defeq (y_1, \cdots, y_n)$, $\bm\varepsilon_{p\times n} \defeq (\varepsilon_1, \cdots, \varepsilon_n)$, $\mathbb{F}_{L\times n}\defeq (f_1, \cdots, f_n)$. The compact form of the spatial panel network model is given by:
			\begin{equation*}
			\mathbb{Y} = \rho^0 W \mathbb Y+ \Delta^0\mathbb Y + \Gamma^0 \mathbb F+ \bm\vps,
			\end{equation*}
			where $\Gamma^0_{p\times L}=\iota_p\otimes\gamma^{0\top}$ contains the factor loadings, with $\iota_p$ as a $p\times1$ vector of ones.
			Denote the projection matrix
			\begin{equation*}
			P_{F} = \mathbf I_n- \mathbb F^{\top}(\mathbb F\mathbb F^{\top})^{-1} \mathbb F.
			\end{equation*}
			Then, to partial out $\mathbb F$, we transform the model by
			\begin{equation*}
			\mathbb{Y} P_{F} = \rho^0 W \mathbb YP_{F}+ \Delta^0\mathbb YP_{F} + \Gamma^0 \mathbb FP_{F} + \bm\vps P_{F},
			\end{equation*}
			where we have $\mathbb FP_F = 0$.
			
			In the case of unknown factors, a quasi-maximum likelihood (QML) method, followed by a bias correction step as proposed by \citet{bai2021dynamic}, can be employed to jointly capture spatial interactions and common shocks.
			While our theoretical framework does not directly address this situation, we leave it as a potential avenue for future extensions.
			
			Moreover, the presence of common shocks may give rise to a ``star'' network, which features dominant or influential individuals. In such cases, we conjecture that our model can still identify the network, as we only assume the boundedness of the maximum absolute row sum of the adjacency matrix in \hyperref[A_dgp2]{(A1)(ii)}. A divergent maximum column sum norm of the matrix would not impede the identification of the model.
			
			A recent work by \citet{higgins2023shrinkage} explored unobserved factor structures in the errors, which may represent a low rank deviation in the network structure. That brings an alternative way to address the specification error, other than the sparse deviation as we propose. 
		\end{enumerate}
		The main focus of the present work is the estimation and uniform inference on the entire spatial weight matrix. Incorporating serial correlation, unobserved heterogeneity, and factor structures in the error term is viewed as a potentially interesting future research direction. 

		\subsection{Connection with 2SLS Estimator}\label{2sls}
		To provide further insight into the debiasing step, we establish a connection between our debiased estimator and the 2SLS estimator in a low-dimensional setting. In this context, the number of unknown parameters $K$ and moment conditions $q$ are both fixed.{\linespread{1}\footnote{In the low-dimensional case with a fixed $K$, regularization on the parameters is not required in the first estimation step. However, if one still opts to implement the Dantzig selector as defined in \eqref{danzig}, a non-zero $\lambda_n$ must be set in the overidentified case ($K<q$). For the exactly identified case ($K=q$), setting $\lambda_n=0$ yields a feasible solution.}}
		
		Consider a simple linear IV regression model: $Y_t=X_t^\top\theta^0 + \varepsilon_t,\,t=1,\ldots,n$, with $g(\theta)=\E(Z_t(Y_t-X_t^\top\theta))$, where $\theta^0$ is a $K\times1$ vector, and $Z_t$ contains $q$ ($q\geq K$) instrumental variables. We consider the entire vector $\theta^0$ as parameter of interest, and debiasing is performed on the entire vector $\hat\theta$, which is a preliminary estimator subject to a bias such that: $\hat\theta=\theta^0+bias+\smallO_\P(1/\sqrt{n})$.
		
		Let $Y_{n\times1}$, $X_{n\times p}$, $Z_{n\times q}$, and $\varepsilon_{n\times1}$ stack the random samples by rows. Consequently, we have the empirical moment functions and their Jacobian matrix in the following form:
		\begin{align*}
		&\hat g(\hat\theta)=n^{-1}Z^\top(Y-X\hat\theta)=n^{-1}Z^\top\{\varepsilon + X(\theta^0-\hat\theta)\},\\
		&\hat G=\partial_\theta\hat g(\theta)|_{\theta=\theta^0}=-n^{-1}Z^\top X.
		\end{align*}
		For simplicity, we assume the error term is conditional uncorrelated and homoskedastic. In this case, we have the sample covariance matrix of the moments $\hat\Omega\propto n^{-1}Z^\top Z$.{\linespread{1}\footnote{Since $q$ is fixed, neither imposing sparsity on $\hat G$ nor employing a sparse approximation of the inverse matrix for $\hat\Omega$ is necessary. Specifically, the threshold $T_1$ and the tuning parameter $\ell_n^\Upsilon$ involved in CLIME can both be set to 0.}}
		
		By following the construction of the orthogonal moments outlined in Section 3.3 of \citet{belloni2018high}, the debiased estimator can be expressed as:
		\begin{eqnarray*}
			\check\theta &=& \hat\theta - (\hat G^\top\hat\Omega^{-1}\hat G)^{-1}\hat G^\top\hat\Omega^{-1}\hat g(\hat\theta)\\
			&=&\hat\theta + \{X^\top Z(Z^\top Z)^{-1}Z^\top X\}^{-1}X^\top Z(Z^\top Z)^{-1}Z^\top\{\varepsilon + X(\theta^0-\hat\theta)\}\\
			&=& \hat\theta + \{X^\top Z(Z^\top Z)^{-1}Z^\top X\}^{-1}X^\top Z(Z^\top Z)^{-1}Z^\top\varepsilon - bias -\smallO_\P(1/\sqrt{n})\\
			&=&\theta^0 + \{X^\top Z(Z^\top Z)^{-1}Z^\top X\}^{-1}X^\top Z(Z^\top Z)^{-1}Z^\top\varepsilon.
		\end{eqnarray*}
		Thus, we conclude that our proposed debiased estimator coincides with the 2SLS estimator in this particular model setting.

		\subsection{Supplementary Simulation: Dynamic Panel Models}\label{ab}
		Dynamic panel models incorporating lagged dependent variables, predetermined covariates, and unobserved fixed effects are widely used in economic modeling. A primary tool for this setting is the Arellano-Bond (AB) estimator \citep{arellano1991some}. By taking first differences to eliminate the individual fixed effects, AB constructs moment conditions that leverage sufficiently lagged dependent variables and covariates as instruments and applies GMM to estimate the model parameters. 
		However, this approach might be severely biased in long panels with extensive time horizon, as the used of a large number of moment conditions. To address this issue, we demonstrate an application of our proposed method by employing a debiased-regularized GMM estimator to mitigate the bias inherent in the AB estimator when an excessive number of instruments is used.
		
		We consider the following data generating process: for $i=1,\ldots,N$, $t=1,\ldots,T$,
		\begin{align*}
		Y_{i,t} &= \alpha_i + \theta_1^0 Y_{i,t-1} + \theta_2^0 D_{i,t} + \varepsilon_{i,t},\\
		D_{i,t} &=\rho D_{i,t-1} + v_{i,t},
		\end{align*}
		where $\alpha_i\stackrel{\operatorname{i.i.d.}}{\sim} \operatorname{N}(0,\sigma_\alpha^2)$. For each $i$, 
		$$\left(\begin{array}{c}
		\varepsilon_{i,t-1} \\
		v_{i,t}
		\end{array}\right) \stackrel{\operatorname{i.i.d.}}{\sim}\operatorname{N}_2\left( \left(\begin{array}{l}
		0 \\
		0
		\end{array}\right),\begin{pmatrix}
		1&0.5\\
		0.5&1
		\end{pmatrix} \right), $$
		such that $D_{i,t}$ is predetermined with respect to $\varepsilon_{i,t}$, but it is not strictly exogenous. We set $\rho=0.5$, $\theta_1^0=0.8$, $\theta_2^0=1$, and $\sigma_\alpha=1$. To start the process, we set the initial values (when $t=0$) of $Y$ and $D$ to zero for all the units and use the first $10$ periods as burn-in sample.
		
		Following the AB estimator, we use all available lags of $Y_{i,t}$ and $D_{i,t}$ to construct the moment conditions: 
		$$
		\E(Z_{i,t} \Delta{\varepsilon}_{i,t}) = 0, \quad Z_{i,t} = (Y_{i,t-2}, \ldots, Y_{i,1}, D_{i,t-1},\ldots,D_{i,1})^{\top},\quad t=3,\ldots, T,
		$$
		where $\Delta \varepsilon_{i,t}= \varepsilon_{i,t} -\varepsilon_{i,t-1}$. Note that here the number of unknown parameters, $\theta^0=(\theta_1^0,\theta_2^0)^\top$, is low-dimensional, so there is not need to implement regularization to estimate $\theta^0$ in the first step of our estimation. Instead we take the conventional two-step AB estimator as the preliminary estimator, denoted by $\hat\theta=(\hat\theta_1,\hat\theta_2)^\top$, which will then be refined through a subsequent debiasing step.
		
		We consider both $\theta_1^0$ and $\theta_2^0$ as parameters of interest, and debiasing is performed on the entire vector $\hat\theta$. To define the debiased estimator, we introduce the following notations: let $\Delta{\bm{\varepsilon}}_i\defeq(\Delta\varepsilon_{i,3},\ldots,\Delta\varepsilon_{i,T})^\top$, $\Delta{\bm{Y}}_i\defeq(\Delta Y_{i,3},\ldots,\Delta Y_{i,T})^\top$. Let $\bm Z_i$ be a $q\times (T-2)$ block diagonal matrix (where $q=T(T-2)$ is the total number of instruments), with the $(t-2)$-th block as $Z_{i,t}$, 
		for $t=3,\ldots,T$. Similarly, let $\Delta\bm X_i$ be an $(T-2)\times 2$ matrix with the $(t-2)$-th row given by $(\Delta D_{i,t},\Delta Y_{i,t-1})^\top$, for $t=3,\ldots,T$. Consequently, we have the empirical moment functions, along with their Jacobian matrix and sample covariance matrix, in the following form:
		\begin{align*}
		&\hat g(\hat\theta)=N^{-1}\sum_{i=1}^N\bm Z_i(\Delta\bm Y_i-\Delta\bm X_i\hat\theta), \quad \hat G=\partial_\theta\hat g(\theta)|_{\theta=\theta^0}=-N^{-1}\sum_{i=1}^N\bm Z_i\Delta\bm X_i,\\
		&\hat\Omega = N^{-1}\sum_{i=1}^N\bm Z_i(\Delta\bm Y_i - \Delta\bm X_i\hat\theta)(\Delta\bm Y_i-\Delta\bm X_i\hat\theta)^\top\bm Z_i^\top.
		\end{align*}
		It follows that the debiased estimator can be expressed as:
		$$\check\theta = \hat\theta - (\hat G^\top\hat\Omega^{-1}\hat G)^{-1}\hat G^\top\hat\Omega^{-1}\hat g(\hat\theta).$$
		
		For each estimator (AB and DRGMM), we report the root mean square error (RMSE), standard deviation and bias in percentages of the true parameter value, along with the length and empirical coverage of confidence intervals (CI) with a nominal confidence level of 95\%. Table \ref{simab} displays the results based on $500$ simulations for $N=200$ and $T=40$.
		
		\begin{table}[H]
			\begin{center}
				\renewcommand{\arraystretch}{0.6}
				\begin{tabular}{p{2.4cm} ccccc}
					\hline\hline
					& \multicolumn{2}{c}{\small{Results for $\theta_1^0$}} && \multicolumn{2}{c}{\small{Results for $\theta_2^0$}}\\
					\cline{2-3}\cline{5-6}
					& \footnotesize{AB} &  \footnotesize{DRGMM} && \footnotesize{AB} &  \footnotesize{DRGMM}  \\
					\hline
					\footnotesize{RMSE} &  0.0467  & 0.0295  &&   0.1068  & 0.0475 \\
					\footnotesize{Std. dev.} &  0.0272 &  0.0168 &&  0.0576 &  0.0326 \\
					\footnotesize{Bias} &  -0.0381 & -0.0243 &&   -0.0901 & -0.0347 \\
					\footnotesize{CI length} &  0.1825 & 0.0956 &&  0.3717 & 0.2038 \\
					\footnotesize{Coverage} &   0.97  & 0.90 &&  0.93 & 0.96 \\
					\hline\hline
				\end{tabular}
			\end{center}
			\caption{Simulation results for dynamic panel models. The numbers in the left panel (results for $\theta_1^0$) are divided by 0.8 for RMSE, standard deviation (std. dev.), bias, and CI length.}\label{simab}
		\end{table}
		
		Our results show that using the DRGMM estimator reduces the bias and yields shorter confidence intervals compared to the AB estimator, suggesting that the bias reduction does not come at the cost of increased dispersion for long panels. The coverage rate for the treatment coefficient, $\theta_2^0$, is more accurate under DRGMM, while the acceptable coverage observed for AB may be due to its wider CI length rather than to accuracy in the asymptotic distribution. To better control overfitting bias in debiased machine learning methods, a cross-fitting procedure based on sample-splitting over the cross-section dimension could be beneficial. In a recent work, \citet{chernozhukov2024arellano} propose a further refined estimator in their development of the Arellano-Bond LASSO estimator for dynamic linear panel models.
		
	\end{appendices}
	
\end{document}